\documentclass[twoside]{article}
\usepackage[accepted]{aistats2014}
\usepackage{amsmath, graphicx, layout, verbatim}
\usepackage{amsthm}
\usepackage{color}
\usepackage{float}
\usepackage[colorlinks=true,urlcolor=black,citecolor=black	,linkcolor=black]{hyperref}
\usepackage{amsfonts}
\usepackage{amsbsy}%bold math
\usepackage{algorithm}% http://ctan.org/pkg/algorithm
\usepackage{algpseudocode}% http://ctan.org/pkg/algorithmicx
\usepackage{amsmath}
\usepackage{graphicx}
\usepackage{wrapfig}
\usepackage{wasysym}
\usepackage[round]{natbib}

\newcommand{\nsubG}{n_G}

\renewcommand{\comment}[1]{}

% the first column of the first page.

\renewcommand{\vec}[1]{\mathbf{#1}}

\def\I{{\mathbb I}}
\def\P{{\mathbb P}}
\def\L{{\mathcal L}}

\def\x{{\vec{x}}}

\def\E{{\mathbb E}}
\def\V{{\mathbb V}}

\newtheorem{thm}{Theorem}

\newtheorem{Lemma}{Lemma}
\newtheorem{Assumption}{Assumption}
%\newtheorem*{Assumption*}{Assumption}

% If your paper is accepted, change the options for the package
% aistats2e as follows:
%
%\usepackage[accepted]{aistats2e}
%
% This option will print headings for the title of your paper and
% headings for the authors names, plus a copyright note at the end of
% the first column of the first page.

\begin{document}

% If your paper is accepted and the title of your paper is very long,
% the style will print as headings an error message. Use the following
% command to supply a shorter title of your paper so that it can be
% used as headings.
%
%\runningtitle{I use this title instead because the last one was very long}

% If your paper is accepted and the number of authors is large, the
% style will print as headings an error message. Use the following
% command to supply a shorter version of the authors names so that
% they can be used as headings (for example, use only the surnames)
%
%\runningauthor{Surname 1, Surname 2, Surname 3, ...., Surname n}

\twocolumn[

\aistatstitle{High-Dimensional Density Ratio Estimation with Extensions to Approximate Likelihood Computation}

\aistatsauthor{ Rafael Izbicki \And Ann B. Lee \And Chad M. Schafer }

\aistatsaddress{ Department of Statistics -- Carnegie Mellon University } ]

\begin{abstract}
  The ratio between two probability density functions is an important component of various
  tasks, including selection bias correction, novelty detection and classification.
Recently, several estimators of this ratio have been proposed. Most of these methods 
fail if the sample space is high-dimensional,
and hence require a dimension reduction step, the result of which can be a significant loss of information.
Here we propose a simple-to-implement, fully nonparametric density ratio estimator that expands 
the ratio in terms of the eigenfunctions of a kernel-based operator; these functions reflect the underlying 
geometry of the data (e.g., submanifold structure), often leading to better estimates without an explicit dimension reduction step.
We show how our general framework can be extended to address another important problem,
the estimation of a likelihood function in situations where that function cannot be well-approximated
by an analytical form. One is often faced with this situation when performing statistical inference
with data from the sciences, due the complexity of the data and of the processes that generated those data.
We emphasize applications where
using existing likelihood-free methods of inference
would be challenging due to the high dimensionality of the sample space, but where our 
spectral series method yields a reasonable estimate 
of the likelihood function.
We provide theoretical guarantees and illustrate the effectiveness of our proposed method with numerical experiments.
\end{abstract}

\section{INTRODUCTION}
\label{sec::intro}

There has been growing interest in the problem of estimating the ratio of two probability densities, 
$\beta(\x) \equiv f(\x) / g(\x)$, given $i.i.d.$~samples from unknown distributions $F$ and $G$.
For example, these ratios play a key role in matching training and test 
data in so-called transfer learning or domain adaptation \citep{SugiyamaSuzuki}, where the goal is to predict an outcome $y$ given test data $\x$ 
from a distribution ($G$) that is different from that of the training data ($F$).
Estimated density ratios also appear in novelty detection \citep{hido2011statistical},
conditional density estimation \citep{sugiyama2010conditional}, selection bias correction \citep{Gretton}, and classification \citep{nam2012computationally}.

Experiments have shown that it is suboptimal to estimate $\beta(\x)$ by first estimating the two
component densities and then taking their ratio \citep{SugiyamaImportance}.
Hence, several alternative approaches have been proposed that directly estimate $\beta(\x)$; e.g.,
 \emph{uLSIF}, an estimator obtained via
least-squares minimization \citep{Kanamori};  \emph{KLIEP}, which is obtained 
via Kullback-Leibler divergence minimization \citep{SugiyamaImportance}; \emph{KuLSIF}, a kernelized version of \emph{uLSIF} \citep{kanamori2012statistical};
and \emph{kernel mean matching}, which is based on minimizing the mean discrepancy between
transformations of the two samples in a Reproducing Kernel Hilbert Space (RKHS) \citep{Gretton}. 
For a review of techniques see \citet{Margolis}.

Existing methods are not effective when $\x$ is of high dimension, and hence authors
recommend a dimension reduction prior to implementation
\citep{sugiyama2011directDensityRatio}. 
As is the case with 
any data reduction, such a step can result in significant loss of information.
Here we propose a novel series estimator of $\beta(\x)$ 
designed to take advantage of the intrinsic dimensionality of $\x$, but without 
an explicit dimension reduction step. 
Thus, this work addresses a critical need by constructing
a nonparametric estimator for $\beta(\x)$ which performs well even when $\x$ is of high dimension.

Our method is fast and simple to implement. 
To approximate $\beta(\x)$, we expand $\beta$ in terms of the eigenfunctions of a kernel-based operator. These eigenfunctions are orthogonal with respect to the underlying data distribution as opposed to the Lebesgue measure on the ambient space. In fact, the eigenfunctions form a Fourier-like basis adapted to the submanifold structure with the low-order components smoother than the higher-order ones, see Figure \ref{fig::eigenSpiral}.
 As we shall see, this basis is particularly well-suited for approximating smooth functions in high dimensions.
 %; see \citet{Shi} for an analysis in the context of spectral clustering.  
Unlike 
%related 
RKHS methods \citep{Gretton} (which do not explicitly compute the eigenfunctions themselves), our approach to nonparametric density estimation allows for out-of-sample extensions and a principled way of choosing tuning parameters via well-studied model selection techniques such as cross-validation.

\comment{Because these eigenfunctions are orthogonal with respect to the data distribution rather than the Lebesgue measure, they 
take into account the underlying
geometry of the data, leading to better estimates in high dimensions. 
In fact, these eigenfunctions
behave like a Fourier basis \emph{adapted to the submanifold structure}
with the low-order components smoother than the higher-order ones, see Figure \ref{fig::eigenSpiral}.
 This basis is well-suited for approximating smooth functions in high dimensions.
 %; see \citet{Shi} for an analysis in the context of spectral clustering.  
Finally, unlike 
%related 
RKHS methods \citep{Gretton}, our technique allows for out-of-sample extensions, and a principled way of choosing tuning parameters via cross-validation and other well-studied model selection techniques.
}

\begin{figure}
  \centering
  \vspace{.2in}
\includegraphics[page=5,scale=0.15]{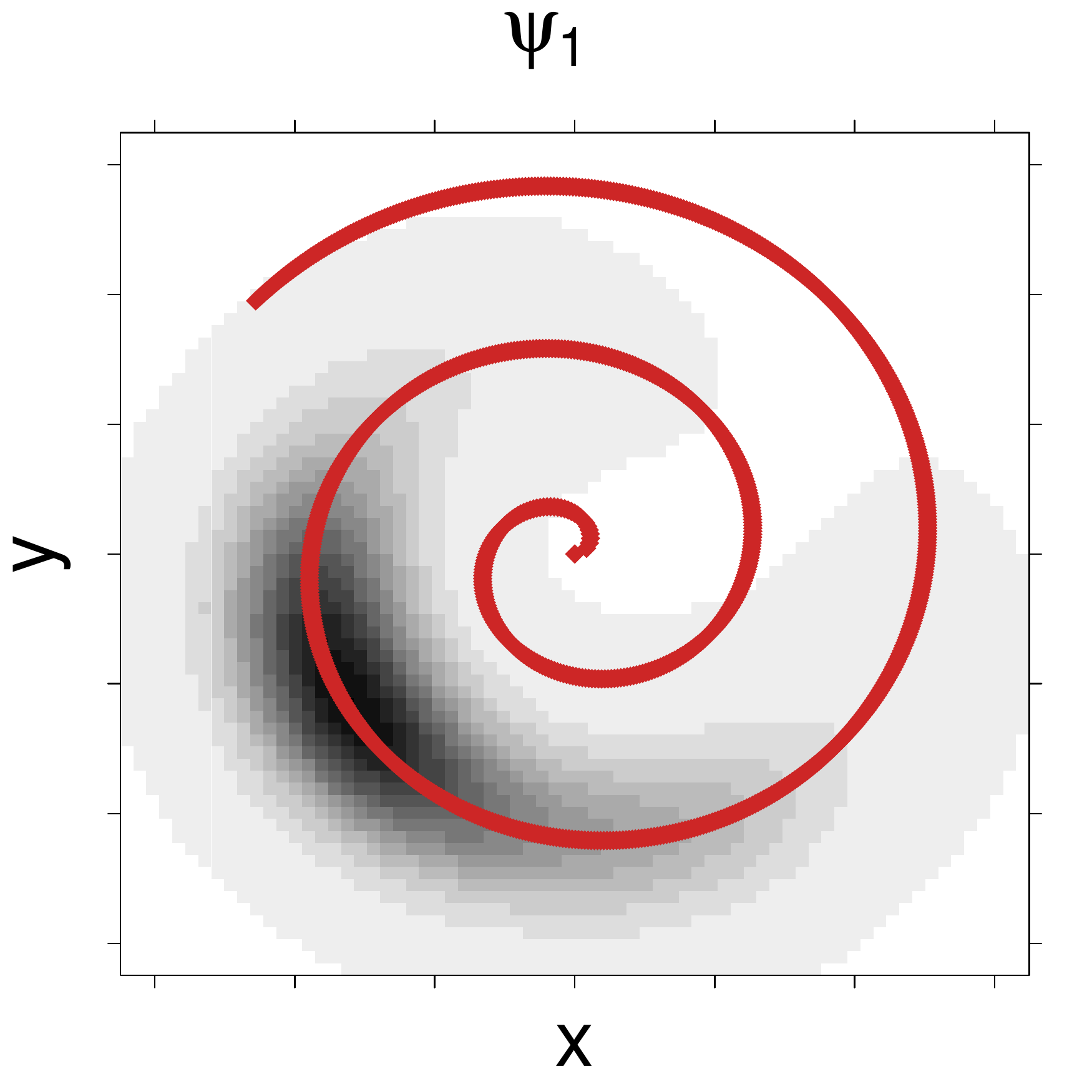} \hspace{-2.mm}
\includegraphics[page=6,scale=0.15]{empiricalEigenfunctionsAISTAT.pdf} \hspace{-2.mm}
\includegraphics[page=7,scale=0.15]{empiricalEigenfunctionsAISTAT.pdf} \hspace{-2.mm}
\vspace{.2in}
  \caption{{\small Level sets of the top three eigenfunctions of a kernel-based operator when the support of the data $(x,y)$ is close to a spiral. The eigenfunctions
  form a Fourier-like basis adapted to the geometry of the data. This basis is well-suited for approximating smooth functions in the region around the spiral.}}
  \label{fig::eigenSpiral}
\end{figure}

We extend our proposed methodology for estimating density ratios to the problem
of estimating the likelihood function of observing data $\x$ given parameters $\theta$. 
Estimation of the likelihood is necessary when the complexity of the data-generation process
prevents derivation of a sufficiently accurate analytical form for the likelihood function.
Here we exploit the fact that, in many such situations, one can
{\em simulate} data sets $\x$ under different parameters $\theta$.
This is often the case in statistical inference problems in the sciences, 
 where the relationship between parameters of interest and observable data is complex, 
but accurate simulation models are available; see, for example, genetics (\citealt{beaumont2010approximate,estoup2012estimation}) and  astronomy (\citealt{cameron2012approximate,weyant2013likelihood}).
Problems of this type have motivated recent interest in methods of {\it likelihood-free inference}, 
which includes methods of {\it Approximate Bayesian Computation (ABC)}; see \citet{marin2012approximate} 
for a review. 
%The idea of ABC is to 
%create a sample from an \emph{approximation} of the Bayesian posterior distribution $f(\theta|\x)$ by repeating the following steps
%until  a sample of desired size is obtained:
%\begin{enumerate}
% \item Generate $\theta'$ from the prior distribution $\pi(\theta)$,
% \item Generate a sample $\x'_1,\ldots,\x'_n$ from $F(\x|\theta')$. If it is similar to $\x_1,\ldots,\x_n$, keep $\theta'$.
%\end{enumerate}
\comment{The goal of ABC is to draw a 
sample from an \emph{approximation} of the Bayesian posterior distribution $f(\theta|\x)$. Unfortunately, it
requires non-trivial tuning parameters to be chosen, and,
if $\x$'s are high-dimensional,
it is unavoidable to use problem-specific lower-dimensional summary statistics of the data to implement it.
%to make the comparison, otherwise this step will inevitably be a rejection.
These statistics are not always easy to be designed, and one 
usually incurs in loss of information \citep{blum2013comparative}. 
In order to avoid these issues of ABC, we propose estimating the likelihood function directly.
\citet{diggle1984monte} and \citet{fan2013approximate} also proposed likelihood estimation as an alternative
to traditional ABC, however the emphasizes was not on high-dimensional data, where traditional estimation methods fail.}

In our implementation, we redefine the likelihood function
as $\L(\x;\theta) \equiv f(\x|\theta)/g(\x)$, where 
$g(\x)$ is a density with support larger than that of $f(\x |\theta)$.
This formulation differs from the standard 
definition of the likelihood by only a multiplicative term which is constant in $\theta$,
and hence $\L(\x;\theta)$ can still be used for likelihood-based 
inference (including maximum likelihood estimation). In particular, the shape of the posterior for $\theta$
is unaffected. 
The challenge of estimating the likelihood is now a density ratio estimation
problem.
This approach will yield significant advantages in cases where $g$ is chosen to
focus high probability on the low-dimensional subspace in which the data $\x$ lie. One natural choice is $g(\x) = \int f(\x | \theta) d\pi(\theta)$,
where $\pi$ is a well-chosen prior distribution for $\theta$.
The orthogonality of the spectral series with respect to $g$ results in an efficient
implementation of the estimator. 
Moreover, 
directly estimating the ratio $f(\x|\theta)/g(\x)$ may itself be easier than 
estimating $f(\x|\theta)$,
e.g., when the conditional distributions $f(\x|\theta)$ for different $\theta$ are similar\footnote{A trivial example: If $\x$ is independent of $\theta$,
 $f(\x|\theta)/g(\x)=1$ is a constant function, whereas $f(\x|\theta)=f(\x)$ may be a harder to estimate (nonsmooth) function.}, or 
 when they have similar support in high dimensions.  To our knowledge, this is the first work that proposes a spectral series approach to non-parametric density estimation and likelihood inference in high dimensions.

 The organization of the paper is as follows:
In Section \ref{sec::ratio} we present our density ratio estimator and apply it to a prediction problem in astronomy with covariate shift.
In Section \ref{sec-methCond}, we show how our method can be extended to 
estimating a likelihood function, and provide experiments that show its advantages over traditional methods.  
Finally, in Section \ref{sec-theory}, we provide theoretical guarantees and rates of convergence of the proposed
estimators. Full proofs as well as details on the astronomy data are provided in Supplementary Materials. 
%Final remarks are in Section \ref{sec::remarks}.

\section{SPECTRAL SERIES ESTIMATOR OF A DENSITY RATIO}
\label{sec::ratio}

In this section we will present the mathematical details behind our
spectral series estimator of a density ratio.
To begin, let $\vec{x}$ denote a $d$-dimensional random vector, assumed to lie in the subspace
$\mathcal{X}$.
We observe 
an $i.i.d.$~sample
$\vec{x}^F_1,\ldots,\vec{x}^F_{n_F}$ from an unknown distribution $F$, as well as 
an $i.i.d.$~sample
$\vec{x}^G_1,\ldots,\vec{x}^G_{\nsubG}$ from an unknown distribution $G$. 
The goal is to estimate 
$$\beta(\vec{x}) \equiv f(\x)/g(\x).$$

We assume that $F\hspace{-1.1mm} \ll \hspace{-1.1mm}G$ so that this ratio is well-defined.

Let $K_\x(\vec{z},\vec{y})$ be a bounded, symmetric, and positive definite kernel\footnote{In 
our applications, we use the Gaussian kernel $K_\x(\vec{z},\vec{y}) =
\exp(-d^2(\vec{z},\vec{y})/4\epsilon)$, where $d(\cdot,\cdot)$ 
is the Euclidean distance in $\Re^d$.}, and let $\{\psi_j\}_{j\in \mathbb{N}}$ be the eigenfunctions of the operator 
$\textbf{K}_\x: L^2(\mathcal{X},G) \longrightarrow L^2(\mathcal{X},G)$ \citep{Rosasco}:
\begin{align}
\label{eq::kpcaOperator}
\textbf{K}_\x(h)(\vec{z})=\int_\mathcal{X} K_\x(\vec{z},\vec{y})h(\vec{y})dG(\vec{y}). 
\end{align}

Our spectral series estimator relies on the fact that $\{\psi_j\}_{j\in \mathbb{N}}$ is an orthonormal basis
of $L^2(\mathcal{X},G)$, i.e., the eigenfunctions are orthonormal with respect to the data distribution $G$ rather than the Lebesgue measure:
\begin{align*}
\int_\mathcal{X} \psi_i(\vec{x})\psi_j(\vec{x})dG(\vec{x})=\I(i=j). 
\end{align*}

Hence, for $\beta(\vec{x}) \in L^2(\mathcal{X},G)$, we can write
\begin{align}
\label{eq::expBeta}
\beta(\vec{x})=\sum_{j \in \mathbb{N}} \beta_j \psi_j(\vec{x}), 
\end{align}
where $\beta_j=\int \beta(\vec{x}) \psi_j(\vec{x})dG(\vec{x})=\E_{F}[\psi_j(\vec{X})].$
%\begin{align*}
%\beta_j=\int \beta(\vec{x}) \psi_j(\vec{x})dG(\vec{x})=\E_{F}[\psi_j(\vec{X})].
%\end{align*}

Since $G$ is unknown, the $\psi_j$'s must be estimated. First, 
we compute the $J\leq n_G$ eigenvectors $\widetilde{\psi}_1,\ldots,\widetilde{\psi}_J$  (with largest eigenvalues)
of the Gram matrix 
based on the sample from $G$,
 \begin{align*}
 %\label{eq::gram}
%\mathbb{G}^{\x}=
\left[ K_\x\left(\vec{x}^G_i,\vec{x}^G_j\right) \right]_{i,j=1}^{n_G}.  
 \end{align*}
 %\vspace{-11mm}
 
%\[ \left( \begin{array}{cccc}
%K_\x(\vec{x}^G_1,\vec{x}^G_1)  & K_\x(\vec{x}^G_1,\vec{x}^G_2) & \cdots & K_\x(\vec{x}^G_1,\vec{x}^G_{n_G}) \\
%\\
%  K_\x(\vec{x}^G_2,\vec{x}^G_1) & K_\x(\vec{x}^G_2,\vec{x}^G_2) & \cdots & K_\x(\vec{x}^G_2,\vec{x}^G_{n_G}) \\
%\\
%  \vdots  & \vdots  & \ddots & \vdots  \\
%\\
%  K_\x(\vec{x}^G_{n_G},\vec{x}^G_1) & K_\x(\vec{x}^G_{n_G},\vec{x}^G_2) & \cdots & K_\x(\vec{x}^G_{n_G},\vec{x}^G_{n_G}) 
%\end{array} \right)\] 

These functions are then extended to all $\x \in \mathcal{X}$ via the Nystr\"om Extension \citep{Drineas05onthe}
\begin{align*}
 \widehat{\psi}_j(\vec{x})=\frac{\sqrt{n_G}}{\widehat{\ell}^{\x}_j}\sum_{k=1}^{n_G} 
    \widetilde{\psi}_j\left(\vec{x}^G_k\right) K_\x\left(\vec{x},\vec{x}^G_k\right),
\end{align*}
where $\widehat{\ell}^{\x}_j$ is the eigenvalue associated to the 
eigenvector $\widetilde{\psi}_j$.
%=\left(\widetilde{\psi}_j(\vec{x}^G_1),\ldots,\widetilde{\psi}_j(\vec{x}^G_{n_G})\right)$
Next we estimate the $\beta_j$'s in Eq.~(\ref{eq::expBeta}) using the sample from $F$:
\begin{align*}
%\label{eq::coefficients}
\widehat{\beta}_j = \frac{1}{n_F}\sum_{k=1}^{n_F}\widehat{\psi}_j\left(\vec{x}^F_k\right).
\end{align*}
Our spectral series estimator is finally given by
\begin{align}
\label{eq-final}
\widehat{\beta}(\x) = \left(\sum_{j=1}^{J} \widehat{\beta}_j \widehat{\psi}_j(\x)\right)_{\!+}.
\end{align}

This approach can be motivated as follows.
The basis functions $\widehat{\psi}_j$ are consistent estimators of the 
eigenfunctions $\psi_j$ of the corresponding integral operator \citep{Bengio}. 
%these functions are implicit (ATTN: WHAT IS MEANT BY THIS?) in kernel-based 
%methods [cite Scholkopf/Smola book]~\cite{scholkopf1997kernel}]. 
In our  nonparametric model, $J$ is a tuning parameter that controls 
the bias/variance tradeoff: Decreasing $J$ decreases the variance, 
but increases the bias of the estimator. We choose $J$ (and the other tuning parameters) 
in a principled way described below. 
%There is a strong connection between our estimator and 
%classical series estimators for nonparametric curve estimation \citep{efromovich}. 
%The main difference is that traditional procedures expand the target function with 
%respect to a basis \emph{fixed beforehand} (e.g., the Fourier basis), whereas 
%we \emph{estimate} basis functions that capture the geometry of the data.
%and thus better capture the geometry of the data.

\vspace{2mm}
\textbf{Model Selection.} To evaluate the performance of an estimator $\widehat{\beta}(\x)$, we use the loss function
\begin{align*}
L(\widehat{\beta},\beta) &\equiv \int \left(\widehat{\beta}(\vec{x})-\beta(\vec{x}) \right)^2dG(\vec{x}) \\
&=\int \widehat{\beta}(\vec{x})^2 dG(\vec{x})-2\!\int \widehat{\beta}(\vec{x})dF(\vec{x})+K
\end{align*}

where $K$ does not depend on $\widehat{\beta}$. We estimate this quantity (up to $K$) using
\begin{align}
\label{eq-lossBetaEst}
\widehat{L}(\widehat{\beta},\beta) =\frac{1}{\widetilde{n}_G}\sum_{k=1}^{\widetilde{n}_G}\widehat{\beta}^2(\widetilde{\vec{x}}_k^G)-\frac{2}{\widetilde{n}_F}\sum_{k=1}^{\widetilde{n}_F}\widehat{\beta}(\widetilde{\vec{x}}_k^F),
\end{align}
where $\widetilde{\x}^G_1,\ldots,\widetilde{\x}^G_{\widetilde{n}_G}$ is a validation sample from $G$, and 
$\widetilde{\x}^F_1,\ldots,\widetilde{\x}^F_{\widetilde{n}_F}$ is a validation sample from $F$.
Tuning parameters are chosen to minimize $\widehat{L}(\widehat \beta, \beta)$. 
Note that because of the orthogonality
of the $\widehat{\psi}_j$, it is not necessary 
to recompute the estimated coefficients $\widehat{\beta}_j$'s for each value of $J$ in Eq. (\ref{eq-final}), unlike most estimation
procedures, where estimated coefficients have to be recomputed for each configuration of the tuning parameters. 
In other words, 
only the tuning parameters associated with
the kernel (in our case, the kernel bandwidth $\epsilon$) affect the computation time.
% Algorithm \ref{algor::ratio} presents a summary of the method.
% 
% 
% \begin{algorithm}
% \caption{Spectral Series Density Ratio Estimator}\label{alg-est}
% \algorithmicrequire Training and validation data from both distributions, grid of tuning parameters $\epsilon$'s and  $J$'s
% \label{algor::ratio}
% \algorithmicensure Estimator $\widehat{\beta}(\vec{x})$
% \begin{algorithmic}[12]
%    \ForAll{$\epsilon$}
%       \State calculate the eigenvectors  of $\mathbb{G}^{\x}$ \Comment{Eq. \ref{eq::gram}}
%       \State estimate the eigenbasis $\widehat{\psi}_{i}$ \Comment{Eq. \ref{eq-nystrom}}
%       \State estimate the coefficients $\widehat{\beta}_{i}$ \Comment{Eq. \ref{eq::coefficients}}
%       \ForAll{$J$}
% 	\State Define the estimator $\widehat{\beta}_{\epsilon,J}(\x)$ \Comment{Eq. \ref{eq-final}}
% 	\State Calculate  estimated loss $\widehat{L}(\widehat{\beta}_{\epsilon,J},\beta)$ 
% 	\Comment{Eq. \ref{eq-lossBetaEst}} 
%       \EndFor
%    \EndFor
%     \State Define $\widehat{\beta}(\x)=\arg \min_{\widehat{\beta}_{\epsilon,J}} \widehat{L}(\widehat{\beta}_{\epsilon,J},\beta)$
%    \State \textbf{return} $\widehat{\beta}(\vec{x})$ truncated at zero
% 
% \end{algorithmic}
% \end{algorithm}

\subsection{Application: Correction to Covariate Shift in Photometric Redshift Prediction}
\label{sec-CS}
Assume we observe a sample 
%$\vec{X}^U_1,\ldots,\vec{X}^U_{n_U}$ 
of unlabeled data, as well as a sample
%$(\vec{X}^L_1,Z^L_1),\ldots,(\vec{X}^L_{n_L},Z^L_{n_L})$ 
of labeled data, where the $Z$'s represent the labels and $\vec{x}$'s are the covariates. One is often interested
in estimating the regression function $\E[Z|\vec{x}]$  under selection bias, i.e., in situations where the distributions of labeled and unlabeled samples 
($f_L(\x)$ and $f_U(\x)$, respectively) are different. 
If the estimate $\widehat{\E}[Z|\vec{x}]$
is constructed using the labeled data with the goal of 
 predicting $Z$ from $\vec{x}$ on the \emph{unlabeled} data, corrections have to be made. 
A key quantity for making this correction under the \emph{covariate shift assumption} \citep{Shimodaira2000} is 
the density ratio $f_U(\x)/f_L(\x)$, the so-called \emph{importance weights} \citep{Gretton}. 
We now compare various estimators of 
importance weights for a key problem in astronomy, namely that of
redshift estimation \citep{Sheldon}.
%This correction will
%reweight labeled data to account for the difference in their distribution as compared to the 
%unlabeled data \citep{Gretton}. 

 %To make this correction, here we make the
 %\emph{covariate shift} assumption: although $f_{L}(\vec{x})$ can be different 
%from $f_{U}(\vec{x}),$ we assume the conditional distribution
%$f(z|\vec{x})$ is the same in both populations \citep{Shimodaira2000}. 

\begin{figure}[ht]
\vspace{.0in}
\centerline{\includegraphics[width=0.38\textwidth]{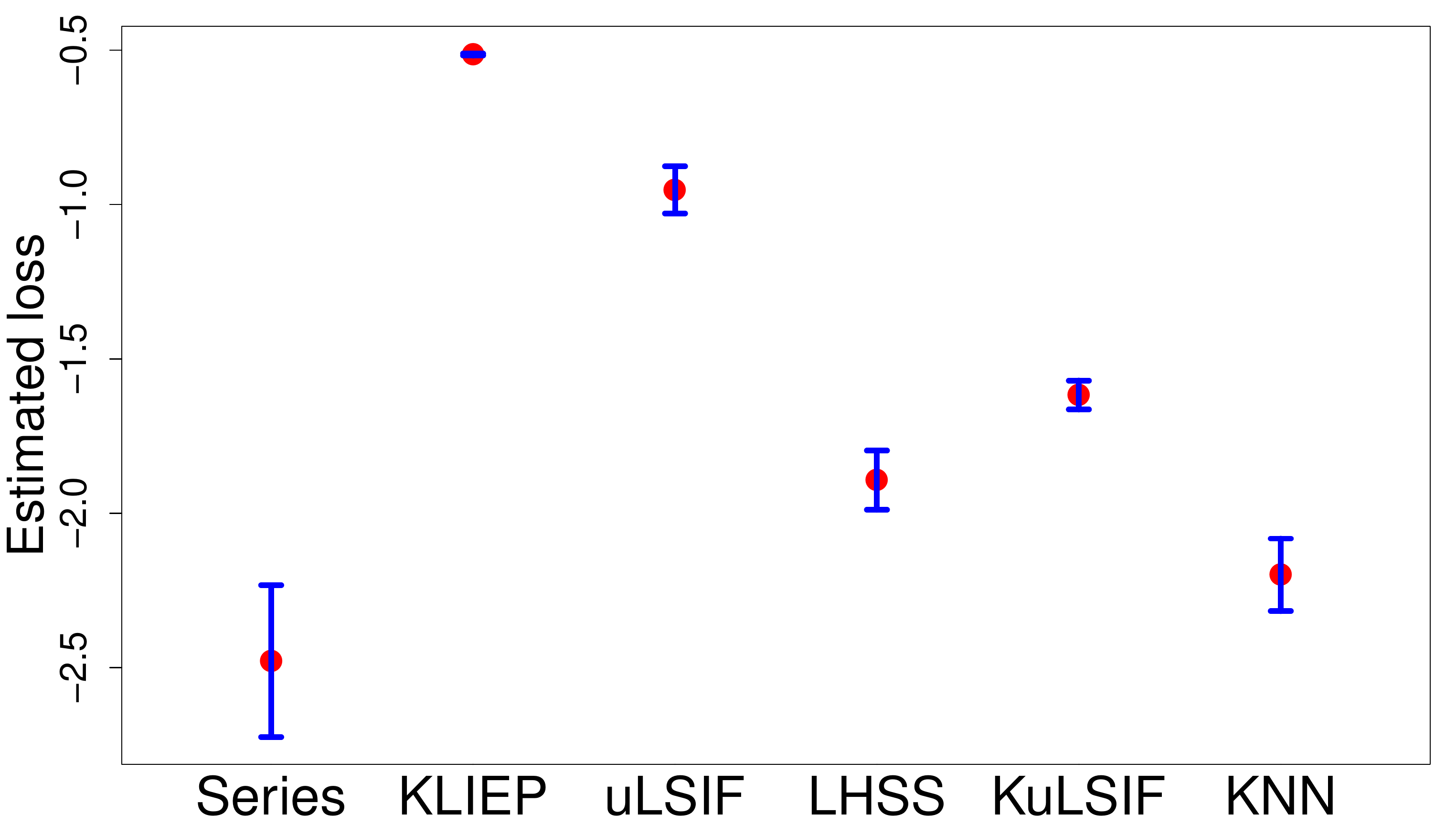}}
\vspace{.15in}
\caption{{\small Estimated losses of $\widehat{\beta}(\x)$  with standard errors for SDSS data. The spectral series
  estimator has best performance.}}
  \label{fig::photoZ}
\end{figure}

We use a subset of the Sloan Digital Sky Survey  \citep{aihara2011eighth}. 
The ultimate goal is to build
a predictor of galaxy redshift $Z$ based on photometric data $\x$; see Supplementary Materials
for details. We are given a training set with covariates $\x$ of galaxies and their redshifts, as well as unlabeled target data. Because it is 
difficult to acquire the true redshift of faint galaxies, these data suffer from selection bias. 
We compare our method 
of estimating the importance weights (\emph{Series}) to 
\emph{uLSIF}, \emph{KLIEP}, and \emph{KuLSIF}, described in Section \ref{sec::intro}. We also
compute \emph{LHSS}, which uses \emph{uLSIF} after applying a dimension reduction technique specifically designed
for estimating a density ratio, see \citet{sugiyama2011directDensityRatio}.
Moreover, we include a
comparison with a $k$-nearest neighbors estimator (\emph{KNN}) proposed in the astronomy literature (\citealt{Lima1}),
which it not based on ratios.
We do not show results of ratio-based estimators  because 
the estimates of $f_L(\x)$ are close to zero for many $\x$'s, inducing estimates of $\beta$ that are infinity.
%\footnote{We exclude \emph{kernel mean matching} from the comparison because it does not
%allow out-of-sample extrapolations, and hence does not permit using validation sets for tuning.}.

Figure \ref{fig::photoZ}
shows the estimated losses of the different methods of estimating the ratio $f_U(\x)/f_L(\x)$ when using 5,000 labeled and 5,000  unlabeled samples,
and 10 photometric covariates $\x$. We use 60\% of the data for training, 20\%  for validation and 20\% for testing. Even though this example has %In this example, even with 
a covariate space with as few as 10 dimensions, we can already see the benefits 
of the spectral series estimator. 
%Having accurate estimators of $f_U(\x)/f_L(\x)$ is extremely important for one to be able to perform model selection
%under this setting, and thus select better redshift estimators for the unlabeled data.
%This is because the covariates correlate to each other (in nonlinear ways)
%and therefore are expected to 
%have smaller intrinsic dimensionality.
%We leave more extensive experiments to the extension provided in what follows, which we believe to be the 
%most interesting contribution of this paper.

\section{EXTENSION: SPECTRAL SERIES ESTIMATOR OF A LIKELIHOOD FUNCTION}
\label{sec-methCond}

Our framework for estimating density ratios can be extended to the problem of 
estimating a high-dimensional likelihood function.
To the derivation described above, we add
$\theta \in \Theta$, the $p$-dimensional parameter.
In this context, $\x \in \mathcal{X} \subseteq  \Re^d$ is a random vector representing a single sample observation.
We will adopt a Bayesian perspective, and let 
$F_\theta$ be the marginal distribution for $\theta$, i.e., the prior, and let $G$ denote the marginal distribution
for $\x$. Then,
let $(\x_1^F, \theta_1),\ldots,(\x_{n_F}^F, \theta_{n_F})$ be an $i.i.d.$~sample from the joint distribution of
$\x$ and $\theta$.
Further, let $\vec{x}^G_1,\ldots,\vec{x}^G_{n_G}$ be an $i.i.d.$~sample from $G$.
Our objective is to estimate the ratio
\begin{align}
\label{eq::defLikeli}
 \L(\x;\theta) \equiv \frac{f(\x|\theta)}{g(\x)},
\end{align}
where $f(\x | \theta)$ is the conditional density of $\x$ given $\theta$, and $g(\x)$ is the marginal density
for $\x$. This is, up to a multiplicative factor that is not a function of $\theta$, the standard
definition of the likelihood function.

To estimate $\L(x;\theta)$, we use a spectral series approach as before, but because the likelihood is a function of both $\x$ and $\theta$, we consider 
the \emph{tensor product} of a basis for $\x$ and a basis for $\theta$,
$\{\Psi_{i,j}\}_{i,j\in \mathbb{N}}$, where
\begin{align*}
\Psi_{i,j}(\x,\theta)= \psi_j(\vec{x})\phi_i(\theta),\ i,j\in\mathbb{N}. 
\end{align*}
The construction of the separate bases $\{\psi_j\}_j$ and $\{\phi_i\}_i$ proceeds just
as described in Section \ref{sec::ratio}. 
Note that for $\theta$, we
consider the eigenfunctions $\{\phi_i\}_i$ 
of the operator $\textbf{K}_\theta\!: L^2(\Theta,F_\theta) \longrightarrow L^2(\Theta,F_\theta)$: 
\begin{align*}
\textbf{K}_\theta(h)(\xi)=\int_\Theta K_\theta(\xi,\mu)h(\mu)dF_\theta(\mu),
\end{align*}
where $K_\theta$ is not necessarily the same kernel as $K_{\x}$.
That is, while $\{\psi_j\}_j$ is estimated using a Gram matrix based on $\vec{x}^G_1,\ldots,\vec{x}^G_{n_G}$,
$\{\phi_i\}_i$ is estimated using $\theta_1,\ldots, \theta_{n_F}$.

Since $\{\phi_i\}_i$ is an orthonormal basis of functions in $L^2(\Theta,F_\theta)$, 
 the tensor product $\{\Psi_{i,j}\}_{i,j}$ is an orthonormal basis 
for functions in $L^2(\Theta\times\mathcal{X},F_\theta \times G)$. 

The projection of $\L(\x;\theta)$ onto $\{\Psi_{i,j}\}_{i,j}$
is given by 
\begin{align*}
   \sum_{i,j\in \mathbb{N}}\beta_{i,j}\Psi_{i,j}(\x,\theta),
\end{align*}
where 
\begin{align}
\label{eq::betaExpre}
\beta_{i,j}&=\iint \L(\x;\theta) \Psi_{i,j}(\x,\theta) dG(\x)dF_\theta(\theta)  \notag \\ 
&=\E_F\!\left[\Psi_{i,j}(\x,\theta)\right].  
\end{align}

Hence, we define our likelihood function estimator by
\begin{align*}
\widehat{\L}(\x;\theta) = \sum_{i=1}^{I} \sum_{j=1}^{J} \widehat{\beta}_{i,j} \widehat{\Psi}_{i,j}(\x,\theta),
\end{align*}
where
%\begin{align*}
$\widehat{\beta}_{i,j} = \frac{1}{n_F}\sum_{k=1}^{n_F}\widehat{\Psi}_{i,j}\left(\vec{x}^F_k,\theta_k\right)$, and

%\end{align*}
$$\widehat{\Psi}_{i,j}(\x,\theta)=\widehat{\psi}_j(\vec{x})\widehat{\phi}_i(\theta)$$
is the estimator for $\Psi_{i,j}(\x,\theta)$, obtained via a Nystr\"om extension as in Section \ref{sec::ratio}.
The tuning parameters $I$ and $J$ control the bias/variance 
tradeoff. Because we define the likelihood in terms of $g(\x)$ 
(Eq. \ref{eq::defLikeli}), we can take advantage of the 
orthogonality of the basis functions 
when estimating the coefficients $\beta_{i,j}$; see Eq. \ref{eq::betaExpre}. 
The result is a simple and fast-to-implement procedure for estimating
likelihood functions for high-dimensional data.

\vspace{2mm}
\textbf{Model Selection.} To evaluate the performance of a given estimator, we use the loss function
\begin{eqnarray}
\label{eq::lossL2L}
L\left(\widehat{\L},\L\right) &\equiv& \int \left(\widehat{\L}(\vec{x};\theta)-\L(\vec{x};\theta) \right)^2dG(\vec{x})dF(\theta) =\notag  \\
&=& \int \widehat{\L}(\vec{x};\theta)^2 dG(\vec{x})dF(\theta) - \notag \\
&-&2\!\int \widehat{\L}(\vec{x};\theta)dF(\theta,\vec{x})+K,
\end{eqnarray}
where $K$ does not depend on $\widehat{\L}$.  We can
estimate this quantity (up to $K$) by
\begin{align*}
\widehat{L}\left(\widehat{\L},\L\right) =&\frac{1}{B}\sum_{l=1}^B\left[  \frac{1}{\widetilde{n}}\sum_{k=1}^{\widetilde{n}}\left( \widehat{\L}\left(\widetilde{\x}^G_k|\widetilde{\theta}^{(l)}_k\right)\right)^2\right]  \\
& -\frac{2}{\widetilde{n}}\sum_{k=1}^{\widetilde{n}}\widehat{\L}(\widetilde{\vec{x}}_k^F|\widetilde{\theta}_k),
\end{align*}
where $\widetilde{\x}^G_1,\ldots,\widetilde{\x}^G_{\widetilde{n}}$ is a validation sample from $G$; 
$(\widetilde{\theta}_1,\widetilde{\x}^F_1),\ldots,(\widetilde{\theta}_{\widetilde{n}},\widetilde{\x}^F_{\widetilde{n}})$ 
is a validation sample from the joint distribution of $\x$ and $\theta$;
 $\widetilde{\theta}_1^{(l)},\ldots,\widetilde{\theta}_{\widetilde{n}}^{(l)}$ for $l=1,\ldots,B$
are random permutations of the original sample $\widetilde{\theta}_1,\ldots,\widetilde{\theta}_{\widetilde{n}}$; and $B$ 
is a number limited only by computational considerations.
We choose tuning parameters so as to minimize $\widehat{L}$. 

\textbf{Remarks.} 
 By choosing an appropriate kernel, the spectral series approach can be extended to discrete data. For example, in 
 \citet{lee2010spectral}, p. 185, the authors suggest a distance kernel that take into account the discrete nature of genetic SNP data.
 By minimizing the loss in Eq. (\ref{eq::lossL2L}), one can also select the best kernel from a set of reasonable candidate kernels.
 Finally, our procedure can be scaled up to large sets of simulated data by speeding up 
the eigendecomposition of the Gram matrix. Possible methods include the Nystr\"om extension \citep{Drineas05onthe} and procedures described in, e.g., \citet{belabbas2009spectral} and \citet{Halko}. In particular, some of these approaches can be parallelized for even higher computational efficiency.

\subsection{Numerical Experiments}
\label{sec::abc}
%As outlined in the introduction, 
Estimation of a likelihood function 
is of particular
value in cases where the complexity of the data and the data-generating process prevents construction of a sufficiently accurate analytical form,
a situation typically present in high-dimensional scientific
data. The general setup is as follows. We have data which are modeled as 
an $i.i.d$.~sample $\x_1,\ldots,\x_m$ from the distribution 
$f(\x|\theta)$. Our goal is to infer the value $\theta$.
Although we are able to simulate from $f(\x|\theta)$ for fixed $\theta$,
we lack an analytical form for the likelihood function. Hence, we use the methodology
of Section \ref{sec-methCond} to estimate $\L(\x;\theta)$ from a \emph{simulated sample}. 
Once we have an estimate
$\widehat{\L}(\x;\theta)$, we can approximate the likelihood of an \emph{observed sample}
according to
\[
 \widehat{\L}\left((\x_1,\ldots,\x_m);\theta\right)=\prod_{k=1}^m \widehat{\L}(\x_i;\theta).  
\]
This approximation can then be used in likelihood-based inference by, for example, plugging the expression into Bayes Theorem or by finding the maximum likelihood estimate.

In what follows we present five numerical examples where the ambient dimensionality of $\vec{x}$
is larger than its intrinsic dimensionality. In all experiments, we choose a uniform 
prior distribution on the parameter space.

\vspace{2mm}
\textbf{Spiral.} The data are $i.i.d.$ observations of $(X^{(1)},X^{(2)})$, where
 $$X^{(1)}=\theta\cos{\theta}+N(0,1)\mbox{ and } X^{(2)}=\theta\sin{\theta}+N(0,1)$$
%\[
%  \left\{ 
%  \begin{array}{l}
%    X^{(1)}=\theta\cos{\theta}+N(0,1)\\
%    X^{(2)}=\theta\sin{\theta}+N(0,1)
%  \end{array} \right.
%\]
for  $0<\theta<15$.
Although the dimension of the sample space is 2, the data lie close to a one-dimensional spiral.
%The prior distribution is $\theta \sim U(0,15)$.

\vspace{2mm}
\textbf{Klein Bottle.}  The data are $i.i.d.$ observations of $(X^{(1)},X^{(2)},X^{(3)},X^{(4)})$, where
\[
  \left\{ 
  \begin{array}{l}
    X^{(1)}=2(\cos{\theta_2}+1)\cos{\theta_1}+N(0,1)\\
    X^{(2)}=2(\cos{\theta_2}+1)\sin{\theta_1}+N(0,1)\\
    X^{(3)}=2\sin{\theta_2}\cos{\theta_1/2}+N(0,1)\\
    X^{(4)}=2\sin{\theta_2}\sin{\theta_1/2}+N(0,1)
  \end{array} \right.
\]
for $0<\theta_1,\theta_2<2\pi$.
The dimension of the sample is 4, but the data lie close to a two-dimensional Klein Bottle
embedded in $\Re^4$.
%The prior distribution is $(\theta_1,\theta_2) \sim U(0,2\pi) \times U(0,2\pi)$.

\vspace{2mm}
\textbf{Transformed Images.} In this example, we rotate and translate an image of a tiger, see the top row of Figure \ref{fig::tigerGalaxies}.
The model parameters are $(\theta, \rho_x, \rho_y)$. The transformed images are centered at $(\rho_x+N_T(0,10),\rho_y+N_T(0,10))$\footnote{$N_T$ is the truncated normal to guarantee that the parameters are in the range of the image.} with rotation angle $(\theta+N(0,10))$.
%$(\rho_x+N_T(0,10),\rho_y+N_T(0,10))$
%\footnote{$N_T$ is the truncated normal to guarantee that the parameters drawn are in the range of the image.}, 
%and rotated according to the angle $(\theta+N_T(0,10))$. 
The final images are cropped to $20 \times 20$ pixels,
i.e., the sample space has dimension $400$.
 %$\theta,$ the rotation angle; $\rho_x$, the new center in the $x$-axis;
 % and $\rho_y$, the new center in the $y$-axis . The samples are matrices that represent
%the original image, but translated so as to have the centroid
%drawn from 
\begin{figure}[lineheight]
\vspace{.0in}
\begin{center}
Transformed Images
\centerline{\includegraphics[scale=0.9]{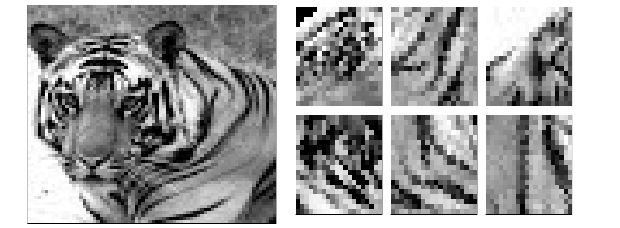}}\vspace{1mm}
Edges
\centerline{\includegraphics[page=1,scale=0.9]{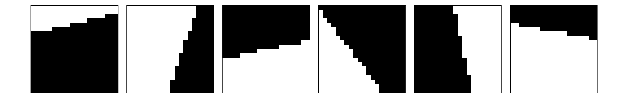}}\vspace{1mm}
Galaxies\vspace{-1.9mm}
\centerline{\includegraphics[page=4,scale=0.129]{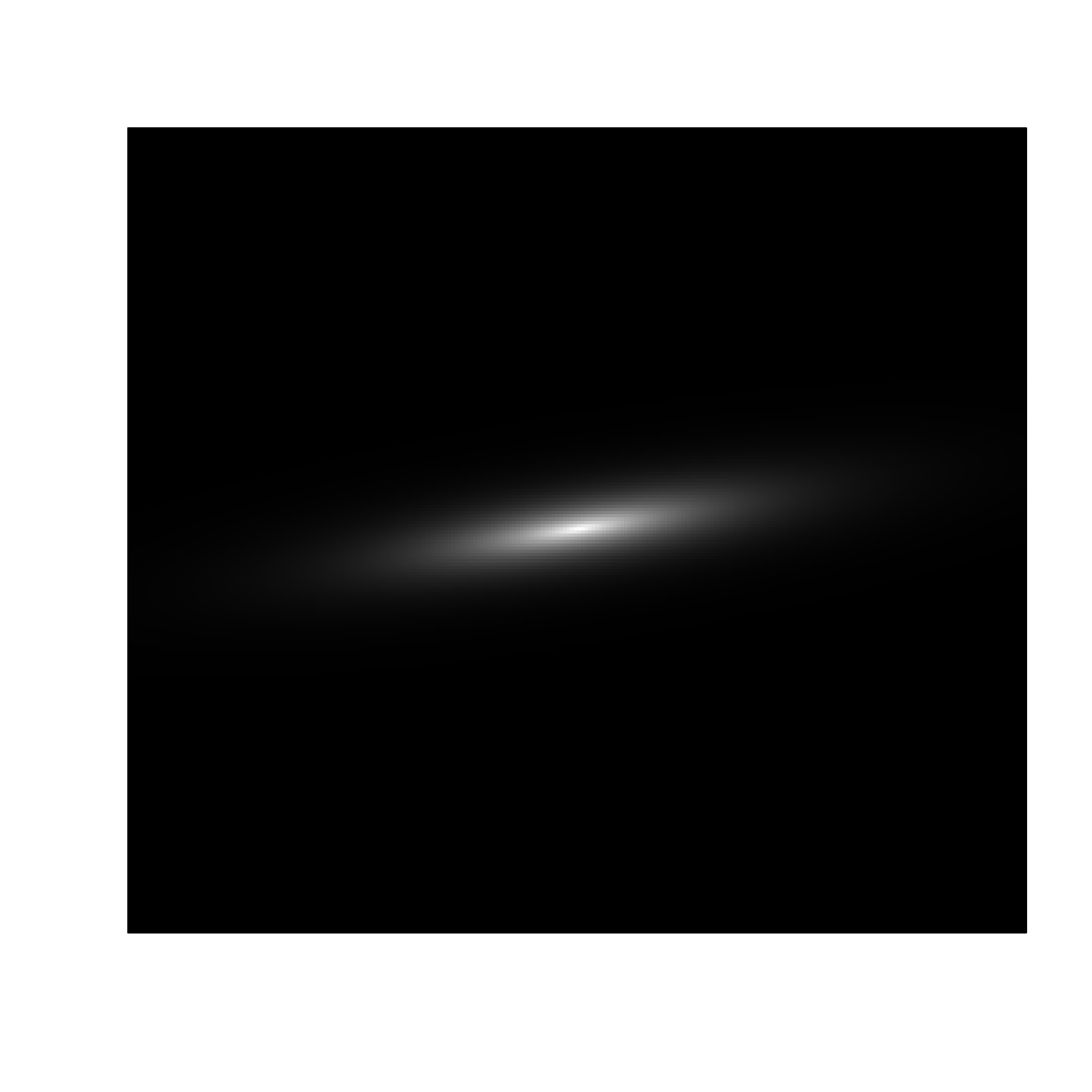}\hspace{-4mm}
\includegraphics[page=8,scale=0.129]{galaxiesGenerate3TransformationNoise.pdf}\hspace{-4mm}
\includegraphics[page=12,scale=0.129]{galaxiesGenerate3TransformationNoise.pdf}} \vspace{1mm} 
\end{center}
\vspace{-.2in}
\caption{ {\small Some examples of data generated according to Section~\ref{sec::abc}. (The top left image is the original image in ``Transformed Images''.)}}
 \label{fig::tigerGalaxies}
\end{figure}

\vspace{2mm}
\textbf{Edges.} Here we generate $20 \times 20$ images of binary edges from a model with two parameters, $\alpha$ and $\lambda$.
%These are parametrized by $\alpha$, the angle of the 
%edge, and $\lambda$, its displacement with respect to the origin. The samples we observed are $i.i.d.$
The data are $i.i.d.$ observations of an edge with rotation angle $\alpha+N(0,\pi/4)$ and displacement $\lambda+N_T(0,0.5)$ from the center,  see Figure \ref{fig::tigerGalaxies}
for some examples.
%The prior
%distribution is $(\alpha,\lambda)\sim U(0,2\pi) \times U(-\sqrt{2},\sqrt{2})$. 

\vspace{2mm}
\textbf{Simulated Galaxy Images.} The last example is a simplified version of a key estimation problem in astronomy, namely that of shear estimation \citep{bridle2009handbook}.  We use the GalSim Toolkit\footnote{https://github.com/GalSim-developers/GalSim}
to simulate realistic galaxy images. We sample two parameters: First, the orientation with respect to the $x$-axis of the image and, second, the axis ratio of the galaxies, which measures their ellipticity. To mimic a realistic situation, the observed data are low-resolution images of size $20 \times 20$.  Figure \ref{fig::tigerGalaxies}, bottom, shows some examples. These images have been degraded by observational effects such as background noise, pixelization, and 
blurring due to the atmosphere and telescope;
see the Supplementary Materials for details.

We assume that the orientation and axis ratio of galaxy $i$ are given by 
$a_i \sim \mbox{Laplace}(\alpha,10) \mbox{ and } r_i \sim N_T(\rho,0.1^2),$
respectively. We seek to infer $\theta=(\alpha,\rho)$ based on an observed 
$i.i.d.$~sample of images $\x$ contaminated by observational effects.
%The prior on $\theta$ is
%a uniform distribution in $[0,180^{\circ}]\times [0,1]$. 
Notice we do not observe
$a_i$ and $r_i$, but only $\vec{x}_i$, the 400-dimensional noisy image.

\vspace{2mm}
\textbf{Methods.} 
In the examples above, the likelihood function is estimated based on $n_F=n_G=5,000$  observations from the simulation model; 
60\% of the data are used for training and 40\% for validation. 
%To the best of our knowledge, there are no methods developed with the goal
%of estimating the conditional density of a high-dimensional
%vector. \red{not good}Hence, 
We compare \emph{Series}, our spectral series estimator 
from Section \ref{sec-methCond}, with two state-of-the-art estimators of $f(\vec{x}|\theta)$.
The first estimator is \emph{KDE} -- a kernel density estimator based on taking the ratio of kernel estimates
of $f(\vec{x},\theta)$ and $f(\theta)$. 
We use the implementation
from the package ``np'' \citep{np} for R. For the Spiral and Klein bottle examples, 
we select the bandwidths of KDE via cross-validation. However, the high dimension of the other examples (Transformed, Edges, and Galaxy) makes a cross-validation approach computationally intractable and the density estimates numerically unstable.
% approach is 
%high dimensionality makes choosing bandwidths via CV computationally intractable, and the estimates
%are numerically instable, causing divisions by 0. 
For these examples, we instead use the default reference rule for the bandwidth, and we reduce the dimensionality of the data with PCA with number
of components chosen by minimizing the estimated loss (Eq.~\ref{eq::lossL2L}).
The second estimator in our comparisons is  \emph{LS} --
the direct least-squares conditional density estimator of \citet{sugiyama2010conditional}.
This estimator is based on 
a direct expansion of the likelihood into a set of prespecified 
functions. This approach typically yields better results than estimators based on the ratio of random variables. 
%Contrary to the spectral series, these functions are
%not orthogonal --- which makes tuning parameters more time-consuming --- nor form a basis.
% Moreover, such functions do not behave like a Fourier series, where lower-order terms are smoother than higher-order terms.
Again, to avoid the problem of high dimensionality in the examples with $d>4$, 
we implement \emph{PCA+LS},
the direct least-squares conditional density estimator after dimension reduction via PCA, with number
of components chosen so as to minimize the estimated loss. PCA has the additional goal of decorrelating adjacent pixels 
in the images examples.
\begin{table*}[ht]
\caption{\small Estimated $L^2$ loss (with standard errors) of the likelihood function estimators. Best-performing models with smallest loss are in bold fonts. } \label{table::loss}
\vspace{1mm}
\centering
\tabcolsep=0.07cm
{\small
\begin{tabular}{l|c|ccccc}
{\bf DATA}  & {\bf DIM.}  &\multicolumn{5}{c}{{\bf $\mathbf{L^2}$ LOSS}}       \\
\hline 
 & & {\it Series }& {\it LS} & {\it PCA+LS} & {\it KDE} & {\it PCA+KDE}\\
Spiral         &2 &7.13 (0.14)& 6.61 (0.12)& ---  & \textbf{2.95 (0.30)} &   ---   \\
Klein Bottle             & 4&\textbf{1.45 (0.07)} & 2.02 (0.06)& --- &1.68 (0.11) &  ---   \\
Transf. Images             & 400&\textbf{20.94 (0.03)} &26.91 (0.04)&27.12 (0.03) &---   & 26.62 (0.06)\\
Edges              &  400& \textbf{0.70 (0.03)} & 1.77 (0.02) &1.55 (0.03)& ---  &  1.60(0.02)  \\
Galaxy Images             &400 & \textbf{40.94 (0.03)}& 42.57 (0.01)&42.53 (0.01) & ---     &  43.99 (0.04)
\end{tabular}}
\end{table*}

\begin{table*}[ht]
\caption{\small  Estimated average likelihood (with standard errors) of the likelihood function estimators. Best-performing models with largest average likelihood are in bold fonts. } \label{table::likeli}
\vspace{1mm}
\centering
\tabcolsep=0.07cm
{\small
\begin{tabular}{l|c|ccccc}
{\bf DATA}  & {\bf DIM.}       & \multicolumn{5}{|c}{{\bf AVERAGE LIKELIHOOD}}     \\
\hline 
 & & {\it Series }& {\it LS} & {\it PCA+LS} & {\it KDE} & {\it PCA+KDE}\\
Spiral         &2 &    16.54 (0.16)& 19.49 (0.14) & --- & \textbf{28.62 (0.01)}  & --- \\
Klein Bottle             & 4 &5.62 (0.08) & 4.96 (0.08) & ---&  \textbf{5.63 (0.13)}   & --- \\
Transf. Images             & 400&  \textbf{8.31 (0.03)} &1.83 (0.03)&1.08 (0.02) &  ---  & 1.58 (0.06)\\
Edges              &  400&   \textbf{3.69 (0.04)} & 1.72 (0.02)&2.55 (0.03) & ---   & 2.10 (0.02)\\
Galaxy Images             &400 &   \textbf{4.63 (0.04)} & 2.24 (0.01)&2.43 (0.02) & ---  &1.01 (0.04)
\end{tabular}}
\end{table*}

\vspace{2mm}
\textbf{Results.} 
In Tables \ref{table::loss} and \ref{table::likeli}, we present the estimated $L^2$ loss (Eq. \ref{eq::lossL2L}),  as well as
the estimated average likelihood $\E_{(\vec{X},\theta)}[\widehat{\L}(\vec{X};\theta) ]$ based on a test set with 3,000 observations\footnote{To make results
comparable, we renormalize the estimated likelihood functions to integrate to 1 in $\theta$.}. 
Both measures indicate 
that, while traditional methods have better performance in low dimensions, our spectral series method yields substantial improvements
when the ambient dimensionality of the sample space is large. 
%This is because it 
%automatically takes into account the low intrinsic dimensionality. 
Note that even after dimension reduction, \emph{LS}
does not yield the same performance as \emph{Series}. In fact, in some cases, a dimension reduction via PCA leads to less accurate estimates.
% 
% \begin{figure*}[ht]
% \vspace{.2in}
% \centerline{\includegraphics[page=1,scale=0.18]{Spiral/DistanceAverageVSsampleSize.pdf}\hspace{-1.5mm}
% \includegraphics[page=1,scale=0.18]{Klein/DistanceAverageVSsampleSize.pdf}\hspace{-1.5mm}
% \includegraphics[page=1,scale=0.18]{imagesTranslated/DistanceAverageVSsampleSize.pdf}\hspace{-1.5mm}
% \includegraphics[page=1,scale=0.18]{patches/DistanceAverageVSsampleSize.pdf}\hspace{-1.5mm}
% \includegraphics[page=1,scale=0.18]{Galaxies/DistanceAverageVSsampleSize.pdf}}\hspace{-1.5mm}
% \vspace{.2in}
% \caption{{\small Average distance of estimated likelihoods to real $\theta^*$ and standard errors as a function of $n$.}}
%  \label{fig::distance}
% \end{figure*}

%We now evaluate how our method compares to traditional Approximate Bayesian Computation (ABC) in the galaxy example,
%our most interesting application.  
As a further illustration, Figure \ref{fig::posteriors} shows the estimated likelihood function 
for a sample of size $m=10$  drawn from the galaxy image model with parameters $\alpha=80^{\circ}$ and $\rho=0.2$.
%To perform the comparison of observed and generated samples
%in the ABC approach \citep{marin2012approximate}, 
%we use the summary statistic $n^{-1} \sum_{i=1}^n (\widehat{a_i},\widehat{r_i})$, 
%where $(\widehat{a_i},\widehat{r_i})$
%are estimated with GalSim. We accept a sample when its Euclidean distance to the observed sample is at most $\epsilon$,
%chosen so as to have accept rate $0.1\%$ \citep{marin2012approximate}. With this choice of $\epsilon$, generating 
%500 samples from the posterior costs around
%120 times more CPU time than using the Series method (including tuning of parameters). 
%Notice that because the prior distribution is uniform, the level sets 
%of the posterior should be the same as those from the likelihood function.
For comparison, we also include  
the true likelihood function (TRUTH), which is unavailable in practical applications\footnote{Because the observed images are simulated, we can compute $a_i$ and $r_i$.}.
It is apparent from the figure that the spectral series estimator comes closer to the truth than 
the other estimators, even without first reducing the dimensionality of the galaxy images. 
%It is also interesting that the ABC approach estimates
%that there is some mass in a region where actually there is not. This may happen either because $\epsilon$ is not small enough
%(although the value we use already yields a high computational time),
%or because the estimators of $a_i$ and $r_i$ are not accurate, or because $n^{-1} \sum_{i=1}^n (\widehat{a_i},\widehat{r_i})$
%is not a good summary statistic. 
In the  Supplementary Materials we present additional plots for other sample sizes. These results yield similar conclusions.
\begin{figure}[ht]
\vspace{.0in}
\centerline{\includegraphics[page=4,scale=0.34]{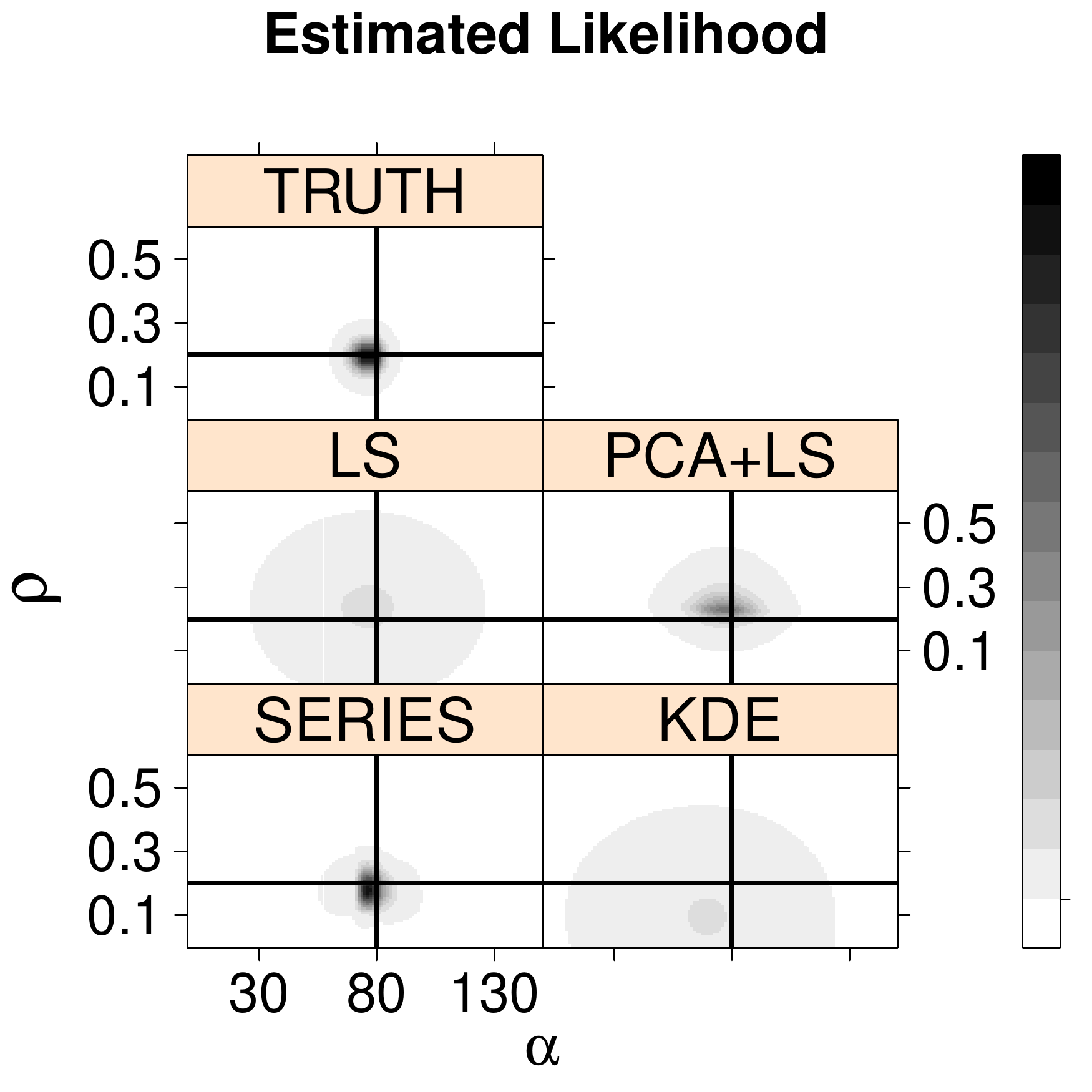}}
\vspace{.0in}
\caption{{\small Estimated likelihood function for the galaxy example using different estimators. The contours represent the level sets; 
 horizontal and vertical lines are the true values of the parameters. 
 The spectral series estimator gets close to the true distribution, which is uncomputable in practice.}}
 \label{fig::posteriors}
\end{figure}

\vspace{-1mm}
Furthermore, to quantify how the level sets of the likelihood function concentrate around the true parameters, we define the expected average
distance of the estimated likelihood function
to the real parameter value,  
$\E_{\vec{x},\theta^*}\left[\int d(\theta^*,\theta) \widehat{\L}(\vec{x};\theta)d\theta\right]$\footnote{As the prior distribution
is uniform, this quantity is $\E_{\vec{x},\theta^*}\left[\int d(\theta^*,\theta)d\widehat{f}(\theta|\vec{x})\right]$.}, where the expectation is taken with respect to both $\theta^*$ and the observed data. Here we choose  $d(\theta^*,\theta)$ to be the Euclidean distance between the  vectors of parameters, standardized so that each component
has minimum 0 and maximum 1. 
%This metric averages the likelihood function over $\theta$ and then computes the expectation with respect to $\theta^*$ and the observed data $\vec{x}$. 
  As a final comparison of methods, we study how the above likelihood metric changes as a function of the sample size $m$ of the observed data (the sample size of the simulated data used to estimate the likelihood is held constant); see Figure~\ref{fig::distance} for results. Because $\L(\vec{x};\theta)$
concentrates around the true parameter value $\theta^*$ for large sample sizes, we expect the average likelihood  to decrease as $m$ increases -- if the likelihood estimates are reasonable.
Indeed, we observe this behavior for all methods in the comparison
for the problems with low dimensionality. 
However, for the problems with high dimensionality, this is no longer the case. %typically not the case for the standard likelihood estimators.
On the other hand, the results indicate that \emph{Series} is able to overcome the curse of dimensionality %in all experiments
 and recover the true $\theta^*$ parameter as the number of observations increases.

  \begin{figure}[ht]
 \vspace{.2in}
 \centerline{\includegraphics[page=1,scale=0.19]{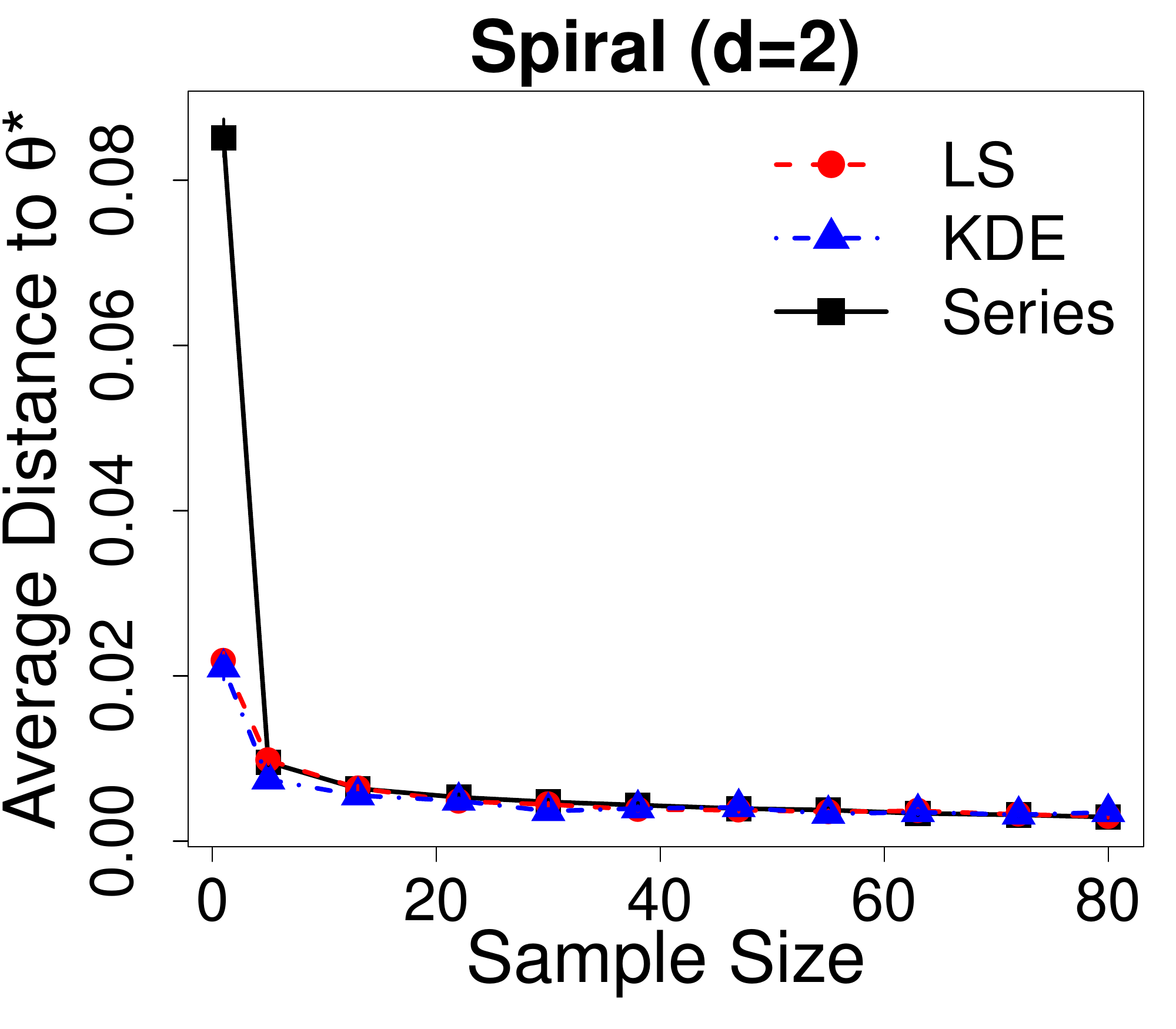}
 \includegraphics[page=1,scale=0.19]{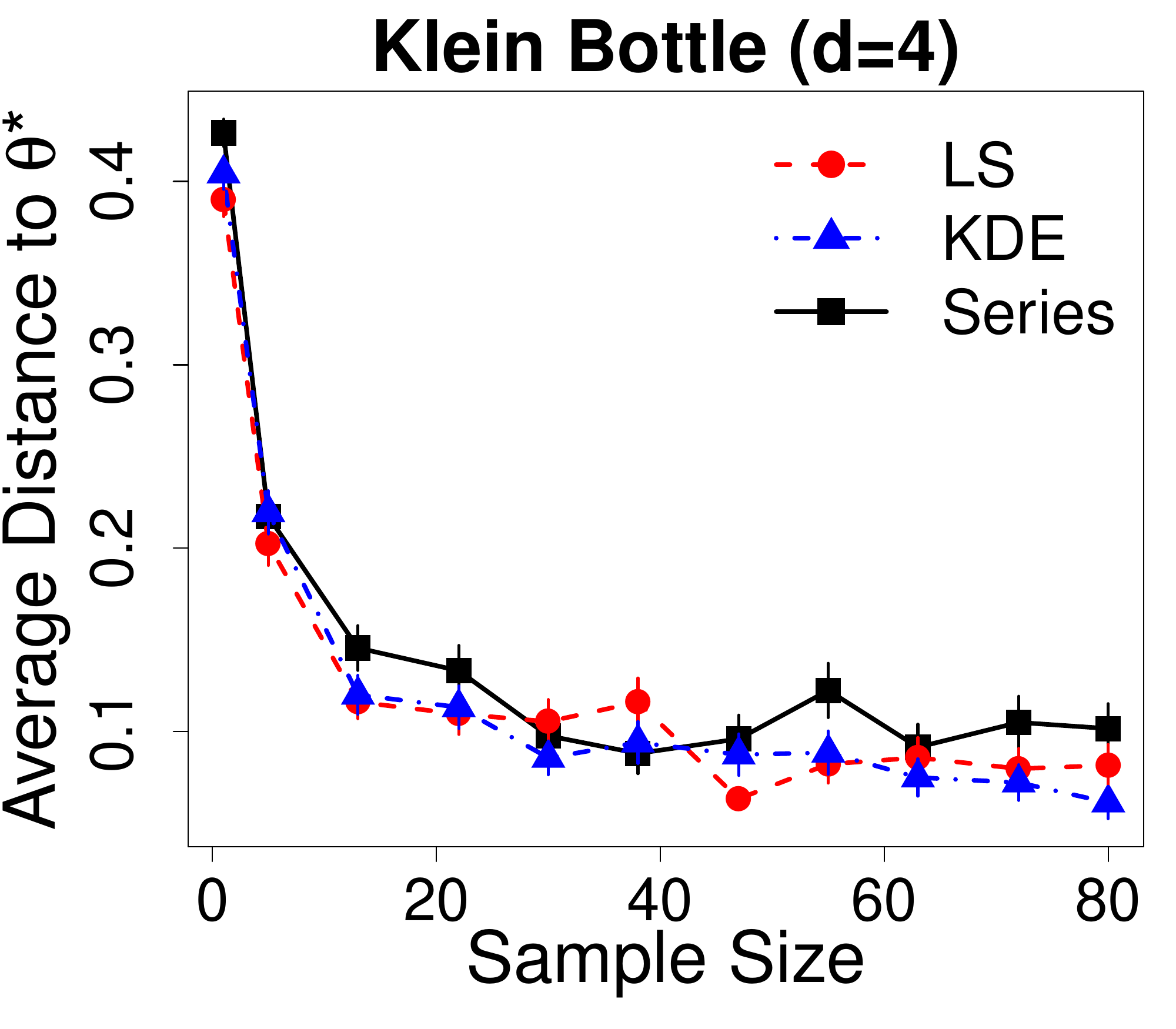}}
 \centerline{\includegraphics[page=1,scale=0.19]{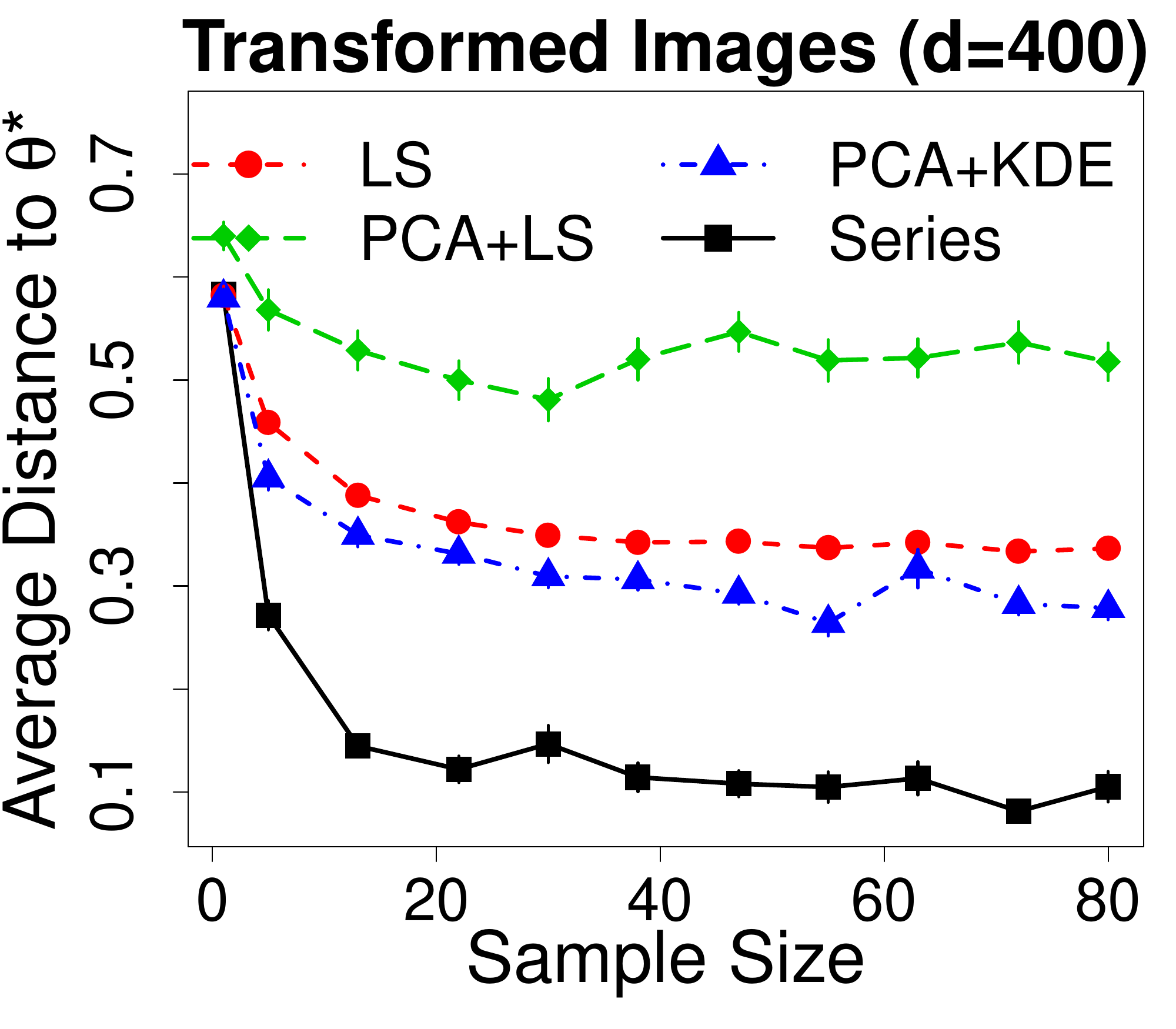}
 \includegraphics[page=1,scale=0.19]{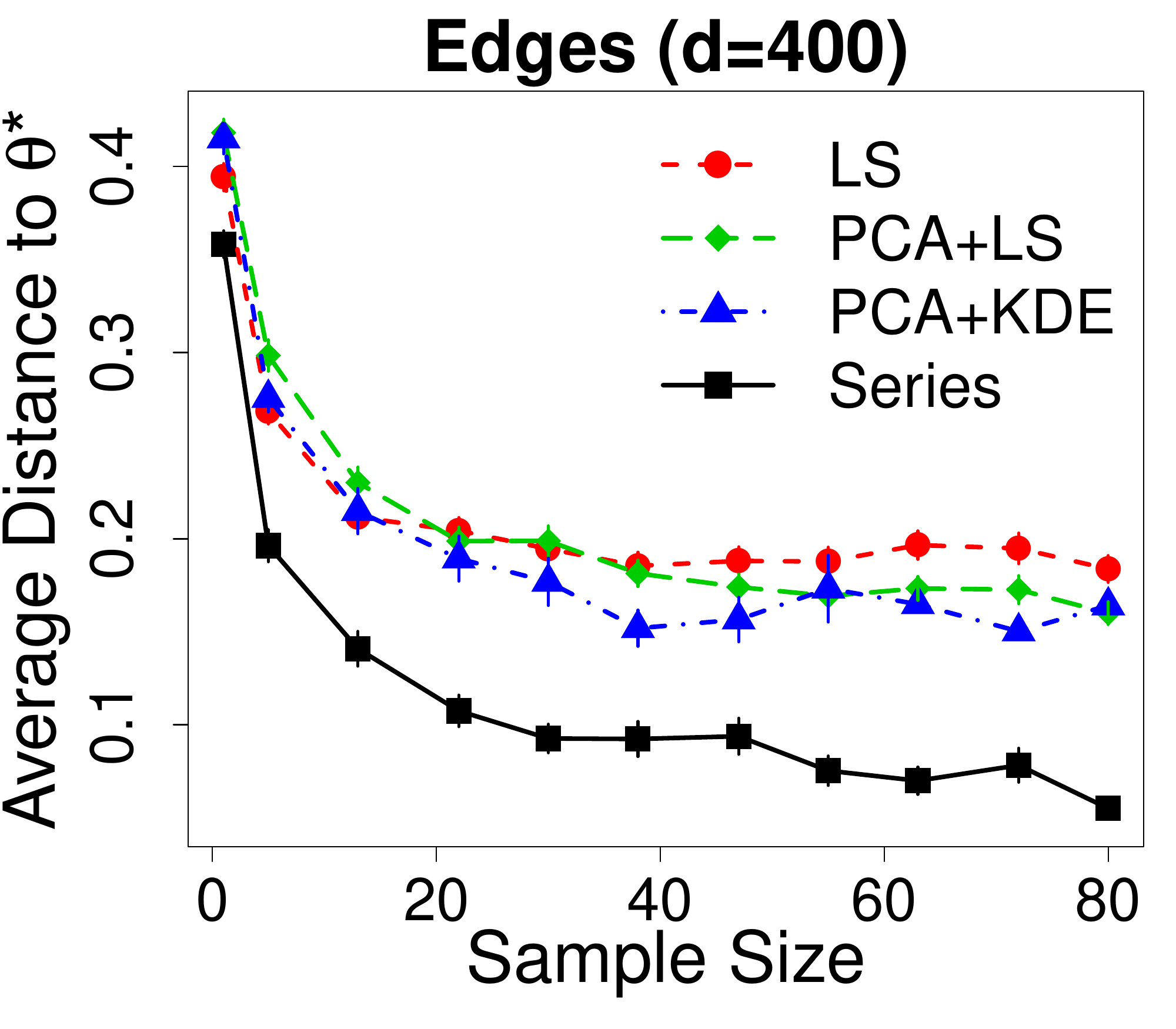}}
 \centerline{\includegraphics[page=1,scale=0.19]{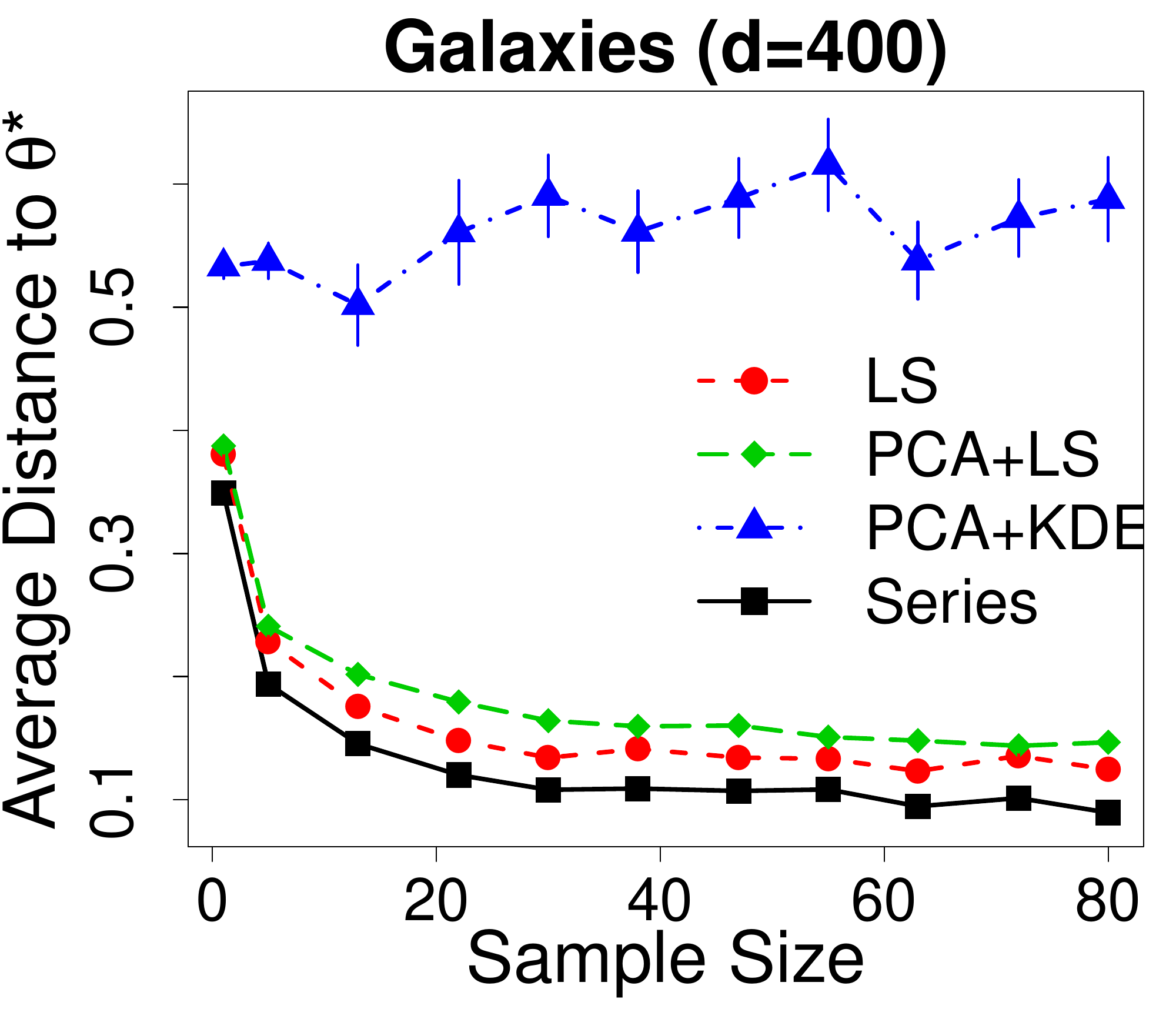}}
 \vspace{.0in}
 \caption{\small Average distance of estimated likelihoods to the true $\theta$  (and standard errors) as a function of the number of observed images for the galaxy data. While in low dimensions all estimators have similar performance, our approach performs better in high dimensions.}
  \label{fig::distance}
 \end{figure}
 
\section{THEORY}
\label{sec-theory}
Next we provide theoretical guarantees of the performance of the 
estimator $\widehat{\beta}$.
The integral operator
$\textbf{K}_\x$ from Eq. \ref{eq::kpcaOperator} is self-adjoint, compact and has a countable number of 
eigenfunctions $\{\psi_i\}_i$
with respective eigenvalues $\lambda_1 \geq \lambda_2 \geq \ldots \geq 0$.
These eigenfunctions therefore form an orthonormal basis of $L^2(\mathcal{X},G)$ \citep{Minh2}.
We make the following assumptions:

\vspace{1.2mm}
\begin{Assumption}
\label{assumpfiniteWeight}
$\int \beta^2(\vec{x})dG(\vec{x}) < \infty$.
\end{Assumption}

\vspace{1.2mm}
\begin{Assumption} 
 \label{decayEigenvaluesWeight}
$\lambda_1>\lambda_2>\ldots>\lambda_J>0$.
\end{Assumption}

Assumption \ref{assumpfiniteWeight} states that the ratio is $L^2$ integrable. It implies that it is possible to expand $\beta$ into
the basis $\psi$. 
Assumption \ref{decayEigenvaluesWeight} allows one to uniquely define each of the eigenfunctions
(see, e.g., \citealt{Yangicml12} for similar assumptions, and \citealt{Zwald} on how to proceed if it does not hold). 
Let $\mathcal{H}_{K_\x}$ denote the Reproducing Kernel Hilbert Space (RKHS) associated to the kernel $K_\x$.
We assume

\vspace{1.2mm}
\begin{Assumption} 
 \label{assump-hilbertXWeight} 
$c_{K_\x} \equiv ||\beta(\vec{x})||^2_{\mathcal{H}_{K_\x}} <\infty $.
\end{Assumption}

Assumption \ref{assump-hilbertXWeight} implies
smoothness of $\beta(\vec{x})$ as measured by the RKHS norm defined by $K_\x$. Smaller values of $c_{K_\x}$
imply smoother functions. In the Supplementary Materials we prove the following main result.

\vspace{2mm}
\begin{thm}
\label{thm-mainBoundWeights}
 Under Assumptions \ref{assumpfiniteWeight} -- \ref{assump-hilbertXWeight},
 the loss 
%\[
$\int \left(\widehat{\beta}_{J}(\vec{x})-\beta(\vec{x}) \right)^2 dG(\vec{x})$
%\]
 is bounded by
 \[
J \times \left[O_P\!\left(\frac{1}{n_F}\right)+ O_P\!\left(\frac{1}{\lambda_{J} \Delta^2_{J} n_G}\right) \right]+c_{K_\x}O\!\left( \lambda_{J}\right),
\]
 where $\Delta_{J}=\min_{1\leq j \leq J}\left|\lambda_j-\lambda_{j+1}\right|$
 and $\widehat{\beta}_{J}(\vec{x})$ is the spectral series 
 ratio estimator truncated at $J$.
\end{thm}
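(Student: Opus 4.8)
The plan is to control $\int\big(\widehat{\beta}_J(\vec{x})-\beta(\vec{x})\big)^2dG(\vec{x})$ by a bias/variance decomposition. First I would discard the positive-part operation: since $\beta\geq 0$ and $t\mapsto t_+$ is the metric projection onto $[0,\infty)$, it is a contraction toward $\beta$, so it suffices to bound the loss of the linear estimator $\widehat{s}_J\equiv\sum_{j=1}^J\widehat{\beta}_j\widehat{\psi}_j$. Writing $\bar{\beta}_j\equiv\E_F[\widehat{\psi}_j(\X)]=\int\widehat{\psi}_j\,dF=\langle\beta,\widehat{\psi}_j\rangle_{L^2(G)}$ (the coefficient $\widehat{\beta}_j$ is meant to estimate, conditionally on the $G$-sample), I split
\begin{align*}
\widehat{s}_J-\beta=\sum_{j=1}^{J}\big(\widehat{\beta}_j-\bar{\beta}_j\big)\widehat{\psi}_j+\Big(\sum_{j=1}^{J}\bar{\beta}_j\widehat{\psi}_j-\sum_{j=1}^{J}\beta_j\psi_j\Big)-\sum_{j>J}\beta_j\psi_j,
\end{align*}
denote the three terms $\mathrm{I}$, $\mathrm{II}$, $\mathrm{III}$, and use $\|\mathrm{I}+\mathrm{II}+\mathrm{III}\|_{L^2(G)}^2\leq 3(\|\mathrm{I}\|^2+\|\mathrm{II}\|^2+\|\mathrm{III}\|^2)$. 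It is worth noting that Assumption~\ref{assump-hilbertXWeight} together with boundedness of $K_\x$ gives $|\beta(\vec{x})|\leq\sqrt{c_{K_\x}}\,\kappa$ with $\kappa^2=\sup_\vec{x}K_\x(\vec{x},\vec{x})$, via the reproducing property and Cauchy--Schwarz; I will use this below.

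\emph{Bias term $\mathrm{III}$.} Mercer's theorem and the standard characterization of the RKHS give $c_{K_\x}=\|\beta\|_{\mathcal{H}_{K_\x}}^2=\sum_j\beta_j^2/\lambda_j$. Since the eigenvalues are nonincreasing, $\|\mathrm{III}\|_{L^2(G)}^2=\sum_{j>J}\beta_j^2=\sum_{j>J}\frac{\beta_j^2}{\lambda_j}\,\lambda_j\leq\lambda_{J+1}\sum_{j>J}\frac{\beta_j^2}{\lambda_j}\leq c_{K_\x}\,\lambda_J$, which is $c_{K_\x}\,O(\lambda_J)$.

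\emph{Coefficient term $\mathrm{I}$.} Conditioning on the $G$-sample, $\widehat{\beta}_j-\bar{\beta}_j$ is a mean-zero average of $n_F$ i.i.d. bounded-variance terms. Using the fact that the Nyst\"om functions $\{\widehat{\psi}_j\}$ are orthonormal in the empirical inner product and, by the concentration results of \citet{Rosasco,Bengio}, approximately orthonormal in $L^2(G)$, the ``projection-kernel'' trick gives $\E\big[\|\mathrm{I}\|_{L^2(G)}^2\mid G\text{-sample}\big]\lesssim\frac{1}{n_F}\sum_{j=1}^{J}\E_F[\widehat{\psi}_j^2]=\frac{1}{n_F}\sum_{j=1}^{J}\int\widehat{\psi}_j^2\beta\,dG\lesssim\frac{J\sqrt{c_{K_\x}}\,\kappa}{n_F}$, where boundedness of $\beta$ and approximate orthonormality were used. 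Markov's inequality then yields $\|\mathrm{I}\|^2=J\cdot O_P(1/n_F)$.

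\emph{Eigenfunction term $\mathrm{II}$ (the main obstacle).} The crux is a perturbation bound for the estimated eigenfunctions. I would (i) show $\|\widehat{\mathbf{K}}_\x-\mathbf{K}_\x\|_{\mathrm{op}}=O_P(n_G^{-1/2})$ by a Hoeffding-type inequality in Hilbert space (using boundedness of $K_\x$); (ii) invoke Davis--Kahan / resolvent perturbation for the rank-one spectral projections, which is legitimate under Assumption~\ref{decayEigenvaluesWeight} (distinct eigenvalues), with the relevant separation of $\lambda_j$ being at least $\Delta_J$ for every $j\leq J$, and also show $\widehat{\ell}^{\x}_j\asymp\lambda_j$; (iii) account for the division by $\widehat{\ell}^{\x}_j\asymp\lambda_j$ in the Nyst\"om extension, which contributes an extra factor $\lambda_j^{-1/2}$. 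Together these give, uniformly in $j\leq J$, $\|\widehat{\psi}_j-\psi_j\|_{L^2(G)}=O_P\!\big(\frac{1}{\sqrt{\lambda_J}\,\Delta_J\sqrt{n_G}}\big)$ (cf. \citet{Rosasco,Zwald}). Splitting $\mathrm{II}=\sum_{j\leq J}(\bar{\beta}_j-\beta_j)\widehat{\psi}_j+\sum_{j\leq J}\beta_j(\widehat{\psi}_j-\psi_j)$, bounding $|\bar{\beta}_j-\beta_j|=|\E_F(\widehat{\psi}_j-\psi_j)|\leq\|\beta\|_{L^2(G)}\|\widehat{\psi}_j-\psi_j\|_{L^2(G)}$, and applying Cauchy--Schwarz over $j$ (which produces the factor $J$) together with $\|\widehat{\psi}_j\|_{L^2(G)}=O_P(1)$ and Assumption~\ref{assumpfiniteWeight}, gives $\|\mathrm{II}\|_{L^2(G)}^2=J\cdot O_P\!\big(\frac{1}{\lambda_J\Delta_J^2 n_G}\big)$. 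Adding the three bounds and absorbing constants into the $O_P$/$O$ symbols yields the claimed rate. The hard part is step (iii) above: tracking the Nyst\"om normalization and the transfer between $L^2(\widehat{G})$ and $L^2(G)$ so that the $\lambda_J^{-1}$ and $\Delta_J^{-2}$ dependence comes out as stated; everything else is routine concentration and linear algebra.
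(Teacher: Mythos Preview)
Your proposal is correct and follows essentially the same route as the paper. The paper also splits the loss into a bias term $\sum_{j>J}\beta_j^2\leq c_{K_\x}\lambda_J$ (your term $\mathrm{III}$, proved exactly as you do via $\|\beta\|_{\mathcal{H}_{K_\x}}^2=\sum_j\beta_j^2/\lambda_j$) and a variance term, the latter controlled through the same eigenfunction perturbation bound $\|\widehat{\psi}_j-\psi_j\|_{L^2(G)}^2=O_P(1/(\lambda_j\delta_j^2 n_G))$ (the paper imports this directly from Sinha rather than re-deriving it via Davis--Kahan) together with approximate orthonormality of the $\widehat{\psi}_j$ in $L^2(G)$.

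The only organizational difference is that the paper inserts $h(\vec{x})=\sum_{j\leq J}\beta_j\widehat{\psi}_j(\vec{x})$ (with the \emph{true} $\beta_j$) as the intermediate object, so its ``coefficient'' piece $\sum_j(\widehat{\beta}_j-\beta_j)\widehat{\psi}_j$ mixes both sources of randomness and is handled by a separate lemma bounding $(\widehat{\beta}_j-\beta_j)^2$; your insertion of $\bar{\beta}_j=\E_F[\widehat{\psi}_j]$ separates the $F$-sample fluctuation (your $\mathrm{I}$) from the $G$-sample eigenfunction error (your $\mathrm{II}$) more cleanly, but the resulting bounds and the work required are the same. Your explicit remark that the positive-part truncation is a contraction toward $\beta\geq 0$ is a point the paper leaves implicit.
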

\vspace{2mm}

The first term of the rate in Theorem \ref{thm-mainBoundWeights} is the 
sample error. The second term is the approximation error. 
Smooth functions have a small value of $c_{K_\x}$, and therefore a smaller bias.
These rates depend on the decay of the eigenvalues $\lambda_J$ and the eigengaps
$\lambda_J-\lambda_{J+1}$.

As an illustration, assume a {\em fixed} kernel $K_\x$.
Then, if $n \equiv n_F=n_G$, $\lambda_J \asymp J^{-2\alpha}$ for some $\alpha>\frac{1}{2}$,
and $\lambda_J-\lambda_{J+1} \asymp J^{-2\alpha-1}$ (see \citealt{Yangicml12} for an empirical motivation), 
then the optimal smoothing is given by 
$J \asymp n^{1/(8\alpha+3) }.$ 
With this choice of $J$, the rate of convergence is
$$O_P\!\left(n^{-\frac{2\alpha}{8\alpha+3}}\right).$$
Note, however, that by changing the kernel (e.g., by using different bandwidths $\epsilon$),
one can improve the performance of the estimator: different kernels lead to different eigenvalue decays, as well as 
different $c_{K_\x}$'s (e.g., $\alpha$ depends on $\epsilon$). 
Hence, choosing the tuning parameters properly is important. 
Note that $J \times O_P\!\left(1/n_F\right)$ is the traditional variance of orthogonal classical estimators
\emph{in one dimension}, in which one expands the target function with 
respect to a basis \emph{fixed beforehand} (e.g., the Fourier basis) \citep{efromovich}. The additional term 
$J \times O_P\!\left(1/(\lambda_{J} \Delta^2_{J} n_G)\right)$ is the cost 
of estimating a basis (from data) that better captures the geometry of the data.
%There is a strong connection between our estimator and 
%classical series estimators for nonparametric curve estimation \citep{efromovich}. 
%The main difference is that traditional procedures expand the target function with 
%respect to a basis \emph{fixed beforehand} (e.g., the Fourier basis), whereas 
%we \emph{estimate} basis functions that capture the geometry of the data.
%and thus better capture the geometry of the data.

Under similar assumptions (see the Supplementary Materials for proofs and details), an analogous bound holds for the spectral series likelihood estimator truncated at $I$ and $J$:
\begin{align*}
&L\left(\widehat{\L}_{I,J},\L\right) = 
 IJ O_P\!\left(\max\left\{\frac{1}{\lambda^{\vec{x}}_J \Delta^2_{\vec{x},J} n_G},\frac{1}{\lambda^{\theta}_I \Delta^2_{\theta,I} n_F}\right\}\right) \\
&+c_{K_\theta}O\!\left(\lambda_{I}^\theta\right) +c_{K_\x}O\!\left(\lambda_{J}^\x\right), 
\end{align*}
where the superscript $\x$ and $\theta$ 
denote quantities associated with the eigenfunctions $\psi_j$'s and $\phi_i$'s, respectively. Similar interpretation holds for this bound.

\section{CONCLUSION}
\label{sec::remarks}

We have demonstrated the effectiveness of a new spectral series approach for estimating the ratio of two high-dimensional densities, with extensions to likelihood approximation in high dimensions. 
Traditional approaches typically fail for high-dimensional data (even with a dimension reduction by PCA) whereas the proposed method is designed to adapt to the underlying geometry of the data.

\comment{We showed that spectral series can be effectively used to estimate the ratio of two densities, as well as 
a likelihood function. Our approach is designed to work when the sample space has a small intrinsic dimensionality,
but may be embedded in a high-dimensional space, as is the 
case of images, spectra, tracks, and time series. In these cases, traditional
approaches typically fail. }

\section{Acknowledgments}

This work was partially supported by \emph{Conselho Nacional de Desenvolvimento Cient\'ifico e Tecnol\'ogico} (grant 200959/2010-7) 
and the Estella Loomis McCandless Professorship.

%A second alternative that
%takes advantage the ability to generate samples from the model is
%to estimate $\L(\x;\theta_0)$ for each $\theta_0$ of interest (e.g., 
%samples from the prior distribution to be used in a rejection sampler or MCMC) by 
%generating a sample from $f(\vec{x}|\theta_0)$ and using the methodology from Section \ref{sec::ratio}.
%In this case, the \emph{same} eigen-decomposition based on $G(\vec{x})$ can be used for all $\theta_0$'s, and hence this procedure would not be substantially
%slower. 

%From a theoretical perspective, an interesting question is whether it is possible to improve the bounds under, e.g., the manifold assumption.
%Although our method was motivated by such assumption, this was not directly explored in the rates that were derived. 

\bibliographystyle{plainnat}
\bibliography{bibCS}

\onecolumn{

\section*{Supplementary Material}

\textbf{Details on Photometric Redshift Problem and Sloan Digital Sky Survey Data}

In spectroscopy, 
the flux of a galaxy, i.e., the number of photons emitted per unit area per unit time, is measured as a function of wavelength.
By using these measurements it is possible to determine the redshift of a galaxy with great precision.
On the other hand, in photometry--an extremely low-resolution spectroscopy--
the photons 
are collected into a few ($\approx 5$) 
wavelength bins (also named bands). In each of these bins, the magnitudes -
which are logarithmic measurements of photon flux - are measured. 
Typical instruments measure in five bands, denoted by $u$, $g$, $r$, $i$, and $z$. 
The differences between contiguous magnitudes (also named \emph{colors}; e.g., $g-r$) are useful predictors for the
redshift of the galaxy.
Multiple estimators of the magnitudes exist, here we work with two of them: {\tt model} and {\tt cmodel} \citep{Sheldon}.
Our covariates are the 4 colors in each magnitude system, plus the raw value of $r$-band magnitude in both system.
Hence there are $4\times2=10$ covariates $\vec{x}$.

 Because it is difficult to acquire the spectroscopic redshift of faint galaxies, these data suffer from selection bias. 
 We take this into account by making the
 \emph{covariate shift} assumption: although $f_{L}(\vec{x})$ can be different 
from $f_{U}(\vec{x}),$ we assume the conditional distribution
$f(z|\vec{x})$ is the same in both populations \citep{Shimodaira2000}. 
This correction will
reweight labeled data to account for the difference in their distribution as compared to the 
unlabeled data \citep{Gretton}.

The data we use are similar to that used by \citet{Sheldon}. 
The labeled data set  contains spectroscopic information about 
435,875 galaxies from the Sloan Digital Sky Survey (SDSS). It also contains {\tt model} and {\tt cmodel} magnitudes in the bands $u$,
$g$, $r$, $i$ and $z$.
These samples were chosen by applying cuts
to both main sample galaxies  and luminous red galaxies of SDSS. 
Only galaxies in which the redshift
information has confidence level at least 0.9 were selected. 
It was also required that the galaxies were not too faint.
The unlabeled data set contains a subset of 538,974 galaxies of SDSS data. The only variables that are observed are the 
photometric magnitudes.
These samples were chosen by applying cuts to the original unlabeled SDSS data
imposing that the galaxies are not too faint
and have reasonable colors, see \citet{Sheldon} for more details.

\textbf{Additional Figures for Galaxy Likelihood Estimation Example}

 The first column of Figure \ref{fig-galaxyExamples} shows examples of galaxies with different parameter values,
 generated by GalSim toolkit.
 To make the situation more realistic, we assume we cannot observe the 
images of the uncontaminated galaxies in the first row, but instead 
only the 20 $\times$ 20 images from the last row. These are low-resolution
 images degraded by observational effects, background noise, and pixelization; see \citet{bridle2009handbook} for more details.

 \begin{figure}[h]
\vspace{.3in}
\centerline{\includegraphics[scale=1]{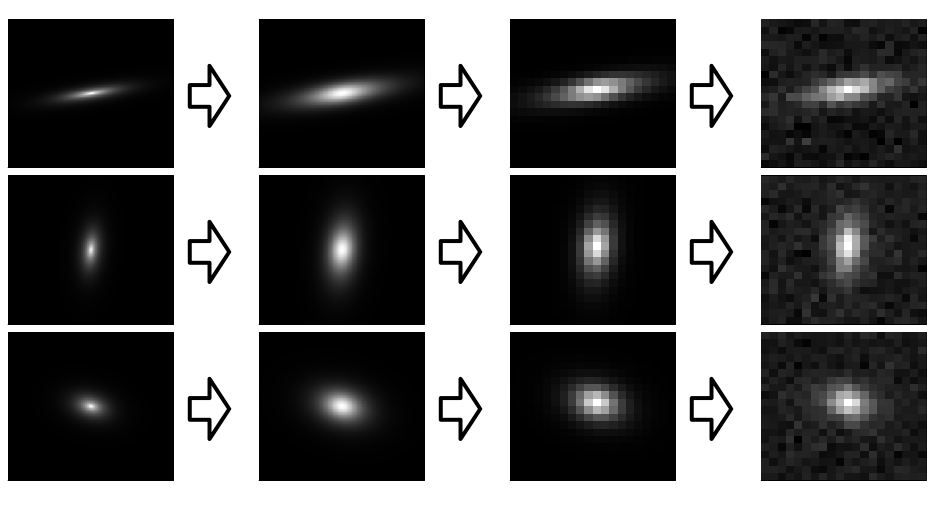}}
\vspace{.3in}
\caption{Examples of galaxies with different orientations and axis ratios. From left to right:
High-resolution, uncontaminated galaxy image; effect of PSF caused by atmosphere and telescope; pixelated image; and observed image containing additional Poisson noise. We only observe
images on the right.}
\label{fig-galaxyExamples}
\end{figure}

\begin{figure}[H]
  \begin{center}
 \includegraphics[page=4,scale=0.38]{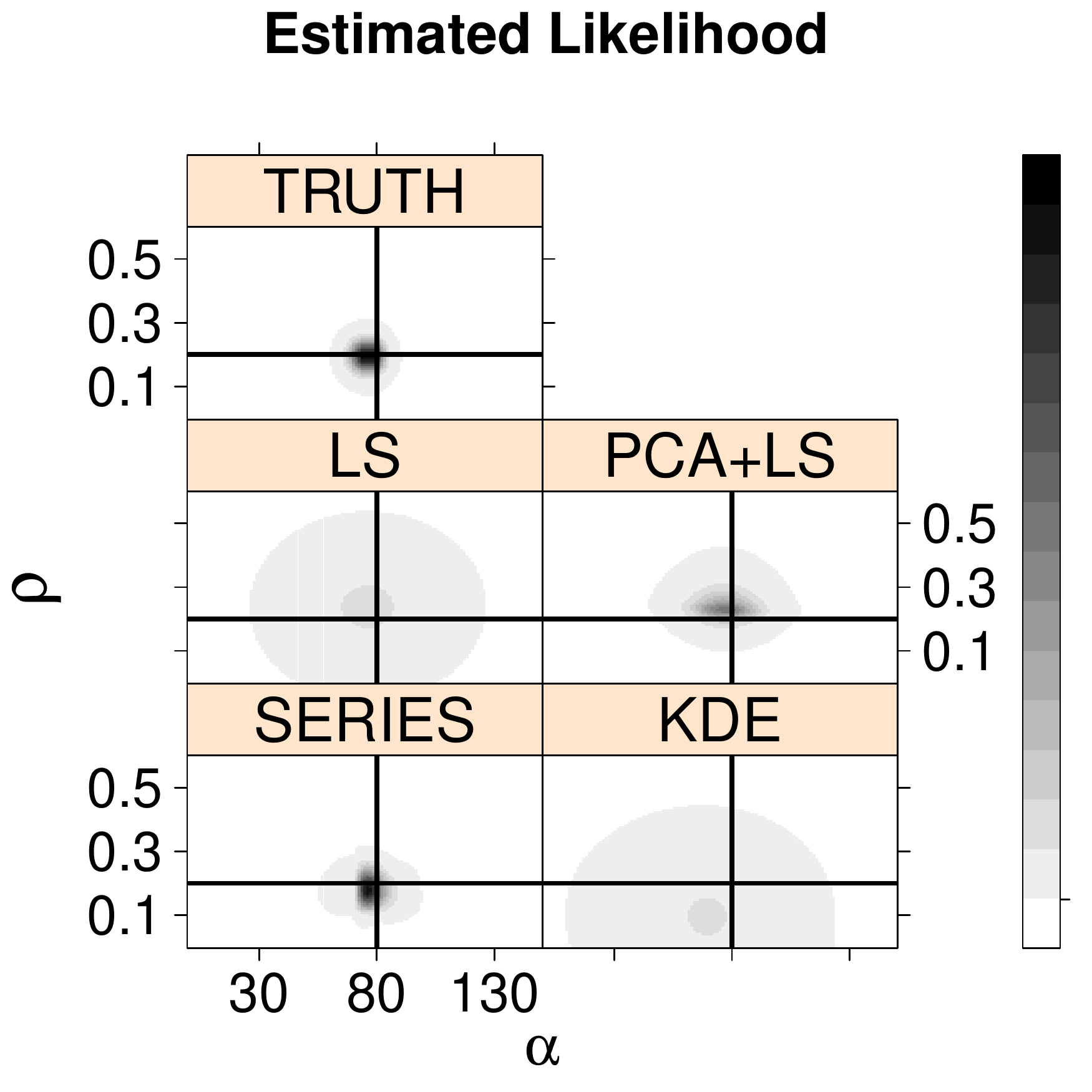}\hspace{-2mm}
 \includegraphics[page=4,scale=0.38]{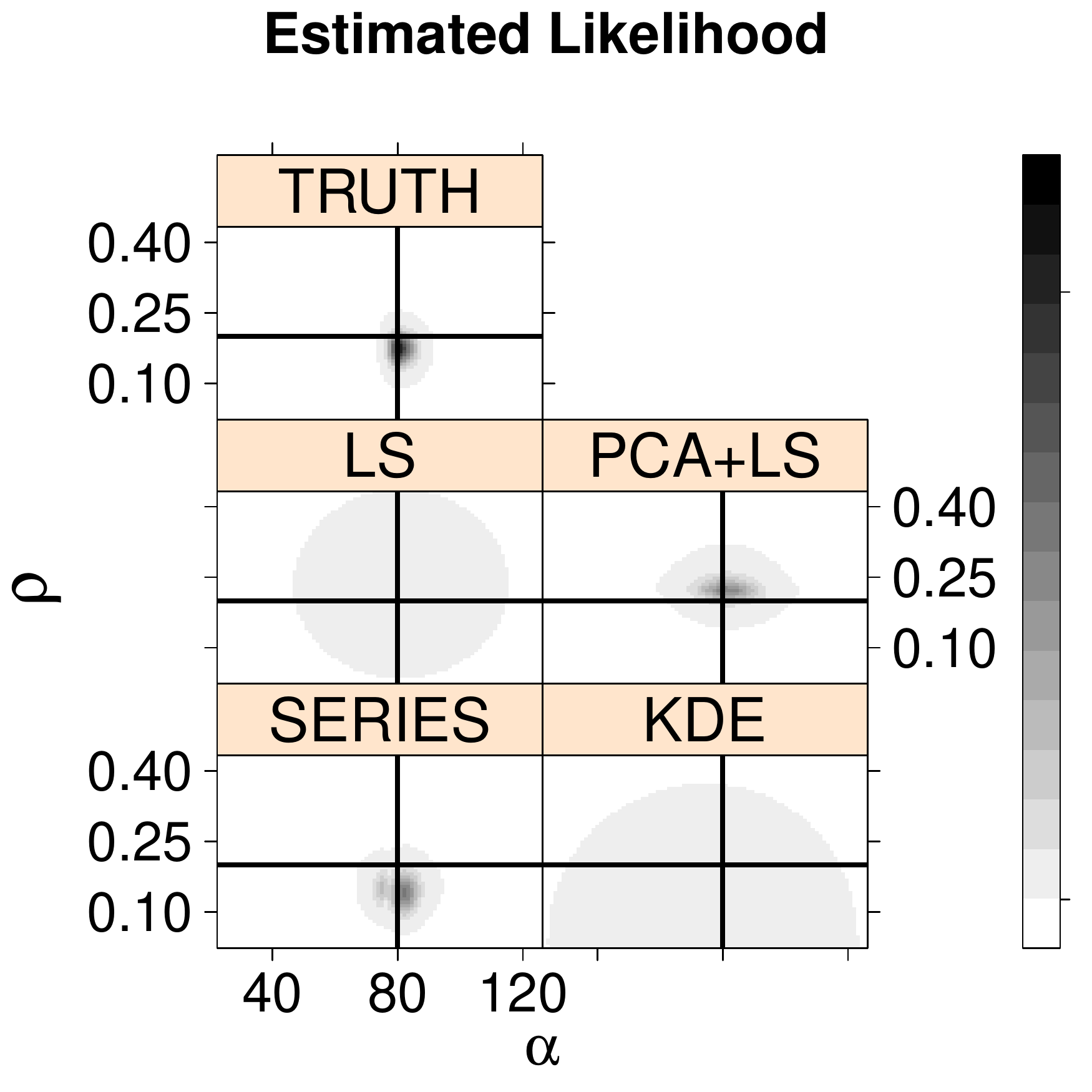}\\[-3mm]
 \includegraphics[page=4,scale=0.38]{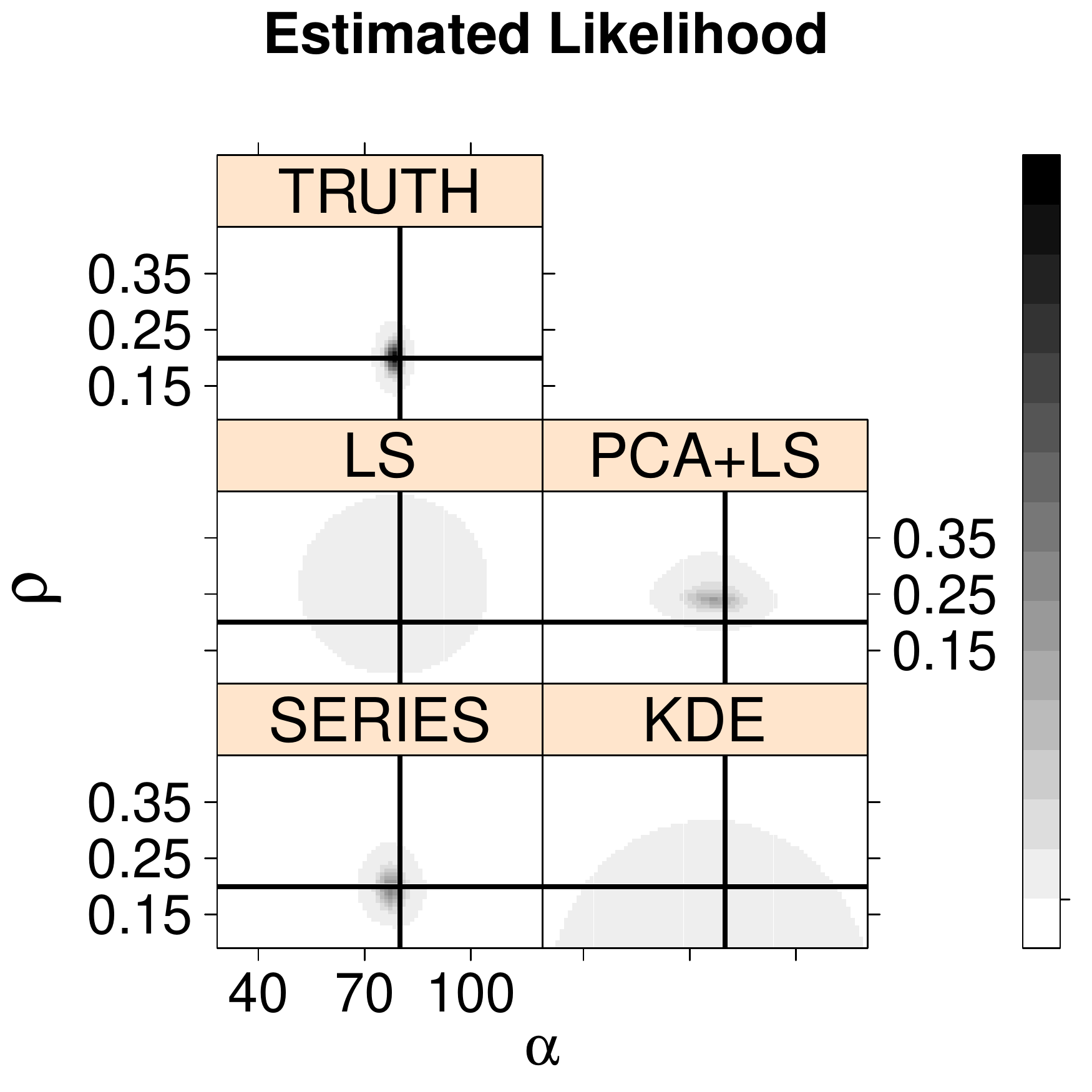}\hspace{-2mm}
 \includegraphics[page=4,scale=0.38]{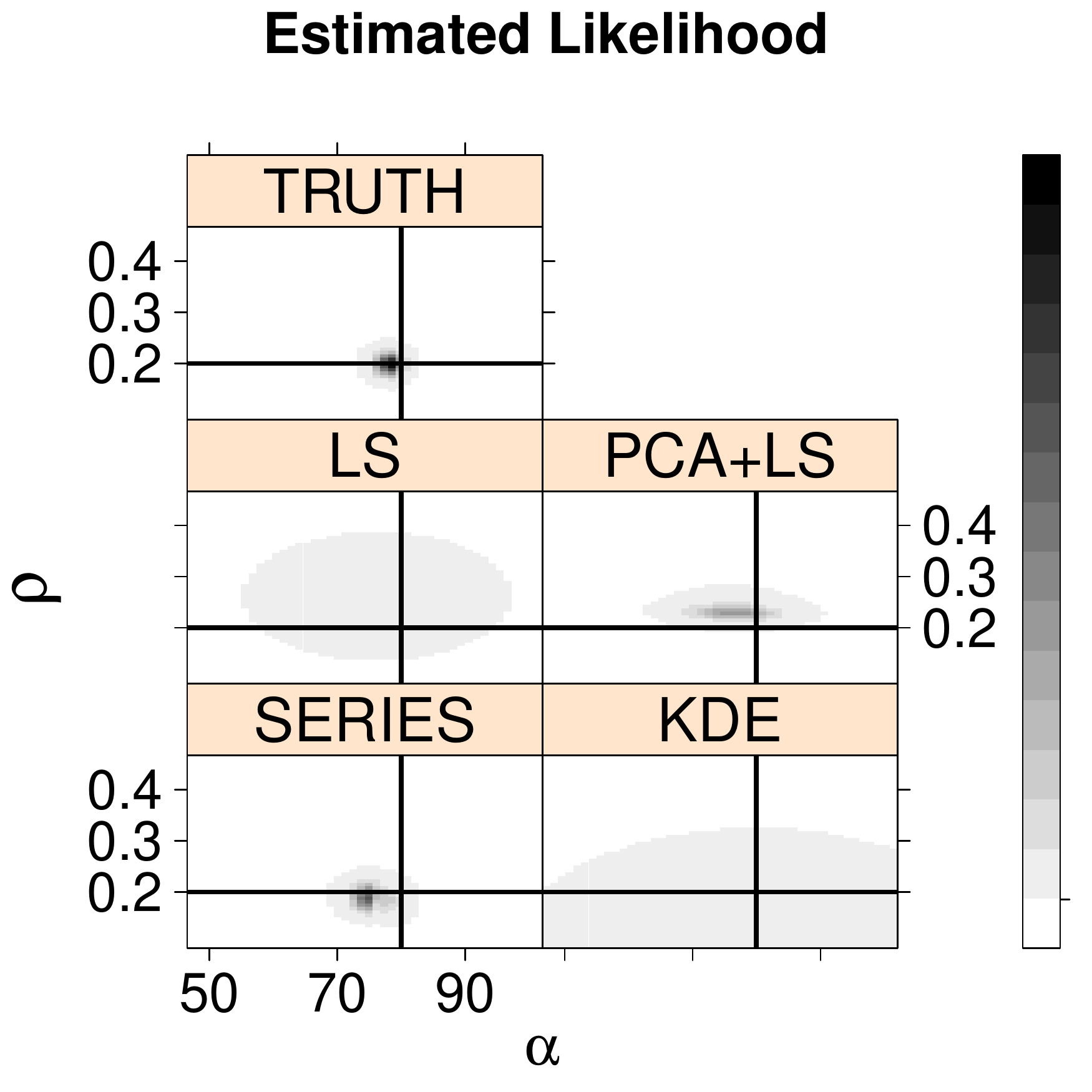}\\
  \end{center}
 \caption{Comparison of level sets of estimated likelihood function $\L(\x;(\alpha,\rho))$ for the galaxy example
 for 4 samples sizes. Horizontal and vertical lines are the true values of the parameters.
 In all cases, the spectral series estimator gets closer to the real distribution, which is uncomputable in practice.}
 \label{fig::posteriors2}
\end{figure}

\textbf{Spectral Series Density Ratio Estimator}

Here we prove the bounds from the paper.

Define the following quantities: 
\begin{align*}
&\beta_{J}(\vec{x})=\sum_{j=1}^J\beta_{j}\psi_j(\vec{x}),\quad \beta_{j}=\int \psi_j(\vec{x})dF(\vec{x})  \\ 
&\widehat{\beta}_{J}(\vec{x})=\sum_{j=1}^J\widehat{\beta}_{i}\widehat{\psi}_j(\vec{x}),\quad \widehat{\beta}_{j}=\frac{1}{n_F}\sum_{k=1}^n \widehat{\psi}_j(\vec{x}^F_k)
\end{align*}
and note that
\begin{eqnarray*}
 \int \left(\widehat{\beta}_{J}(\vec{x})-\beta(\vec{x}) \right)^2 dG(\vec{x}) 
 &\leq &  \int \left(\widehat{\beta}_{J}(\vec{x})-\beta_{J}(\vec{x})+\beta_{J}(\vec{x})- \beta(\vec{x})\right)^2 dG(\vec{x})  \\
& \leq & 2\left(\mbox{VAR}(\widehat{\beta}_{J}(\x),\beta_{J}(\x))+B(\beta_{J}(\x),\beta(\x))\right).
\end{eqnarray*}
where 
$$B(\beta(\x)_{J},\beta(\x))\equiv \iint \left(\beta_{J}(\vec{x})-\beta(\vec{x}) \right)^2 dG(\vec{x})$$ can be interpreted as a bias 
term (or approximation error) and $$\mbox{VAR}(\widehat{\beta}_{J}(\x),\beta_{J}(\x)) \equiv \iint \left(\widehat{\beta}_{J}(\vec{x})-\beta_{J}(\vec{x}) \right)^2 dG(\vec{x})dz$$
can be interpreted as a variance term. First we bound the variance.

\begin{Lemma} 
 \label{boundBeta} 
There exists $C>0$ such that
$|\beta(\vec{x})| < C$ for all $\vec{x}\in \mathcal{X}$.

\end{Lemma}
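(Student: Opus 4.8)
The plan is to deduce the boundedness of $\beta$ directly from Assumption \ref{assump-hilbertXWeight}, which asserts $\beta \in \mathcal{H}_{K_\x}$ with finite RKHS norm, together with the standing hypothesis (stated at the beginning of Section \ref{sec::ratio}) that the kernel $K_\x$ is bounded. The single ingredient I would use is the reproducing property: for each $\vec{x} \in \mathcal{X}$ the evaluation functional is represented by $K_\x(\cdot,\vec{x}) \in \mathcal{H}_{K_\x}$, so that $\beta(\vec{x}) = \langle \beta, K_\x(\cdot,\vec{x}) \rangle_{\mathcal{H}_{K_\x}}$.

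First I would apply the Cauchy--Schwarz inequality in $\mathcal{H}_{K_\x}$ to obtain $|\beta(\vec{x})| \le \|\beta\|_{\mathcal{H}_{K_\x}} \, \|K_\x(\cdot,\vec{x})\|_{\mathcal{H}_{K_\x}}$. Then I would evaluate the second factor by the reproducing property once more: $\|K_\x(\cdot,\vec{x})\|^2_{\mathcal{H}_{K_\x}} = \langle K_\x(\cdot,\vec{x}), K_\x(\cdot,\vec{x})\rangle_{\mathcal{H}_{K_\x}} = K_\x(\vec{x},\vec{x})$. Since $K_\x$ is bounded, $\kappa \equiv \sup_{\vec{z} \in \mathcal{X}} K_\x(\vec{z},\vec{z}) < \infty$ (and $\kappa = 1$ for the Gaussian kernel used in the applications). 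Combining these with Assumption \ref{assump-hilbertXWeight}, namely $\|\beta\|^2_{\mathcal{H}_{K_\x}} = c_{K_\x} < \infty$, gives $|\beta(\vec{x})| \le \sqrt{\kappa\, c_{K_\x}}$ for every $\vec{x} \in \mathcal{X}$. Taking $C$ to be any constant strictly larger than $\sqrt{\kappa\, c_{K_\x}}$ yields the strict inequality claimed in the statement.

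There is no genuine obstacle here; the only point worth flagging is that the lemma relies essentially on Assumption \ref{assump-hilbertXWeight} (the RKHS smoothness condition), not merely on the $L^2(G)$-integrability of Assumption \ref{assumpfiniteWeight} — membership in $L^2(\mathcal{X},G)$ alone would not force $\beta$ to be bounded, whereas membership in the RKHS of a bounded kernel does. I would also remark that this uniform bound $C$ is exactly the quantity that will be reused in the subsequent variance and bias computations for $\widehat{\beta}_J$.
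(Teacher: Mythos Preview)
Your proof is correct and follows exactly the paper's approach: reproducing property plus Cauchy--Schwarz in $\mathcal{H}_{K_\x}$, combined with boundedness of the kernel and Assumption~\ref{assump-hilbertXWeight}. Your write-up is in fact more explicit than the paper's (you spell out $\|K_\x(\cdot,\vec{x})\|_{\mathcal{H}_{K_\x}}^2 = K_\x(\vec{x},\vec{x})$ and identify the constant as $\sqrt{\kappa\,c_{K_\x}}$), but the argument is identical.
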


\begin{proof}
 Using Assumption \ref{assump-hilbertXWeight} and the fact the kernel is bounded, it follows from the reproducing property 
 and Cauchy-Schwartz inequality that 
 \begin{align*}
\beta(\vec{x}) = \langle\beta(.),K(\vec{x},.) \rangle_{\mathcal{H}_{K_\x}} \leq || \beta(.)||_{\mathcal{H}_{K_\x}}\sqrt{K_\x(\vec{x},\vec{x})} <C 
 \end{align*}
 for some $C>0$.
\end{proof}

\begin{Lemma} 
 \label{boundEigenVectors} For all $1\leq j \leq J$,
$$\int \left(\widehat{\psi_j}(\vec{x})-\psi_j(\vec{x}) \right)^2dG(\vec{x}) = O_P\left(\frac{1}{\lambda_j \delta^2_j n_G}\right),$$
where $\delta_j=\lambda_j-\lambda_{j+1}$. 
\end{Lemma}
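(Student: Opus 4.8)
The plan is to identify $\widehat{\psi}_j$ with the $j$-th eigenfunction of an \emph{empirical integral operator} and then compare its spectral data to that of $\mathbf{K}_\x$ by operator perturbation theory, in the spirit of \citet{Rosasco}, \citet{Bengio} and \citet{Zwald}. Define
\[
\mathbf{K}_{\x,n_G}(h)(\z)=\frac{1}{n_G}\sum_{k=1}^{n_G}K_\x(\z,\vec{x}^G_k)\,h(\vec{x}^G_k).
\]
By construction of the Gram-matrix eigenvectors $\widetilde{\psi}_j$ and of the Nystr\"om formula, $\widehat{\psi}_j$ is \emph{exactly} the eigenfunction of $\mathbf{K}_{\x,n_G}$ associated with the eigenvalue $\widehat{\lambda}_j\equiv\widehat{\ell}^{\x}_j/n_G$; the $1/\widehat{\ell}^{\x}_j$ normalization in the Nystr\"om extension is precisely what makes this identification hold. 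Both $\mathbf{K}_\x$ and $\mathbf{K}_{\x,n_G}$ may be regarded as self-adjoint, Hilbert--Schmidt operators on the RKHS $\mathcal{H}_{K_\x}$ (equivalently, after the usual identifications, on $L^2(\mathcal{X},G)$), and it is on this common space that the perturbation argument is cleanest.

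First I would bound $\|\mathbf{K}_{\x,n_G}-\mathbf{K}_\x\|_{\mathrm{HS}}$: since $K_\x$ is bounded, $\mathbf{K}_{\x,n_G}$ is an empirical mean of $n_G$ i.i.d.\ bounded, rank-one Hilbert--Schmidt operators with mean $\mathbf{K}_\x$, so a Hoeffding-type concentration inequality for Hilbert-space-valued random variables (e.g.\ \citealt{Rosasco}, or Pinelis' inequality) yields $\|\mathbf{K}_{\x,n_G}-\mathbf{K}_\x\|_{\mathrm{HS}}=O_P(n_G^{-1/2})$. Combined with Weyl's inequality this also gives $\widehat{\ell}^{\x}_j/n_G\to\lambda_j$ in probability, so dividing by the \emph{estimated} eigenvalue is, up to constants, the same as dividing by $\lambda_j$. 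Next, under Assumption \ref{decayEigenvaluesWeight} the eigenvalue $\lambda_j$ is simple with a strictly positive eigengap $\delta_j=\lambda_j-\lambda_{j+1}$, so a Davis--Kahan ($\sin\Theta$) bound for a single spectral projector (as in \citealt{Zwald}) controls $\|\widehat{P}_j-P_j\|$ by $O_P(\|\mathbf{K}_{\x,n_G}-\mathbf{K}_\x\|_{\mathrm{HS}}/\delta_j)$, where $P_j,\widehat{P}_j$ are the rank-one projectors onto the $j$-th eigenspaces; after choosing the sign of $\widehat{\psi}_j$ to align with $\psi_j$, this transfers to a bound on $\|\widehat{\psi}_j-\psi_j\|$.

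The remaining and most delicate step is the change of norm. The perturbation bound above is naturally expressed for the $\mathcal{H}_{K_\x}$-normalized eigenfunctions, whereas the lemma demands the $L^2(G)$ norm; since the inclusion $\mathcal{H}_{K_\x}\hookrightarrow L^2(\mathcal{X},G)$ rescales the $j$-th eigenfunction by $\sqrt{\lambda_j}$ (equivalently, one may track directly the $1/\widehat{\ell}^{\x}_j$ factor appearing in the Nystr\"om extension), this bookkeeping introduces an extra $\lambda_j^{-1/2}$. Putting the three ingredients together --- concentration $O_P(n_G^{-1/2})$, the eigengap factor $\delta_j^{-1}$, and the norm/normalization factor $\lambda_j^{-1/2}$ --- gives $\|\widehat{\psi}_j-\psi_j\|_{L^2(G)}=O_P\big((\sqrt{\lambda_j}\,\delta_j\,\sqrt{n_G})^{-1}\big)$, and squaring yields the claimed $O_P(1/(\lambda_j\delta_j^2 n_G))$. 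I expect the main obstacle to be exactly this last piece: correctly propagating the $\lambda_j$-dependence through the passage between the finite-dimensional Gram-matrix spectral analysis, the infinite-dimensional integral operator, and the two relevant norms, while keeping the estimated and population eigenvalues aligned; by contrast the concentration and single-vector perturbation steps are essentially off-the-shelf given Assumption \ref{decayEigenvaluesWeight}.
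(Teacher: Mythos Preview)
The paper does not actually prove this lemma; it simply cites \citet{Sinha} for the result (and later, in the proof of Lemma~\ref{lemmaboundExp2}, invokes the explicit high-probability bound $\P\!\big(\int(\widehat{\psi}_j-\psi_j)^2\,dG>16\log(2/\delta)/(\delta_j^2 n_G)\big)<\delta$ from the same reference). Your sketch is a correct outline of the perturbation-theoretic argument that underlies such results: the identification of the Nystr\"om extension with the empirical-operator eigenfunction, the $O_P(n_G^{-1/2})$ Hilbert--Schmidt concentration for $\mathbf{K}_{\x,n_G}-\mathbf{K}_\x$, a single-projector Davis--Kahan step under the simple-spectrum Assumption~\ref{decayEigenvaluesWeight}, and the $\lambda_j^{-1/2}$ rescaling when passing from the $\mathcal{H}_{K_\x}$-normalized to the $L^2(G)$-normalized eigenfunctions. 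These combine exactly to the stated $O_P\big(1/(\lambda_j\delta_j^2 n_G)\big)$, so your argument is sound and, since the paper offers only a citation, there is nothing substantive to contrast.
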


For a proof of Lemma \ref{boundEigenVectors} see, e.g., \citet{Sinha}.

\begin{Lemma} 
 \label{boundEigenVectorsF} For all $1\leq j \leq J$,
$$\int \left(\widehat{\psi_j}(\vec{x})-\psi_j(\vec{x}) \right)^2dF(\vec{x}) = O_P\left(\frac{1}{\lambda_j \delta^2_j n_G}\right),$$
where $\delta_j=\lambda_j-\lambda_{j+1}$. 
\end{Lemma}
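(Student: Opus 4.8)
The plan is to obtain Lemma~\ref{boundEigenVectorsF} as an immediate consequence of Lemma~\ref{boundEigenVectors} together with Lemma~\ref{boundBeta}, by switching the measure of integration from $F$ to $G$. Since we assume $F \ll G$, the Radon--Nikodym derivative $dF/dG$ exists and is precisely $\beta(\vec{x}) = f(\vec{x})/g(\vec{x})$. Therefore, for each fixed $j$ with $1\leq j \leq J$,
\begin{align*}
\int \left(\widehat{\psi_j}(\vec{x})-\psi_j(\vec{x})\right)^2 dF(\vec{x})
= \int \left(\widehat{\psi_j}(\vec{x})-\psi_j(\vec{x})\right)^2 \beta(\vec{x})\, dG(\vec{x}).
\end{align*}

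Next I would invoke Lemma~\ref{boundBeta}: under Assumption~\ref{assump-hilbertXWeight}, there is a finite constant $C>0$, not depending on $n_G$ or $j$, such that $|\beta(\vec{x})| < C$ uniformly over $\vec{x}\in\mathcal{X}$. Bounding $\beta(\vec{x})$ by $C$ inside the integral gives
\begin{align*}
\int \left(\widehat{\psi_j}(\vec{x})-\psi_j(\vec{x})\right)^2 dF(\vec{x})
\leq C \int \left(\widehat{\psi_j}(\vec{x})-\psi_j(\vec{x})\right)^2 dG(\vec{x}).
\end{align*}
Applying Lemma~\ref{boundEigenVectors} to the right-hand side yields $C\cdot O_P\!\left(1/(\lambda_j \delta_j^2 n_G)\right) = O_P\!\left(1/(\lambda_j \delta_j^2 n_G)\right)$, since multiplying an $O_P$ term by a fixed constant leaves its order unchanged. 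This is exactly the claimed bound.

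There is no substantive obstacle here: the only ingredient beyond Lemma~\ref{boundEigenVectors} is the uniform boundedness of the density ratio, which Assumption~\ref{assump-hilbertXWeight} supplies through the reproducing property (Lemma~\ref{boundBeta}). The one point worth checking is that the implicit constant in Lemma~\ref{boundEigenVectors} can be taken to hold for the relevant value of $j$ in the same sense as in that lemma's statement; this is inherited verbatim and requires no new argument, so I would simply state the reduction and cite the two lemmas.
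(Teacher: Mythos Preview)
Your proposal is correct and follows essentially the same argument as the paper: change measure from $F$ to $G$ using $dF = \beta\,dG$, bound $\beta$ uniformly by Lemma~\ref{boundBeta}, and apply Lemma~\ref{boundEigenVectors}. The paper's proof is just a one-line version of exactly this chain of inequalities.
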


\begin{proof}
 It follows  from Lemmas \ref{boundBeta} and \ref{boundEigenVectors} that 
 \begin{align*}
 &\int \left(\widehat{\psi_j}(\vec{x})-\psi_j(\vec{x}) \right)^2dF(\vec{x}) =
 \int \left(\widehat{\psi_j}(\vec{x})-\psi_j(\vec{x}) \right)^2\beta(\vec{x})dG(\vec{x}) \leq \\
 &C\int \left(\widehat{\psi_j}(\vec{x})-\psi_j(\vec{x}) \right)^2dG(\vec{x}) =
 O_P\left(\frac{1}{\lambda_j \delta^2_j n_G}\right)
 \end{align*}

\end{proof}

\begin{Lemma} \label{lemmaboundExp2} For all $1\leq j \leq J$, there exists $C < \infty$ that does not depend on $n_G$ such that
 $$E\left[\left(\widehat{\psi}_j(\vec{X}^F)-\psi_j(\vec{X}^F)\right)^2\right] < C.$$
\end{Lemma}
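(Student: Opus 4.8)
The plan is to apply the elementary bound $(\widehat{\psi}_j(\vec{x})-\psi_j(\vec{x}))^2\le 2\widehat{\psi}_j^2(\vec{x})+2\psi_j^2(\vec{x})$ and control the expectation of each term separately, so that it suffices to show that $E[\widehat{\psi}_j^2(\vec{X}^F)]$ and $E[\psi_j^2(\vec{X}^F)]$ are each bounded by a constant not depending on $n_G$. The second term is immediate: since $F\ll G$ with $dF/dG=\beta$, we have $E[\psi_j^2(\vec{X}^F)]=\int \psi_j^2(\vec{x})\,\beta(\vec{x})\,dG(\vec{x})\le \left(\sup_{\vec{x}}|\beta(\vec{x})|\right)\int\psi_j^2(\vec{x})\,dG(\vec{x})=\sup_{\vec{x}}|\beta(\vec{x})|$, where the last equality uses that the $\psi_j$ are orthonormal in $L^2(\mathcal{X},G)$ and finiteness of the supremum comes from Lemma \ref{boundBeta}; this quantity involves no sample from $G$.

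For the term involving $\widehat{\psi}_j$, I would first obtain a crude pointwise bound, conditional on the $G$-sample, directly from the Nystr\"om formula. By Cauchy--Schwarz, the normalization $\sum_{k=1}^{n_G}\widetilde{\psi}_j^2(\vec{x}^G_k)=1$, and boundedness of the kernel (say $K_\x\le\kappa$), one gets $|\widehat{\psi}_j(\vec{x})|\le \kappa\,n_G/\widehat{\ell}^{\x}_j=\kappa/\widehat{\lambda}_j$, where $\widehat{\lambda}_j:=\widehat{\ell}^{\x}_j/n_G$ is the $j$-th eigenvalue of the empirical operator associated with $K_\x$. Hence $E[\widehat{\psi}_j^2(\vec{X}^F)]\le \kappa^2\,E[\widehat{\lambda}_j^{-2}]$, and since $j\le J$ forces $\widehat{\lambda}_j\ge\widehat{\lambda}_J$, it remains to bound $E[\widehat{\lambda}_J^{-2}]$ uniformly in $n_G$.

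The main obstacle is this last step: controlling the lower tail of the smallest retained empirical eigenvalue. I would combine Weyl's inequality, $|\widehat{\lambda}_J-\lambda_J|\le\|\widehat{\mathbf{K}}_\x-\mathbf{K}_\x\|$, with the same Hoeffding-type concentration of the bounded-kernel empirical operator that underlies Lemma \ref{boundEigenVectors}, which yields $P(\widehat{\lambda}_J\le\lambda_J/2)\le 2\exp(-c\,n_G\,\lambda_J^2)$, where $\lambda_J>0$ by Assumption \ref{decayEigenvaluesWeight}. Splitting $E[\widehat{\lambda}_J^{-2}]$ over $\{\widehat{\lambda}_J>\lambda_J/2\}$ and its complement, the first piece contributes at most $4/\lambda_J^2$, while on the bad event one bounds $\widehat{\lambda}_J^{-2}$ by the (a.s.\ finite) reciprocal of the squared smallest Gram eigenvalue and checks that the exponentially small probability of that event keeps the product bounded; alternatively, since only an $O(1)$-in-$n_G$ bound is needed in the proof of Theorem \ref{thm-mainBoundWeights}, one may simply pass to $n_G$ large enough that $4/\lambda_J^2$ is the operative constant. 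Combining the pieces gives the claim with $C=2\kappa^2\sup_{n_G}E[\widehat{\lambda}_J^{-2}]+2\sup_{\vec{x}}|\beta(\vec{x})|$.
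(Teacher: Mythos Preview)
Your decomposition $(\widehat{\psi}_j-\psi_j)^2\le 2\widehat{\psi}_j^2+2\psi_j^2$ and the treatment of the $\psi_j^2$ piece are fine, but the argument for $E[\widehat{\psi}_j^2(\vec{X}^F)]$ has a real gap. After the pointwise bound $|\widehat{\psi}_j|\le \kappa/\widehat{\lambda}_j$, everything hinges on $E[\widehat{\lambda}_J^{-2}]<\infty$ uniformly in $n_G$. Your split over $\{\widehat{\lambda}_J>\lambda_J/2\}$ handles the good event, but on the bad event you only have that $\widehat{\lambda}_J^{-2}$ is a.s.\ finite, not deterministically bounded; multiplying a random, potentially unbounded quantity by a small probability is not a valid expectation bound. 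If you try a layer-cake argument, Weyl plus operator concentration gives $P(\widehat{\lambda}_J<s)\le 2\exp(-c\,n_G(\lambda_J-s)^2)$, which for $s\in(0,\lambda_J/2)$ saturates at a constant in $s$, so $\int_{4/\lambda_J^2}^\infty P(\widehat{\lambda}_J<t^{-1/2})\,dt$ diverges. The ``pass to $n_G$ large'' alternative does not help for the same reason: the bad event has small probability but you still have no control on $\widehat{\lambda}_J^{-2}$ there.

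The paper avoids this difficulty entirely by \emph{not} splitting into $\widehat{\psi}_j^2$ and $\psi_j^2$. It works directly with the nonnegative random variable $Z=\int(\widehat{\psi}_j-\psi_j)^2\,dG$, for which the same concentration result behind Lemma~\ref{boundEigenVectors} yields exponential tails in the level $\epsilon$, namely $P(Z>\epsilon)\le 2e^{-\delta_j^2 n_G\epsilon/16}$. Integrating the tail gives $E[Z]\le 32/(\delta_j^2 n_G)\le 32/\delta_j^2$, and then Lemma~\ref{boundBeta} transfers this from $G$ to $F$. The point is that the concentration of the \emph{difference} $\widehat{\psi}_j-\psi_j$ already decays in $\epsilon$, whereas the lower tail of $\widehat{\lambda}_J$ does not; your decomposition throws away exactly the cancellation that makes the bound work.
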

\begin{proof}
Let $\delta\in(0,1)$. From \citet{Sinha}, it follows that
\begin{align*}
 \P\left(\int\left(\widehat{\psi}_j(\vec{x})-\psi_j(\vec{x})\right)^2dG(\vec{x})> \frac{16\log\left(\frac{2}{\delta}\right)}{\delta^2_j n_G}\right) < \delta,
 \end{align*}
 and therefore for all $\epsilon>0$,
 \begin{align*}
 \P\left(\int\left(\widehat{\psi}_j(\vec{x})-\psi_j(\vec{x})\right)^2dG(\vec{x})> \epsilon \right) < 2e^{-\frac{\delta_j^2 n_G\epsilon}{16}}.
 \end{align*}
 Hence , using Lemma \ref{boundBeta},
 \begin{align*}
 &\E\left[\left(\widehat{\psi}_j(\vec{X}^F)-\psi_j(\vec{X}^F)\right)^2\right] =\E\left[\int\left(\widehat{\psi}_j(\vec{x})-\psi_j(\vec{x})\right)^2dF(\vec{x})\right] \leq C\E\left[\int\left(\widehat{\psi}_j(\vec{x})-\psi_j(\vec{x})\right)^2dG(\vec{x})\right] \\
 &\int_{0}^\infty \P\left(\int\left(\widehat{\psi}_j(\vec{x})-\psi_j(\vec{x})\right)^2dG(\vec{x})> \epsilon \right) d\epsilon \leq \int 2e^{-\frac{\delta_j^2 n_G\epsilon}{16}} d\epsilon < \int 2e^{-\frac{\delta_j^2 \epsilon}{16}} d\epsilon < \infty
 \end{align*}
\end{proof}
\vspace{2mm}

\begin{Lemma}
 \label{boundVariance} For all $1\leq j \leq J$, there exists $C < \infty$ that does not depend on $m$ such that
 $$\E\left[ \V\left[\left(\widehat{\psi}_j(\vec{X}^F)-\psi_j(\vec{X}^F)\right) \bigg\vert \vec{X}^G_1,\ldots,\vec{X}^G_{n_G} \right] \right]< C$$ 
\end{Lemma}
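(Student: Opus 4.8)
The plan is to dominate the conditional variance by the conditional second moment and then reduce to Lemma \ref{lemmaboundExp2}. Write $Y_j \equiv \widehat{\psi}_j(\vec{X}^F) - \psi_j(\vec{X}^F)$ and let $\mathcal{G}$ denote the $\sigma$-algebra generated by $\vec{X}^G_1,\ldots,\vec{X}^G_{n_G}$. The first step is to record that, conditionally on $\mathcal{G}$, the map $\vec{x} \mapsto \widehat{\psi}_j(\vec{x})$ is a fixed bounded measurable function (a finite linear combination of the bounded kernel evaluations $K_\x(\vec{x},\vec{x}^G_k)$ produced by the Nyström extension), and $\psi_j$ is bounded as well, either because it is a unit-norm eigenfunction of a bounded operator or by the same reproducing-kernel argument used in Lemma \ref{boundBeta}. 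Hence $Y_j$ has finite conditional (and unconditional) moments of every order, so all the manipulations below are legitimate.

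Second, use the elementary inequality $\V[Y_j \mid \mathcal{G}] \leq \E[Y_j^2 \mid \mathcal{G}]$, valid pointwise, and take expectations: by monotonicity and the tower property,
\[
\E\!\left[\V[Y_j \mid \mathcal{G}]\right] \;\leq\; \E\!\left[\E[Y_j^2 \mid \mathcal{G}]\right] \;=\; \E[Y_j^2] \;=\; \E\!\left[\left(\widehat{\psi}_j(\vec{X}^F)-\psi_j(\vec{X}^F)\right)^2\right].
\]
Now invoke Lemma \ref{lemmaboundExp2}, which bounds exactly this quantity by a constant $C < \infty$ not depending on $n_G$; it does not depend on $n_F$ either, since $Y_j$ is a function of a single draw $\vec{X}^F$ together with the $G$-sample. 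This yields $\E[\V[Y_j \mid \mathcal{G}]] < C$, as claimed.

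I do not anticipate a genuine obstacle here: the content is entirely in Lemma \ref{lemmaboundExp2} and, behind it, the concentration bound of \citet{Sinha} for empirical eigenfunctions. The only point that deserves explicit care is the measurability/integrability justification in the first step, so that the conditional variance is well defined and the domination step is valid; I would state the boundedness of $\widehat{\psi}_j$ and $\psi_j$ at the outset precisely for that purpose.
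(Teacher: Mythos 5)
Your proof is correct and follows essentially the same route as the paper: dominate the conditional variance by the (conditional) second moment and reduce to Lemma \ref{lemmaboundExp2}; the paper passes through $\E[\V[Y_j\mid\mathcal{G}]]\leq\V[Y_j]\leq\E[Y_j^2]$ while you use $\V[Y_j\mid\mathcal{G}]\leq\E[Y_j^2\mid\mathcal{G}]$ plus the tower property, which is the same estimate. Your added remark on boundedness and measurability of $\widehat{\psi}_j$ and $\psi_j$ is a harmless (and reasonable) extra justification, not a departure from the paper's argument.
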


\begin{proof}
We have that
\begin{align*}
 &\E\left[\V\left[\left(\widehat{\psi}_j(\vec{X}^F)-\psi_j(\vec{X}^F)\right) \bigg\vert \vec{X}^G_1,\ldots,\vec{X}^G_{n_G} \right]\right]\\
 & \leq \V\left[ \widehat{\psi}_j(\vec{X}^F)-\psi_j(\vec{X}^F) \right]  \leq  \E\left[\left(\widehat{\psi}_j(\vec{X}^F)-\psi_j(\vec{X}^F)\right)^2\right]
\end{align*}
The result follows from Lemma \ref{lemmaboundExp2}.
\end{proof}
\vspace{2mm}

\begin{Lemma} 
\label{boundEigenVectorsSample} For all $1\leq j \leq J$,
$$ \left[ \frac{1}{n}\sum_{k=1}^n\left(\widehat{\psi}_j(\vec{X}^F_k)-\psi_j(\vec{X}^F_k)\right)-\int\left(\widehat{\psi}_j(\vec{x})-\psi_j(\vec{x})\right)dF(\vec{x})\right]^2=O_P\left(\frac{1}{n}\right)$$
\end{Lemma}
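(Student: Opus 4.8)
The plan is to condition on the sample $\vec{X}^G_1,\ldots,\vec{X}^G_{n_G}$ from $G$, which is independent of the sample $\vec{X}^F_1,\ldots,\vec{X}^F_n$ from $F$ and which completely determines the estimated eigenfunction $\widehat{\psi}_j$ (through the Gram matrix and the Nystr\"om extension). Writing $Y_k \equiv \widehat{\psi}_j(\vec{X}^F_k)-\psi_j(\vec{X}^F_k)$, the quantity inside the brackets is precisely $\frac{1}{n}\sum_{k=1}^n Y_k - \E\!\left[Y_1 \mid \vec{X}^G_1,\ldots,\vec{X}^G_{n_G}\right]$, i.e.\ a centered average of random variables that are i.i.d.\ once we condition on the $G$-sample. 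So the statement reduces to the standard $1/n$ rate for the mean-square error of a sample average, with the wrinkle that the summands depend on an auxiliary (independent) sample.

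First I would compute the conditional second moment. By the independence of the two samples, conditional on the $G$-sample the $Y_k$ are i.i.d., and hence
\[
\E\!\left[\left(\frac{1}{n}\sum_{k=1}^n Y_k - \E\!\left[Y_1\mid \cdot\right]\right)^{\!2} \,\bigg|\, \vec{X}^G_1,\ldots,\vec{X}^G_{n_G}\right] = \frac{1}{n}\,\V\!\left[Y_1 \mid \vec{X}^G_1,\ldots,\vec{X}^G_{n_G}\right].
\]
Next I would take expectation over the $G$-sample and invoke Lemma \ref{boundVariance}, which supplies a constant $C<\infty$ not depending on $n_G$ (and therefore not on $n$) such that $\E\!\left[\V\!\left[Y_1 \mid \vec{X}^G_1,\ldots,\vec{X}^G_{n_G}\right]\right] < C$. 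Combining the two displays by the tower property gives that the unconditional expectation of the squared bracket is at most $C/n$.

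Finally, since the squared bracket is nonnegative, Markov's inequality upgrades this $L^1$ bound to the stochastic-order statement $O_P(1/n)$, which is exactly the claim. The only point requiring care is the conditioning argument itself, namely that $\widehat{\psi}_j$ is measurable with respect to the $\sigma$-field generated by $\vec{X}^G_1,\ldots,\vec{X}^G_{n_G}$ and that the $\vec{X}^F_k$ remain i.i.d.\ given that $\sigma$-field; both are immediate from the construction of the estimator and the independence of the two samples. I do not anticipate any genuine obstacle here — the substantive work, the uniform-in-$n_G$ bound on the conditional variance, has already been carried out in Lemma \ref{boundVariance} (via Lemmas \ref{boundBeta} and \ref{lemmaboundExp2}).
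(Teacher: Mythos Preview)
Your proposal is correct and follows essentially the same route as the paper: condition on the $G$-sample so that the $Y_k$ are i.i.d., control the squared centered average by $\frac{1}{n}\V[Y_1\mid \cdot]$, take expectation over the $G$-sample via Lemma~\ref{boundVariance}, and convert the resulting $L^1$ bound to $O_P(1/n)$ by Markov/Chebyshev. The paper applies Chebyshev conditionally and then averages the probability bound, whereas you average the second moment first and then apply Markov, but these are the same argument in a different order.
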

\begin{proof}
 By Chebyshev's inequality it holds that for all $M>0$ 
 \begin{align*}
&\P\left(\left|\frac{1}{n}\sum_{k=1}^n\left(\widehat{\psi}_j(\vec{X}^F_k)-\psi_j(\vec{X}^F_k)\right)-\int\left(\widehat{\psi}_j(\vec{x})-\psi_j(\vec{x})\right)dF(\vec{x}) \right|^2>M \bigg\vert \vec{X}^G_1,\ldots,\vec{X}^G_{n_G} \right) \leq \\
&\frac{1}{n_FM} \V\left[\left(\widehat{\psi}_j(\vec{X}^F)-\psi_j(\vec{X}^F)\right) \bigg\vert \vec{X}^G_1,\ldots,\vec{X}^G_{n_G} \right]. 
\end{align*}
The conclusion follows from taking an expectations with respect to sample from $G$ on both sides of the equation and using Lemma \ref{boundVariance}.

\end{proof}
\vspace{2mm}

Note that $\widehat{\psi}'$s are random functions, and therefore the proof of Lemma \ref{boundEigenVectorsSample}
relies on the fact that these functions are estimated using a different sample than $\vec{X}_1,\ldots,\vec{X}_n$.

\begin{Lemma} 
\label{boundBetaHat} For all $1\leq j \leq J$,
$$\left(\widehat{\beta}_{j}-\beta_{j}\right)^2 = O_P\left(\frac{1}{n}\right)+ O_P\left(\frac{1}{\lambda_j \delta^2_jn_G}\right).$$
\end{Lemma}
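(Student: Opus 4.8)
The plan is to split $\widehat{\beta}_j-\beta_j$ into three pieces, each already controlled by the preceding lemmas, and then combine via $(a+b+c)^2\le 3(a^2+b^2+c^2)$. Concretely, since $\beta_j=\int\psi_j\,dF$ and $\widehat{\beta}_j=\frac1n\sum_{k=1}^n\widehat{\psi}_j(\vec{x}^F_k)$, I would use the algebraic identity
\begin{align*}
\widehat{\beta}_j-\beta_j
&=\underbrace{\left[\frac1n\sum_{k=1}^n\!\left(\widehat{\psi}_j(\vec{x}^F_k)-\psi_j(\vec{x}^F_k)\right)-\int\!\left(\widehat{\psi}_j(\vec{x})-\psi_j(\vec{x})\right)dF(\vec{x})\right]}_{(A)} \\
&\quad +\underbrace{\int\!\left(\widehat{\psi}_j(\vec{x})-\psi_j(\vec{x})\right)dF(\vec{x})}_{(B)}
+\underbrace{\left[\frac1n\sum_{k=1}^n\psi_j(\vec{x}^F_k)-\int\psi_j(\vec{x})\,dF(\vec{x})\right]}_{(C)}.
\end{align*}

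Then I would bound the three terms separately. Term $(A)^2=O_P(1/n)$ is exactly the content of Lemma \ref{boundEigenVectorsSample} (the centered sample average of $\widehat{\psi}_j-\psi_j$ over the $F$-sample). For $(B)$, Jensen's inequality on the probability measure $dF$ gives $(B)^2\le\int(\widehat{\psi}_j-\psi_j)^2\,dF(\vec{x})$, which is $O_P\!\left(1/(\lambda_j\delta_j^2 n_G)\right)$ by Lemma \ref{boundEigenVectorsF}. For $(C)$, note $\E_F[\psi_j]=\beta_j$ and $\V_F[\psi_j]\le\E_F[\psi_j^2]=\int\psi_j^2\beta\,dG\le C\int\psi_j^2\,dG=C$ using Lemma \ref{boundBeta} and the $G$-orthonormality of the $\psi_j$; hence $(C)$ is a centered i.i.d.\ average of variables with variance $O(1)$, and Chebyshev's inequality yields $(C)^2=O_P(1/n)$. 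Summing the three bounds gives the claim.

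I do not expect a genuine obstacle here: the lemma is essentially an assembly of Lemmas \ref{boundBeta}, \ref{boundEigenVectorsF} and \ref{boundEigenVectorsSample}. The only subtlety worth flagging is that $\widehat{\psi}_j$ is itself random through the sample $\vec{x}^G_1,\ldots,\vec{x}^G_{n_G}$, so the treatment of $(A)$ must condition on that sample before applying a concentration bound over $\vec{x}^F_1,\ldots,\vec{x}^F_{n_F}$ — but this is precisely what Lemma \ref{boundEigenVectorsSample} already does, and the independence of the two samples makes the corresponding step for $(C)$ immediate.
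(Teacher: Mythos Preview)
Your proposal is correct and follows essentially the same approach as the paper: the paper first splits $\widehat{\beta}_j-\beta_j$ into your term $(C)$ plus $\frac{1}{n}\sum_k(\widehat{\psi}_j-\psi_j)(\vec{x}^F_k)$ via $(a+b)^2\le 2(a^2+b^2)$, and then further splits the latter into your $(A)$ and $(B)$ the same way, invoking Lemmas \ref{boundEigenVectorsSample} and \ref{boundEigenVectorsF} exactly as you do. The only cosmetic difference is that you do a single three-term split with $(a+b+c)^2\le 3(a^2+b^2+c^2)$ and spell out the variance bound for $(C)$ via Lemma \ref{boundBeta}, whereas the paper nests two binary splits and leaves the $O_P(1/n)$ for $(C)$ unelaborated.
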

\begin{proof} It holds that
 \begin{align*}
\frac{1}{2}\left(\widehat{\beta}_{j}-\beta_{j}\right)^2 &\leq \left( \frac{1}{n_F}\sum_{k=1}^{n_F}\psi_j(\vec{X}^F_k)-\beta_{j}\right)^2+ \left( \frac{1}{n_F}\sum_{k=1}^{n_F}(\widehat{\psi}_j(\vec{X}^F_k)-\psi_j(\vec{X}^F_k))\right)^2.
 \end{align*}
 The first term is $O_P\left(\frac{1}{n_F}\right)$. 
 The second term divided by two is bounded by
  \begin{align*}
&\frac{1}{2}\left(  \frac{1}{n_F}\sum_{k=1}^{n_F}(\widehat{\psi}_j(\vec{X}^F_k)-\psi_j(\vec{X}^F_k))-\int (\widehat{\psi}_j(\vec{x})-\psi_j(\vec{x}))dF(\vec{x})+\int (\widehat{\psi}_j(\vec{x})-\psi_j(\vec{x}))dF(\vec{x}) \right)^2\\
&\leq \left( \frac{1}{n_F}\sum_{k=1}^{n_F}(\widehat{\psi}_j(\vec{X}^F_k)-\psi_j(\vec{X}^F_k))-\int (\widehat{\psi}_j(\vec{x})-\psi_j(\vec{x}))dF(\vec{x})\right)^2+\int (\widehat{\psi}_j(\vec{x})-\psi_j(\vec{x}))^2dF(\vec{x})\\
&=O_P\left(\frac{1}{n_F}\right)+O_P\left(\frac{1}{\lambda_j \delta^2_jn_G}\right)
 \end{align*}
The result follows from Lemma \ref{boundEigenVectorsF}.
 
\end{proof}
\vspace{2mm}

\begin{Lemma}\textbf{[\citealt{Sinha}, Corollary 1]}
\label{boundPsiHat} 
Under the stated assumptions,
%$$\sup_J \sqrt{\sum_{j=1}^J \int \widehat{\psi}_j^2(x)dP(x)} \leq O_P\left(\frac{1}{\sqrt{n}} \right)$$.
$$ \int \widehat{\psi}_j^2(\vec{x})dG(\vec{x}) = O_P\left(\frac{1}{\lambda_j \Delta^2_J n_G}\right)+1$$
and
$$ \int \widehat{\psi}_i(\vec{x}) \widehat{\psi}_j(\vec{x})dG(\vec{x}) = O_P\left(\left(\frac{1}{\sqrt{\lambda_i}}+\frac{1}{\sqrt{\lambda_j}}\right) \frac{1}{\Delta_J \sqrt{n_G}}\right)$$
where $\Delta_J=\min_{1\leq j \leq J}\delta_j$.
\end{Lemma}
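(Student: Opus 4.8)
\textbf{Proof proposal (Lemma \ref{boundPsiHat}).} The statement is an ``approximate orthonormality'' claim: the estimated eigenfunctions $\widehat\psi_j$ should behave in $L^2(\mathcal X,G)$ the way the population eigenfunctions $\psi_j$ do, and the latter are exactly orthonormal there. The plan is to reduce everything to two facts already in hand: the per-eigenfunction $L^2(\mathcal X,G)$ consistency rate (Lemma \ref{boundEigenVectors}) and the exact identity $\int\psi_i\psi_j\,dG=\I(i=j)$. Write $e_j\equiv\widehat\psi_j-\psi_j$ and, for $f,g\in L^2(\mathcal X,G)$, set $\langle f,g\rangle_G\equiv\int f(\x)g(\x)\,dG(\x)$ and $\|f\|_G^2\equiv\langle f,f\rangle_G$. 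The first step is the bookkeeping observation that, by Lemma \ref{boundEigenVectors} together with $\delta_j\ge\Delta_J$ for every $1\le j\le J$, one has $\|e_j\|_G^2=O_P(1/(\lambda_j\delta_j^2 n_G))=O_P(1/(\lambda_j\Delta_J^2 n_G))$, and hence $\|e_j\|_G=O_P(1/(\sqrt{\lambda_j}\,\Delta_J\sqrt{n_G}))$.

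For the diagonal term I would expand $\widehat\psi_j=\psi_j+e_j$, obtaining
\begin{align*}
\int\widehat\psi_j^2\,dG=\|\psi_j\|_G^2+2\langle\psi_j,e_j\rangle_G+\|e_j\|_G^2=1+2\langle\psi_j,e_j\rangle_G+O_P\!\left(\frac{1}{\lambda_j\Delta_J^2 n_G}\right),
\end{align*}
so the remaining task is to show the cross term is of the same (second) order. A bare Cauchy--Schwarz bound only yields $|\langle\psi_j,e_j\rangle_G|\le\|e_j\|_G=O_P((\lambda_j\Delta_J^2 n_G)^{-1/2})$, which is not sharp enough; here I would invoke the finer perturbation structure underlying Lemma \ref{boundEigenVectors} (cf.\ \citet{Sinha}): since $\lambda_j$ is simple (Assumption \ref{decayEigenvaluesWeight}), the leading-order perturbation of $\psi_j$ lies in its orthogonal complement, so $\langle\psi_j,\widehat\psi_j\rangle_G=1+O_P(\|e_j\|_G^2)$, i.e.\ $\langle\psi_j,e_j\rangle_G=O_P(1/(\lambda_j\Delta_J^2 n_G))$. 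Substituting back gives the first claim.

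For the off-diagonal term, $i\ne j$, I would expand both factors,
\begin{align*}
\int\widehat\psi_i\widehat\psi_j\,dG=\langle\psi_i,\psi_j\rangle_G+\langle\psi_i,e_j\rangle_G+\langle e_i,\psi_j\rangle_G+\langle e_i,e_j\rangle_G,
\end{align*}
where $\langle\psi_i,\psi_j\rangle_G=0$ by orthonormality. Cauchy--Schwarz with $\|\psi_i\|_G=\|\psi_j\|_G=1$ bounds the second and third summands by $\|e_j\|_G=O_P(1/(\sqrt{\lambda_j}\Delta_J\sqrt{n_G}))$ and $\|e_i\|_G=O_P(1/(\sqrt{\lambda_i}\Delta_J\sqrt{n_G}))$ respectively, while the fourth is $\le\|e_i\|_G\|e_j\|_G=O_P(1/(\sqrt{\lambda_i\lambda_j}\,\Delta_J^2 n_G))$, of strictly smaller order (a product of two $n_G^{-1/2}$-rate terms). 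Adding the three contributions gives $\int\widehat\psi_i\widehat\psi_j\,dG=O_P((\lambda_i^{-1/2}+\lambda_j^{-1/2})/(\Delta_J\sqrt{n_G}))$, the second claim.

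The one genuinely delicate point is the diagonal cross term: obtaining $\langle\psi_j,\widehat\psi_j-\psi_j\rangle_G$ at the second-order rate $O_P(1/(\lambda_j\Delta_J^2 n_G))$ rather than the square-root rate that naive Cauchy--Schwarz supplies. This is exactly the refinement furnished by the eigenfunction-perturbation analysis of \citet{Sinha} (orthogonality of the leading perturbation to the unperturbed eigenfunction); with that in hand the rest is the routine expansions and Cauchy--Schwarz estimates above, and one should also note that replacing each $\delta_j$ by the uniform lower bound $\Delta_J$ is what makes the bounds hold simultaneously over $1\le i,j\le J$.
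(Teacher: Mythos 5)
You should first note that the paper itself does not prove Lemma \ref{boundPsiHat}: it is imported verbatim as Corollary 1 of \citet{Sinha}, so any self-contained derivation goes beyond what the authors do. Your off-diagonal argument is sound as far as it goes: writing $e_j=\widehat{\psi}_j-\psi_j$ and $\|h\|_G^2=\int h^2\,dG$, Lemma \ref{boundEigenVectors} together with $\delta_j\ge\Delta_J$ and Cauchy--Schwarz against $\|\psi_i\|_G=\|\psi_j\|_G=1$ yields the claimed $O_P\left(\left(\lambda_i^{-1/2}+\lambda_j^{-1/2}\right)\frac{1}{\Delta_J\sqrt{n_G}}\right)$ bound, up to the usual sign-alignment convention for eigenfunctions and the (minor) caveat that the product term $\|e_i\|_G\|e_j\|_G$ is only negligible in the asymptotic regime where these rates tend to zero.

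The diagonal claim is where there is a genuine gap. You correctly identify that Cauchy--Schwarz only gives an $n_G^{-1/2}$-scale error for the cross term, while the Lemma asserts an $n_G^{-1}$-scale error, but the fix you invoke is not available from what you have. The fact that ``the leading-order perturbation of $\psi_j$ is orthogonal to $\psi_j$'' is an identity for a perturbed eigenvector that is \emph{exactly normalized in $L^2(\mathcal{X},G)$}: in that case $\int\psi_j\widehat{\psi}_j\,dG=1-\tfrac{1}{2}\|e_j\|_G^2$ follows from expanding $\|e_j\|_G^2$. Here, however, $\widehat{\psi}_j$ is the Nystr\"om extension of an eigenvector normalized with respect to the \emph{empirical} measure; its $L^2(G)$ norm is not $1$, and the deviation $\int\widehat{\psi}_j^2\,dG-1$ is exactly the quantity the first display of the Lemma bounds. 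Indeed, since $\int\widehat{\psi}_j^2\,dG=1+2\int\psi_j e_j\,dG+\|e_j\|_G^2$ and Lemma \ref{boundEigenVectors} already controls $\|e_j\|_G^2$ at the rate $O_P\left(\frac{1}{\lambda_j\Delta_J^2 n_G}\right)$, bounding the cross term at that rate is \emph{equivalent} to the diagonal claim, so your argument is circular unless the cross term is controlled by an independent route; the simplicity of $\lambda_j$ (Assumption \ref{decayEigenvaluesWeight}) and the $L^2$ rate alone cannot produce an $O_P(1/n_G)$ bound for it. That independent route is precisely the operator-level perturbation analysis in \citet{Sinha}, which is what the paper cites. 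In short: your proposal genuinely derives the off-diagonal statement from Lemma \ref{boundEigenVectors}, but for the diagonal statement it amounts to re-citing Sinha, and the heuristic offered in its place is not a valid substitute.
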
 

\begin{Lemma} 
\label{lemma-1stTerm} 
Let $h(\vec{x})=\sum_{j=1}^J \beta_{j}\widehat{\psi}_j(\vec{x}).$ Then

$$\int \left|\widehat{\beta}_{J}(\vec{x})-h(\vec{x}) \right|^2 dG(\vec{x}) = J \left(O_P\left(\frac{1}{n_F}\right)+ O_P\left(\frac{1}{\lambda_J \Delta^2_J n_G}\right) \right).$$ 
\end{Lemma}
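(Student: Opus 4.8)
The plan is to express the left-hand side as a quadratic form in the estimated basis $\{\widehat{\psi}_j\}_{j=1}^J$ and then exploit the fact that this basis is \emph{nearly} orthonormal in $L^2(\mathcal{X},G)$. Put $a_j\equiv\widehat{\beta}_j-\beta_j$. Since $\widehat{\beta}_J$ and $h$ are linear combinations of the same functions $\widehat{\psi}_1,\dots,\widehat{\psi}_J$, we have $\widehat{\beta}_J(\x)-h(\x)=\sum_{j=1}^J a_j\widehat{\psi}_j(\x)$, and therefore
\[
\int\bigl(\widehat{\beta}_J(\x)-h(\x)\bigr)^2\,dG(\x)=\sum_{i,j=1}^J a_i a_j\int\widehat{\psi}_i(\x)\widehat{\psi}_j(\x)\,dG(\x)=\vec{a}^{\top}M\vec{a}\le\lambda_{\max}(M)\,\|\vec{a}\|^2,
\]
where $\vec{a}=(a_1,\dots,a_J)^{\top}$ and $M$ is the $J\times J$ Gram matrix of $\{\widehat{\psi}_j\}_{j=1}^J$ in $L^2(\mathcal{X},G)$. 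It thus suffices to bound $\|\vec{a}\|^2$ and $\lambda_{\max}(M)$ separately.

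For the first factor, Lemma \ref{boundBetaHat} gives $a_j^2=O_P(1/n_F)+O_P(1/(\lambda_j\delta_j^2 n_G))$ for each $j\le J$; since $\lambda_j\ge\lambda_J$ and $\delta_j\ge\Delta_J$ for $j\le J$, every summand is $O_P(1/n_F)+O_P(1/(\lambda_J\Delta_J^2 n_G))$, and summing the $J$ of them gives $\|\vec{a}\|^2=J\bigl(O_P(1/n_F)+O_P(1/(\lambda_J\Delta_J^2 n_G))\bigr)$ --- already the target order. For the second factor I would use that the true eigenfunctions $\{\psi_j\}$ are \emph{exactly} orthonormal in $L^2(\mathcal{X},G)$: for any $c$ with $\|c\|=1$, the triangle inequality and Cauchy--Schwarz yield
\[
\Bigl\|\sum_{j=1}^J c_j\widehat{\psi}_j\Bigr\|_{L^2(G)}\le\Bigl\|\sum_{j=1}^J c_j\psi_j\Bigr\|_{L^2(G)}+\Bigl\|\sum_{j=1}^J c_j(\widehat{\psi}_j-\psi_j)\Bigr\|_{L^2(G)}\le 1+\Bigl(\sum_{j=1}^J\|\widehat{\psi}_j-\psi_j\|_{L^2(G)}^2\Bigr)^{1/2}.
\]
By Lemma \ref{boundEigenVectors}, $\sum_{j=1}^J\|\widehat{\psi}_j-\psi_j\|_{L^2(G)}^2=\sum_{j=1}^J O_P(1/(\lambda_j\delta_j^2 n_G))=O_P\bigl(J/(\lambda_J\Delta_J^2 n_G)\bigr)$, which vanishes in any regime where the bound of Theorem \ref{thm-mainBoundWeights} is informative; taking the supremum over $c$ gives $\lambda_{\max}(M)=(1+o_P(1))^2=O_P(1)$. (Equivalently, one can bound the off-diagonal entries of $M$ directly via the second estimate of Lemma \ref{boundPsiHat} and a Gershgorin-type argument.) Plugging the two bounds into $\vec{a}^{\top}M\vec{a}\le\lambda_{\max}(M)\|\vec{a}\|^2$ finishes the proof.

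The main obstacle is precisely this control of $\lambda_{\max}(M)$: one must show that $\{\widehat{\psi}_j\}_{j=1}^J$ behaves as a near-orthonormal \emph{system}, not merely that each $\widehat{\psi}_j$ is individually close to $\psi_j$. A naive route --- applying the triangle inequality in $L^2(G)$ to $\|\sum_j a_j\widehat{\psi}_j\|$ and then Cauchy--Schwarz --- loses an extra factor of $J$ and only yields the weaker rate $J^2\bigl(O_P(1/n_F)+O_P(1/(\lambda_J\Delta_J^2 n_G))\bigr)$; removing this spurious factor is exactly what the near-orthogonality supplied by Lemmas \ref{boundEigenVectors} and \ref{boundPsiHat} buys.
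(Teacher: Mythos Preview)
Your argument is correct and reaches the same conclusion, but the mechanics differ from the paper's. The paper also starts from the quadratic form $\sum_{i,j} a_i a_j \int \widehat{\psi}_i \widehat{\psi}_j\,dG$, but instead of passing to $\lambda_{\max}(M)$ it splits diagonal from off-diagonal and bounds the cross terms by a single Cauchy--Schwarz over index pairs:
\[
\sum_{j\neq l} a_j a_l\!\int \widehat{\psi}_j\widehat{\psi}_l\,dG \;\le\; \Bigl[\sum_{j} a_j^2\Bigr]\sqrt{\sum_{j\neq l}\Bigl(\int \widehat{\psi}_j\widehat{\psi}_l\,dG\Bigr)^{2}},
\]
and then invokes Lemma~\ref{boundPsiHat} for both the diagonal norms $\int \widehat{\psi}_j^2\,dG$ and the off-diagonal inner products. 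This is essentially the ``Gershgorin-type'' alternative you mention parenthetically, carried out with Cauchy--Schwarz rather than a row-sum bound. Your route via $\lambda_{\max}(M)$ and the perturbation estimate $\|\sum c_j\widehat{\psi}_j\|\le 1+(\sum_j\|\widehat{\psi}_j-\psi_j\|^2)^{1/2}$ is a bit cleaner conceptually and uses only Lemma~\ref{boundEigenVectors}, not the cross-term part of Lemma~\ref{boundPsiHat}; the paper's entrywise route is more direct and avoids introducing the operator norm. Both arguments end up requiring the same implicit regime condition --- that $J/(\lambda_J\Delta_J^2 n_G)$ stays bounded --- in order to absorb the near-orthogonality error into an $O_P(1)$ factor; you flag this explicitly, while in the paper it is hidden in the phrase ``the result follows from Lemmas~\ref{boundBetaHat} and~\ref{boundPsiHat}.''
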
 
\begin{proof} 

\begin{align*}
&\int \left|\widehat{\beta}_{J}(z|\vec{x})-h(\vec{x})\right|^2 dG(\vec{x}) \\
&=\sum_{j=1}^J \left(\widehat{\beta}_{j}-\beta_{j}\right)^2 \int \widehat{\psi}_j^2(\vec{x})dG(\vec{x})+\sum_{j=1}^J \sum_{l=1, l \neq j}^J \left(\widehat{\beta}_{j}-\beta_{j}\right)\left(\widehat{\beta}_{l}-\beta_{l}\right) \int \widehat{\psi}_j(\vec{x}) \widehat{\psi_l}(\vec{x})dG(\vec{x}) \\
& \sum_{j=1}^J \left(\widehat{\beta}_{j}-\beta_{j}\right)^2 \int \widehat{\psi}_j^2(\vec{x})dG(\vec{x})+\left[ \sum_{j=1}^J  \left(\widehat{\beta}_{j}-\beta_{j}\right)^2 \right] \left[ \sqrt{\sum_{j=1}^J \sum_{l=1, l \neq j}^J \left( \int \widehat{\psi}_j(\vec{x}) \widehat{\psi}_l(\vec{x})dG(\vec{x}) \right)^2 } \right]
\end{align*}

where the last inequality follows from  using Cauchy-Schwartz repeatedly. The result follows from Lemmas \ref{boundBetaHat} and \ref{boundPsiHat}.

\end{proof}
\vspace{2mm}

\begin{Lemma} 
\label{lemma-2stTerm} 
Let $h(\vec{x})$ be as in Lemma \ref{lemma-1stTerm}. Then

$$\int \left|h(\vec{x})-\beta_{J}(\vec{x}) \right|^2 dG(\vec{x}) = J O_P\left(\frac{1}{\lambda_J \Delta^2_J n_G}\right).$$ 
\end{Lemma}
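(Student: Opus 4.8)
The plan is to reduce the statement to the eigenfunction perturbation bound of Lemma \ref{boundEigenVectors}, since $h(\vec{x})-\beta_{J}(\vec{x})=\sum_{j=1}^{J}\beta_{j}\bigl(\widehat{\psi}_j(\vec{x})-\psi_j(\vec{x})\bigr)$ is a deterministically weighted combination of the eigenfunction errors. First I would apply the Cauchy--Schwarz inequality in the index $j$, pointwise in $\vec{x}$, to obtain
\[
\bigl(h(\vec{x})-\beta_{J}(\vec{x})\bigr)^2\le\Bigl(\sum_{j=1}^{J}\beta_{j}^2\Bigr)\Bigl(\sum_{j=1}^{J}\bigl(\widehat{\psi}_j(\vec{x})-\psi_j(\vec{x})\bigr)^2\Bigr),
\]
and then integrate both sides with respect to $G$. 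This splits the task into controlling the coefficient mass $\sum_{j=1}^{J}\beta_{j}^2$ and the aggregate eigenfunction error $\sum_{j=1}^{J}\int(\widehat{\psi}_j-\psi_j)^2dG$.

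For the coefficient factor, Assumption \ref{assumpfiniteWeight} together with Parseval's identity (the $\psi_j$ form an orthonormal basis of $L^2(\mathcal{X},G)$) gives $\sum_{j=1}^{J}\beta_{j}^2\le\sum_{j\in\mathbb{N}}\beta_{j}^2=\int\beta^2(\vec{x})dG(\vec{x})<\infty$, so this factor is $O(1)$ uniformly in $J$. For the eigenfunction factor, linearity of the integral and Lemma \ref{boundEigenVectors} give $\sum_{j=1}^{J}\int(\widehat{\psi}_j-\psi_j)^2dG=\sum_{j=1}^{J}O_P\bigl(1/(\lambda_j\delta_j^2 n_G)\bigr)$; since $\lambda_j\ge\lambda_{J}$ and $\delta_j\ge\Delta_{J}$ for every $j\le J$ by the ordering of the eigenvalues and the definition $\Delta_{J}=\min_{1\le j\le J}\delta_j$, each summand is at most $O_P\bigl(1/(\lambda_{J}\Delta_{J}^2 n_G)\bigr)$, and summing the $J$ terms yields $J\,O_P\bigl(1/(\lambda_{J}\Delta_{J}^2 n_G)\bigr)$. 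Multiplying the two factors gives the stated bound.

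The one point that needs care is the $O_P$ bookkeeping when adding the $J$ per-index errors: one must either invoke a version of Lemma \ref{boundEigenVectors} whose constants are uniform in $j\le J$ — which is available from the exponential concentration bound already used in the proof of Lemma \ref{lemmaboundExp2} — or regard $J$ as fixed in the asymptotics, so that a finite sum of $O_P$ terms of common order $1/(\lambda_{J}\Delta_{J}^2 n_G)$ is again $O_P$ of $J$ times that order. Apart from this, the argument uses only Cauchy--Schwarz and the already-established perturbation bound and introduces no new estimates; it is the cleaner, ``bias-free'' counterpart of Lemma \ref{lemma-1stTerm}, where the weights $\beta_j$ are deterministic, so the cross terms in the expansion can be handled by a single Cauchy--Schwarz step rather than through the coefficient-perturbation bound of Lemma \ref{boundBetaHat}.
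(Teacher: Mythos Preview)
Your proposal is correct and follows essentially the same route as the paper: write $h-\beta_J=\sum_{j=1}^J\beta_j(\widehat{\psi}_j-\psi_j)$, apply Cauchy--Schwarz in the index $j$ to split into $\bigl(\sum_j\beta_j^2\bigr)\bigl(\sum_j\int(\widehat{\psi}_j-\psi_j)^2dG\bigr)$, bound the first factor by $\|\beta\|_{L^2(G)}^2<\infty$ via Parseval and Assumption~\ref{assumpfiniteWeight}, and handle the second via Lemma~\ref{boundEigenVectors}. Your additional remarks on the uniformity of the $O_P$ constants across $j\le J$ are a welcome clarification that the paper's proof leaves implicit.
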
 
\begin{proof} Using Cauchy-Schwartz inequality,
\begin{align*}
&\int \left| h(\vec{x})-\beta_{J}(\vec{x}) \right|^2 dG(\vec{x}) \leq \int \left|\sum_{j=1}^J \beta_{j}\left(\psi_j(\vec{x)}-\widehat{\psi_j}(\vec{x})\right)\right|^2 dG(\vec{x})\\
&=\left\{ \sum_{j=1}^J   \beta^2_{j}  \right\} \left\{ \sum_{j=1}^J \int \left[  \psi_j(\vec{x})-\widehat{\psi_j}(\vec{x})\right]^2 dG(\vec{x}) \right\}.
\end{align*}
The conclusion follows from Lemma \ref{boundEigenVectors} and by noticing that $\sum_{j=1}^J   \beta^2_{j} \leq ||\beta(\vec{x})||^2<\infty$.

\end{proof}
\vspace{2mm}

It is now possible to bound the variance term:

\begin{thm} 
\label{thm-var}
Under the stated assumptions,
$$\mbox{VAR}(\widehat{\beta}_{J}(\x),\beta_{J}(\x))=J \left(O_P\left(\frac{1}{n_F}\right)+ O_P\left(\frac{1}{\lambda_J \Delta^2_J n_G}\right) \right).$$
\end{thm}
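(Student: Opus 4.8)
The plan is to bound $\mbox{VAR}(\widehat{\beta}_{J}(\x),\beta_{J}(\x)) = \int \left(\widehat{\beta}_{J}(\vec{x}) - \beta_{J}(\vec{x})\right)^2 dG(\vec{x})$ by inserting the hybrid function $h(\vec{x}) = \sum_{j=1}^J \beta_{j}\widehat{\psi}_j(\vec{x})$, which combines the \emph{true} coefficients $\beta_{j}$ with the \emph{estimated} eigenfunctions $\widehat{\psi}_j$. By the elementary inequality $(a+b)^2 \leq 2a^2 + 2b^2$,
\begin{align*}
\int \left(\widehat{\beta}_{J}(\vec{x}) - \beta_{J}(\vec{x})\right)^2 dG(\vec{x}) \leq 2\int \left(\widehat{\beta}_{J}(\vec{x}) - h(\vec{x})\right)^2 dG(\vec{x}) + 2\int \left(h(\vec{x}) - \beta_{J}(\vec{x})\right)^2 dG(\vec{x}).
\end{align*}
The first term isolates the error in the coefficients $\widehat{\beta}_{j} - \beta_{j}$, while the second isolates the error in the eigenfunctions $\widehat{\psi}_j - \psi_j$.

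For the first term I would invoke Lemma \ref{lemma-1stTerm}. Its proof expands the square in the $\widehat{\psi}_j$ basis, separating the diagonal from the off-diagonal contributions. The diagonal part is $\sum_{j=1}^J (\widehat{\beta}_{j} - \beta_{j})^2 \int \widehat{\psi}_j^2 dG$; each coefficient error is $O_P(1/n_F) + O_P(1/(\lambda_j \delta_j^2 n_G))$ by Lemma \ref{boundBetaHat}, and each $\int \widehat{\psi}_j^2 dG$ is $1 + O_P(1/(\lambda_j \Delta^2_J n_G))$ by Lemma \ref{boundPsiHat}; summing $J$ terms and using $\lambda_j \geq \lambda_J$, $\delta_j \geq \Delta_J$ yields $J\,(O_P(1/n_F) + O_P(1/(\lambda_J \Delta^2_J n_G)))$. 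The off-diagonal part is handled by Cauchy--Schwarz over the $J^2$ cross terms together with $\int \widehat{\psi}_i \widehat{\psi}_j dG = O_P((\lambda_i^{-1/2} + \lambda_j^{-1/2})/(\Delta_J \sqrt{n_G}))$ from Lemma \ref{boundPsiHat}, which is of smaller order.

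For the second term I would use Lemma \ref{lemma-2stTerm}: a Cauchy--Schwarz split gives $\int (h - \beta_{J})^2 dG \leq \left(\sum_{j=1}^J \beta_{j}^2\right)\left(\sum_{j=1}^J \int (\psi_j - \widehat{\psi}_j)^2 dG\right)$, and since $\sum_j \beta_{j}^2 \leq ||\beta||^2 < \infty$ by Assumption \ref{assumpfiniteWeight} while each eigenfunction error is $O_P(1/(\lambda_j \delta_j^2 n_G)) \leq O_P(1/(\lambda_J \Delta^2_J n_G))$ by Lemma \ref{boundEigenVectors}, this is $J\,O_P(1/(\lambda_J \Delta^2_J n_G))$. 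Adding the two bounds gives the theorem.

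The main obstacle is the first term, specifically that the estimated eigenfunctions $\widehat{\psi}_j$ are orthonormal with respect to the \emph{empirical} measure on the $G$-sample rather than with respect to $G$ itself, so the off-diagonal integrals $\int \widehat{\psi}_i \widehat{\psi}_j dG$ do not vanish and must be shown negligible --- which is exactly what Lemma \ref{boundPsiHat} supplies. A secondary point, already flagged after Lemma \ref{boundEigenVectorsSample}, is that $\widehat{\psi}_j$ is a random function built from $\vec{x}^G_1,\ldots,\vec{x}^G_{n_G}$; the $O_P(1/n_F)$ rate in Lemma \ref{boundBetaHat} relies on conditioning on the $G$-sample so that $\vec{x}^F_1,\ldots,\vec{x}^F_{n_F}$ are i.i.d.\ given $\widehat{\psi}_j$, and then taking expectation over the $G$-sample using the uniform-in-$n_G$ variance bound of Lemma \ref{boundVariance}.
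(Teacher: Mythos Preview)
Your proposal is correct and follows essentially the same route as the paper: introduce the hybrid $h(\vec{x})=\sum_{j=1}^J \beta_j\widehat{\psi}_j(\vec{x})$, split via $(a+b)^2\le 2a^2+2b^2$, and then invoke Lemmas~\ref{lemma-1stTerm} and~\ref{lemma-2stTerm}. Your additional commentary on why those lemmas hold and on the role of the near-orthogonality bounds in Lemma~\ref{boundPsiHat} is accurate and in fact more detailed than the paper's own proof of this theorem.
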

\begin{proof} 
Let $h$ be defined as in Lemma \ref{lemma-1stTerm}. We have
\begin{align*}
&\frac{1}{2}\mbox{VAR}(\widehat{\beta}_{J}(\x),\beta_{J}(\x))= \frac{1}{2}\int \left|\widehat{\beta}_{J}(\vec{x})-h(\vec{x})+h(\vec{x})-\beta_{J}(\vec{x}) \right|^2 dG(\vec{x})  \\
&\leq \int \left|\widehat{\beta}_{J}(\vec{x})-h(\vec{x})\right|^2 dG(\vec{x}) + \int \left|h(\vec{x})-\beta_{J}(\vec{x}) \right|^2 dG(\vec{x}).
\end{align*}
The conclusion follows from Lemmas \ref{lemma-1stTerm} and \ref{lemma-2stTerm}.

\end{proof}

We now bound the bias term.

\begin{Lemma} 
\label{sobolevX}
$\sum_{j\geq J}\beta_j^2 = c_{K_\x} O(\lambda_J)$.

\end{Lemma}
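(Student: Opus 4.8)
The plan is to read off the quantity $\sum_{j\ge J}\beta_j^2$ from the spectral characterization of the RKHS norm of $\beta$. Recall that $\{\psi_j\}_j$ are the eigenfunctions of the integral operator $\textbf{K}_\x$ with eigenvalues $\lambda_1\ge\lambda_2\ge\cdots\ge 0$, forming an orthonormal basis of $L^2(\mathcal{X},G)$, and that $\beta_j=\int\beta(\vec{x})\psi_j(\vec{x})\,dG(\vec{x})$ are exactly the $L^2(G)$-coefficients of $\beta$ in this basis. By Mercer's theorem applied to the bounded, symmetric, positive-definite kernel $K_\x$, a function $f=\sum_j a_j\psi_j\in L^2(\mathcal{X},G)$ lies in $\mathcal{H}_{K_\x}$ precisely when $a_j=0$ for every $j$ with $\lambda_j=0$ and $\sum_{j:\lambda_j>0}a_j^2/\lambda_j<\infty$, in which case $\|f\|^2_{\mathcal{H}_{K_\x}}=\sum_{j:\lambda_j>0}a_j^2/\lambda_j$.

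First I would invoke Assumption \ref{assump-hilbertXWeight}, i.e.\ $\beta\in\mathcal{H}_{K_\x}$ with $c_{K_\x}=\|\beta\|^2_{\mathcal{H}_{K_\x}}$. Combined with the identity above (and the convention that indices with $\lambda_j=0$ carry $\beta_j=0$, so they may be omitted throughout), this yields
\[
\sum_{j}\frac{\beta_j^2}{\lambda_j}=c_{K_\x}<\infty .
\]
Then I would use monotonicity of the eigenvalues: for every $j\ge J$ one has $\lambda_j\le\lambda_J$, hence
\[
\sum_{j\ge J}\beta_j^2=\sum_{j\ge J}\lambda_j\cdot\frac{\beta_j^2}{\lambda_j}\le\lambda_J\sum_{j\ge J}\frac{\beta_j^2}{\lambda_j}\le\lambda_J\sum_{j}\frac{\beta_j^2}{\lambda_j}=c_{K_\x}\lambda_J,
\]
which is exactly the claimed bound $\sum_{j\ge J}\beta_j^2=c_{K_\x}O(\lambda_J)$.

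The only step that genuinely needs care is the RKHS-norm identity $\|\beta\|^2_{\mathcal{H}_{K_\x}}=\sum_j\beta_j^2/\lambda_j$: it requires that the spectral decomposition of $\textbf{K}_\x$ in $L^2(\mathcal{X},G)$ coincides with the feature decomposition underlying $\mathcal{H}_{K_\x}$. This is standard Mercer/RKHS theory under the standing hypotheses (the kernel is bounded, symmetric and positive definite, and $\textbf{K}_\x$ is self-adjoint, compact with eigenfunctions forming an orthonormal basis of $L^2(\mathcal{X},G)$), so I would simply cite it; everything else is the one-line monotonicity estimate above.
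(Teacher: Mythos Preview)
Your proof is correct and follows essentially the same route as the paper: both invoke the spectral RKHS-norm identity $\|\beta\|^2_{\mathcal{H}_{K_\x}}=\sum_j\beta_j^2/\lambda_j$ (the paper cites Minh for this, you cite Mercer) and then use monotonicity of the eigenvalues to bound $\sum_{j\ge J}\beta_j^2\le\lambda_J\sum_j\beta_j^2/\lambda_j=c_{K_\x}\lambda_J$.
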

\begin{proof}
Note that $c_{K_\x}=||\beta(\vec{x})||^2_{\mathcal{H}} = \sum_{j \geq 1} \frac{\beta_j^2}{\lambda_j}$ \citep{Minh2}.
Using Assumption \ref{assump-hilbertXWeight} and that  the eigenvalues are decreasing it follows that
\begin{align*}
\sum_{j \geq J} \beta_j^2=  \sum_{j \geq J} \beta_j^2 \frac{\lambda_j}{\lambda_j} \leq \lambda_J ||\beta(\vec{x})||^2_{\mathcal{H}},
\end{align*}
and therefore $\sum_{j\geq J}\beta_j^2 \leq  \lambda_J   c_{K_\x} = c_{K_\x} O(\lambda_J)$.

\end{proof}
\vspace{2mm}

\begin{thm}
\label{thm-bias}
Under the stated assumptions, the bias is bounded by
 $$B(\beta_{J}(\x),\beta(\x)) = c_{K_\x} O(\lambda_J).$$
 
\end{thm}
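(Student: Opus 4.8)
The plan is to reduce the bias term to a tail sum of squared Fourier coefficients and then invoke Lemma~\ref{sobolevX}. First I would recall that, by Assumption~\ref{assumpfiniteWeight}, $\beta(\vec{x}) = \sum_{j\in\mathbb{N}} \beta_j\psi_j(\vec{x})$ in $L^2(\mathcal{X},G)$, and that the truncation is $\beta_J(\vec{x}) = \sum_{j=1}^J \beta_j\psi_j(\vec{x})$, so that $\beta(\vec{x}) - \beta_J(\vec{x}) = \sum_{j>J}\beta_j\psi_j(\vec{x})$ in $L^2(\mathcal{X},G)$.

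Next I would use the fact, established earlier in the excerpt, that $\{\psi_j\}_{j\in\mathbb{N}}$ is an orthonormal basis of $L^2(\mathcal{X},G)$ (this follows from the spectral theorem for the compact, self-adjoint operator $\textbf{K}_\x$). Applying Parseval's identity with respect to this basis gives
$$B(\beta_J(\vec{x}),\beta(\vec{x})) = \int\left(\sum_{j>J}\beta_j\psi_j(\vec{x})\right)^{\!2} dG(\vec{x}) = \sum_{j>J}\beta_j^2 \leq \sum_{j\geq J}\beta_j^2.$$
Finally, Lemma~\ref{sobolevX} states precisely that $\sum_{j\geq J}\beta_j^2 = c_{K_\x}O(\lambda_J)$, so combining the two displays yields $B(\beta_J(\vec{x}),\beta(\vec{x})) = c_{K_\x}O(\lambda_J)$, which is the claim.

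I expect no real obstacle here: the argument is a direct substitution of the preceding lemma once the bias is rewritten via the series expansion. The only point worth emphasizing is that the orthonormality invoked in the Parseval step is with respect to the data distribution $G$ rather than Lebesgue measure — it is exactly this property of the spectral basis that makes the reduction to $\sum_{j>J}\beta_j^2$ exact, after which the RKHS-smoothness Assumption~\ref{assump-hilbertXWeight} (through Lemma~\ref{sobolevX}) controls the tail.
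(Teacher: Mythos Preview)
Your proposal is correct and matches the paper's own proof essentially line for line: the paper also rewrites the bias via orthonormality as $\sum_{j>J}\beta_j^2$ and then invokes Lemma~\ref{sobolevX}. The only cosmetic difference is your explicit inequality $\sum_{j>J}\beta_j^2 \le \sum_{j\ge J}\beta_j^2$ to align with the lemma's indexing, which is harmless.
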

\begin{proof}
By using orthogonality, we have that
\begin{align*}
 B(\beta_{J}(\x),\beta(\x)) &\overset{\mbox{\tiny{def}}}{=} \int \left(\beta(\vec{x})-\beta_{J}(\vec{x}) \right)^2 dG(\vec{x}) = \sum_{j>J}\beta^2_{j}.
\end{align*}
The Theorem follows from Lemma
\ref{sobolevX}.
\end{proof}
\vspace{2mm}

\textbf{Spectral Series Likelihood Estimator}

We now present similar bounds to those from shown for the Density Ratio estimator. 
To avoid confusions with the last section, from now on we denote by $\lambda^{\vec{x}}_j$
the eigenvalue $\lambda_j$ relative to the eigenfunction $\psi_j$, and by $\delta^\vec{x}_j$
its eigengap previously denoted by $\delta_j$.

In this section, we assume Assumption \ref{assumpfiniteWeight} and \ref{decayEigenvaluesWeight} from
the last section, and, additionally:

\begin{Assumption} 
 \label{assump-hilbertX} For all $\vec{\theta} \in \Theta,$
 let 
 $g_\theta: \mathcal{X} \longrightarrow \Re;$ $g_\theta(\vec{x})=\L(\vec{x};\theta)$. 
$g_\theta \in \mathcal{H}_{K_\x} (c_\theta) \equiv \{g \in \mathcal{H}_{K_\x}: ||g||^2_{\mathcal{H}_{K_\x}} \leq c_\theta^2 \}$  where $c_\theta$'s are such that
$c_{K_\x}\equiv \int_{\Theta} c_\theta^2 dF(\theta)  < \infty $.
\end{Assumption}
\vspace{1mm}

Assumption \ref{assump-hilbertX} requires that for every $\theta \in \Theta$ fixed, $\L(\vec{x};\theta)$
is a smooth function of $\x$, where we again measure smoothness in a RKHS through its norm, and is analogous to Assumption \ref{assump-hilbertXWeight}. The last assumption 
we need is analogous to \ref{assump-hilbertX}, and requires that for every $\x \in \mathcal{X}$ fixed, $\L(\vec{x};\theta)$
is a smooth function of $\theta$:

\begin{Assumption} 
 \label{assump-hilbertz} For all $\vec{x} \in \mathcal{X},$
 let 
 $h_\x: \Theta \longrightarrow \Re;$ $h_\x(\vec{\theta})=\L(\vec{x};\theta)$. 
$h_\x \in \mathcal{H}_{K_\theta} (c_\x)\equiv \{h \in \mathcal{H}_{K_\theta}: ||h||^2_{\mathcal{H}_{K_\theta}} \leq c_\x^2 \}$  where $c_\x$'s are such that
$c_{K_\theta}\equiv  \int_{\mathcal{X}} c_\x^2 dG(\x)  < \infty $.
\end{Assumption}
\vspace{1mm}

First we note that an analogous decomposition of the loss 
$$L\left(\widehat{\L},\L\right) =\int \left(\widehat{\L}(\vec{x};\theta)-\L(\vec{x};\theta) \right)^2dG(\vec{x})dF(\theta)$$
in terms of bias and variance holds. The proof that the bound on the variance term is analogous to the one of the 
variance bound of the density ratio estimator. The main difference is in Lemmas \ref{boundEigenVectors} and \ref{boundPsiHat}.
We state and prove the new version of these in Lemmas \ref{boundEigenVectorsRatio} and \ref{boundPsiHatRatio}.
Notice that analogous Lemmas to \ref{boundEigenVectors} and \ref{boundPsiHat} hold for basis $\phi_i$.

\begin{Lemma} 
 \label{boundEigenVectorsRatio} For all $1\leq i \leq I$ and for all $1\leq j \leq J$,
$$\iint \left(\widehat{\Psi_{i,j}}(\vec{\theta},\vec{x})-\Psi_{i,j}(\vec{\theta},\vec{x}) \right)^2dG(\vec{x})dF(\vec{\theta}) = O_P\left(\max\left\{\frac{1}{\lambda^{\vec{x}}_j \delta^2_{\vec{x},j} n_G},\frac{1}{\lambda^{\vec{\theta}}_i \delta^2_{\vec{\theta},i} n_F}\right\}\right)$$,
\end{Lemma}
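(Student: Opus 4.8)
The plan is to reduce the tensor-product statement to the two one-dimensional statements we already control, namely Lemma \ref{boundEigenVectors} (applied to the $\psi_j$'s with respect to $G$, giving rate $1/(\lambda^\x_j \delta^2_{\x,j} n_G)$) and its analogue for the $\phi_i$'s with respect to $F_\theta$ (giving rate $1/(\lambda^\theta_i \delta^2_{\theta,i} n_F)$). First I would write $\widehat{\Psi}_{i,j}(\theta,\x) - \Psi_{i,j}(\theta,\x) = \widehat{\psi}_j(\x)\widehat{\phi}_i(\theta) - \psi_j(\x)\phi_i(\theta)$ and insert the mixed term $\psi_j(\x)\widehat{\phi}_i(\theta)$ (or equivalently $\widehat{\psi}_j(\x)\phi_i(\theta)$), so that the difference splits as $\bigl(\widehat{\psi}_j(\x)-\psi_j(\x)\bigr)\widehat{\phi}_i(\theta) + \psi_j(\x)\bigl(\widehat{\phi}_i(\theta)-\phi_i(\theta)\bigr)$. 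Squaring and using $(a+b)^2 \le 2a^2+2b^2$, the integral $\iint (\cdot)^2\, dG(\x)\,dF_\theta(\theta)$ is bounded by
\[
2\int \widehat{\phi}_i^2(\theta)\,dF_\theta(\theta)\int \bigl(\widehat{\psi}_j(\x)-\psi_j(\x)\bigr)^2 dG(\x) + 2\int \psi_j^2(\x)\,dG(\x)\int \bigl(\widehat{\phi}_i(\theta)-\phi_i(\theta)\bigr)^2 dF_\theta(\theta).
\]

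Next I would control each of the four factors. The factor $\int \psi_j^2\,dG = 1$ by orthonormality. The factor $\int\widehat{\phi}_i^2\,dF_\theta$ is $O_P(1/(\lambda^\theta_i\Delta^2_{\theta,I}n_F)) + 1 = O_P(1)$ by the $\phi$-analogue of Lemma \ref{boundPsiHat} (Corollary 1 of \citealt{Sinha}); alternatively one can bound it crudely by $2\int(\widehat\phi_i-\phi_i)^2 dF_\theta + 2$, which is $O_P(1)$ for the ranges of $i$ considered. The two eigenfunction-error factors are exactly $O_P(1/(\lambda^\x_j\delta^2_{\x,j}n_G))$ and $O_P(1/(\lambda^\theta_i\delta^2_{\theta,i}n_F))$ by Lemma \ref{boundEigenVectors} and its $\phi$-analogue. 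Combining, the whole expression is $O_P(1/(\lambda^\x_j\delta^2_{\x,j}n_G)) + O_P(1/(\lambda^\theta_i\delta^2_{\theta,i}n_F))$, and since a sum of two nonnegative terms is at most twice their maximum, this is $O_P(\max\{1/(\lambda^\x_j\delta^2_{\x,j}n_G),\, 1/(\lambda^\theta_i\delta^2_{\theta,i}n_F)\})$, as claimed.

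The one point needing care — and the main (minor) obstacle — is that the two estimated bases are built from different samples ($\{\psi_j\}$ from the $\x^G$'s, $\{\phi_i\}$ from the $\theta$'s), so $\widehat\psi_j$ and $\widehat\phi_i$ are not independent of each other unless one is careful; however, since the factorization above isolates the $\x$-integral and the $\theta$-integral into separate factors and each factor is bounded in probability by a rate that holds for its own sample, no independence between the two samples is actually needed for the $O_P$ bound — the $O_P(1)$ bound on $\int\widehat\phi_i^2\,dF_\theta$ holds regardless of the $\x^G$ sample. One should also note that the bound is uniform over $1\le i\le I$, $1\le j\le J$ only in the sense that $\Delta_{\x,J}=\min_j\delta_{\x,j}$ and $\Delta_{\theta,I}=\min_i\delta_{\theta,i}$ appear in the aggregate bound used later; at the level of this lemma the per-$(i,j)$ gaps $\delta_{\x,j}$, $\delta_{\theta,i}$ suffice, exactly mirroring Lemma \ref{boundEigenVectors}.
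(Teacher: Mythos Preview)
Your proposal is correct and follows essentially the same approach as the paper: the paper uses the symmetric decomposition $\widehat\psi_j\widehat\phi_i - \psi_j\phi_i = (\widehat\phi_i-\phi_i)\widehat\psi_j + \phi_i(\widehat\psi_j-\psi_j)$ (inserting $\widehat\psi_j\phi_i$ instead of your $\psi_j\widehat\phi_i$), then applies $(a+b)^2\le 2a^2+2b^2$, orthonormality of $\phi_i$, Lemma~\ref{boundPsiHat} for $\int\widehat\psi_j^2\,dG$, and Lemma~\ref{boundEigenVectors} together with its $\phi$-analogue. The two arguments differ only in which mixed term is inserted, which you yourself note is an equivalent choice.
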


\begin{proof} We have that 
\begin{align*}
&\frac{1}{2}\iint \left(\widehat{\Psi_{i,j}}(\vec{\theta},\vec{x})-\Psi_{i,j}(\vec{\theta},\vec{x}) \right)^2dG(\vec{x})dF(\vec{\theta}) \leq  \\
 &\iint \left( (\widehat{\phi}_{i}(\vec{\theta})-\phi_{i}(\vec{\theta}))\widehat{\psi}_j(\vec{x})+\phi_i(\vec{\theta})(\widehat{\psi}_j(\vec{x})-\psi_j(\vec{x})) \right)^2dG(\vec{x})dF(\vec{\theta}) \leq \\
 &\int \widehat{\psi}_j^2(\vec{x}) dG(\vec{x})\int (\widehat{\phi}_{i}(\vec{\theta})-\phi_{i}(\vec{\theta}))^2 dF(\vec{\theta})+\int \phi^2_i(\vec{\theta})dF(\vec{\theta})\int (\widehat{\psi}_j(\vec{x})-\psi_j(\vec{x}))^2 dG(\vec{x})
\end{align*}
The results follows from orthonormality of $\phi_i$ wrt to $F(\vec{\theta})$, Lemmas \ref{boundEigenVectors} and \ref{boundPsiHat}.
\end{proof}
\vspace{2mm}

\begin{Lemma}
\label{boundPsiHatRatio} 
Under the stated assumptions, $\iint \widehat{\Psi}_{i,j}(\theta,\vec{x})\widehat{\Psi}_{k,l}(\theta,\vec{x})dG(\vec{x})dF(\theta) =$
\[ =
\left\{ 
  \begin{array}{l l}
1+O_P\left(\max\left\{\frac{1}{\lambda^{\vec{x}}_j \delta^2_{\vec{x},j} n_G},\frac{1}{\lambda^{\vec{\theta}}_i \delta^2_{\vec{\theta},i} n_F}\right\}\right), & \text{if $i=k$ and $j=l$}\\
O_P\left(\left(\frac{1}{\sqrt{\lambda^\x_l}}+\frac{1}{\sqrt{\lambda^\x_j}}\right) \frac{1}{\Delta^\x_J \sqrt{n_G}}\right) & \text{if $i= k$ and $j\neq l$}\\
O_P\left(\left(\frac{1}{\sqrt{\lambda^\theta_i}}+\frac{1}{\sqrt{\lambda^\theta_k}}\right) \frac{1}{\Delta^\theta_I \sqrt{n_F}}\right) & \text{if $i\neq k$ and $j= l$}\\
O_P\left(\max \left\{ \left(\frac{1}{\sqrt{\lambda^\x_l}}+\frac{1}{\sqrt{\lambda^\x_j}}\right) \frac{1}{\Delta^\x_J \sqrt{n_G}},\left(\frac{1}{\sqrt{\lambda^\theta_i}}+\frac{1}{\sqrt{\lambda^\theta_k}}\right) \frac{1}{\Delta^\theta_I \sqrt{n_F}} \right\} \right) & \text{if $i\neq k$ and $j\neq l$}\\
\end{array} \right.\]

\begin{proof} The proof of these facts follow from noticing that $\iint \widehat{\Psi}_{i,j}(\theta,\vec{x})\widehat{\Psi}_{k,l}(\theta,\vec{x})dG(\vec{x})dF(\theta) =
\int \widehat{\psi}_{j}(\vec{x})\widehat{\psi}_{l}(\vec{x})dG(\vec{x}) \int \phi_i(\theta) \phi_k(\theta)dF(\theta) $ and using Lemma \ref{boundPsiHat}
and its analogous for the basis $\phi_i$.

\end{proof}
\vspace{2mm}

\end{Lemma}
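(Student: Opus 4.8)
The plan is to reduce the statement to the single-basis estimates already in hand, using that $\widehat\Psi_{i,j}(\theta,\x)=\widehat\psi_j(\x)\widehat\phi_i(\theta)$ is a pure tensor and that the reference measure on $\Theta\times\mathcal X$ factors as $dG(\x)\,dF(\theta)$. First I would apply Fubini's theorem to split the double integral:
$$\iint \widehat\Psi_{i,j}(\theta,\x)\,\widehat\Psi_{k,l}(\theta,\x)\,dG(\x)\,dF(\theta)=\Big(\int \widehat\psi_j(\x)\widehat\psi_l(\x)\,dG(\x)\Big)\Big(\int \widehat\phi_i(\theta)\widehat\phi_k(\theta)\,dF(\theta)\Big).$$
Moreover, the two factors on the right are independent random variables — the first depends only on $\x^G_1,\dots,\x^G_{n_G}$, the second only on $\theta_1,\dots,\theta_{n_F}$ — so it suffices to bound each factor separately and then form the product.

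Next I would invoke Lemma \ref{boundPsiHat} for the $\psi$-factor: it equals $1+O_P\!\big(1/(\lambda^\x_j\,\Delta^2_{\x,J}\,n_G)\big)$ when $j=l$ and $O_P\!\big((1/\sqrt{\lambda^\x_j}+1/\sqrt{\lambda^\x_l})/(\Delta^\x_J\sqrt{n_G})\big)$ when $j\neq l$ (up to the cosmetic mismatch between the global eigengap $\Delta^\x_J$ of Lemma \ref{boundPsiHat} and the local one written in the diagonal case of the present statement, which one resolves by reading both consistently). The ``analogue for the basis $\phi_i$'' is the identical statement for the $\phi$-factor with $n_F,\lambda^\theta,\Delta^\theta$ replacing the $\x$-quantities; it is obtained by running the same Sinha-type argument on the operator $\textbf{K}_\theta$ and the sample $\theta_1,\dots,\theta_{n_F}$. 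Multiplying the two bounds is then order-of-magnitude bookkeeping. On the diagonal $(i,j)=(k,l)$, $(1+O_P(a))(1+O_P(b))=1+O_P(a)+O_P(b)+O_P(ab)=1+O_P(\max\{a,b\})$, since $a+b\asymp\max\{a,b\}$ and $ab=o(\max\{a,b\})$. When only the $\theta$-indices agree, the $\phi$-factor is $1+o_P(1)=O_P(1)$, so the product is $O_P$ of the $\psi$ cross-term; the case where only the $\x$-indices agree is symmetric. When all four indices differ, the product of two $o_P(1)$ cross-terms is bounded by either one of them, hence by their maximum. Collecting the four cases gives the claim.

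The one step that deserves a moment of care is the $\phi$-analogue of Lemma \ref{boundPsiHat}: one must check that the Sinha concentration inequalities apply to $\textbf{K}_\theta$ on $L^2(\Theta,F_\theta)$, i.e.\ that $K_\theta$ is bounded, symmetric and positive definite and that its top $I$ eigenvalues are simple, strictly positive, and have nonzero gaps — the $\theta$-side counterpart of Assumption \ref{decayEigenvaluesWeight}. Beyond that I anticipate no genuine obstacle: the lemma is an accounting exercise layered on the two single-basis bounds, and the only thing to stay honest about is that the final ``product $\le$ maximum'' step relies on $n_F,n_G\to\infty$ (equivalently, it is enough that each cross-term is $o_P(1)$, which is exactly what Lemma \ref{boundPsiHat} and its $\phi$-analogue provide). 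The remaining bookkeeping is keeping the $\x$ and $\theta$ superscripts straight.
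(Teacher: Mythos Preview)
Your proposal is correct and follows exactly the paper's approach: factor the double integral via the tensor-product structure $\widehat\Psi_{i,j}=\widehat\psi_j\widehat\phi_i$ into a $\psi$-integral times a $\phi$-integral, then apply Lemma \ref{boundPsiHat} and its $\phi$-analogue to each factor. You have simply written out in full the $O_P$-bookkeeping (and flagged the $\Delta_J$ versus $\delta_j$ notational inconsistency) that the paper's one-line proof leaves implicit.
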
 

The bound on the bias presents some additional differences to the proof of the bias bound from the ratio estimator, we therefore
show it in details in the sequence.

\begin{Lemma}
 \label{alphaExpansionZ}
For each $\theta\in\Theta,$ expand $g_\theta(\vec{x})$ into the basis $\psi:$ $g_\theta(\vec{x})=\sum_{j\geq1}\alpha_j^z\psi_j(\vec{x}),$
where $\alpha_j^\theta=\int g_\theta(\vec{x})\psi_j(\vec{x})dG(\vec{x}).$ We have
 \begin{align*}
\alpha_j^\theta=\sum_{i\geq 1}\beta_{i,j}\phi_i(\theta)\mbox{ and } \int\left(\alpha_j^\theta \right)^2dF(\theta)=\sum_{i\geq1}\beta^2_{i,j}.
\end{align*}
\end{Lemma}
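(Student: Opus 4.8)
The plan is to obtain both identities by reading the coefficients $\alpha_j^\theta$ directly off the convergent tensor-product expansion $\L(\x;\theta)=\sum_{i,j}\beta_{i,j}\Psi_{i,j}(\x,\theta)=\sum_{i,j}\beta_{i,j}\psi_j(\x)\phi_i(\theta)$, which converges in $L^2(\Theta\times\mathcal{X},F_\theta\times G)$ because $\{\Psi_{i,j}\}_{i,j}$ is an orthonormal basis of that space and $\L\in L^2(\Theta\times\mathcal{X},F_\theta\times G)$. I would first record that the latter membership is not assumed but follows from Assumption \ref{assump-hilbertX} in the same way as Lemma \ref{boundBeta}: for fixed $\theta$, the reproducing property and Cauchy--Schwarz give $|\L(\x;\theta)|=|\langle g_\theta,K_\x(\x,\cdot)\rangle_{\mathcal{H}_{K_\x}}|\le c_\theta\sqrt{K_\x(\x,\x)}$, whence $\iint\L^2(\x;\theta)\,dG(\x)\,dF(\theta)\le\big(\sup_{\x}K_\x(\x,\x)\big)\int c_\theta^2\,dF(\theta)=\big(\sup_{\x}K_\x(\x,\x)\big)\,c_{K_\x}<\infty$.

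For the first identity I would fix $j$ and introduce the linear map $T_j\colon L^2(\Theta\times\mathcal{X},F_\theta\times G)\to L^2(\Theta,F_\theta)$ defined by $(T_jh)(\theta)=\int h(\x,\theta)\psi_j(\x)\,dG(\x)$. A single application of Cauchy--Schwarz together with $\int\psi_j^2\,dG=1$ shows $\|T_jh\|_{L^2(\Theta,F_\theta)}\le\|h\|_{L^2(\Theta\times\mathcal{X},F_\theta\times G)}$, so $T_j$ is bounded. Applying $T_j$ to the partial sum $S_N=\sum_{i\le N}\sum_{k\le N}\beta_{i,k}\Psi_{i,k}$ and using orthonormality of the $\psi_k$ collapses the sum over $k$, so that $T_jS_N=\sum_{i\le N}\beta_{i,j}\phi_i$ for every $N\ge j$; letting $N\to\infty$ and using continuity of $T_j$ gives $T_j\L=\sum_{i\ge1}\beta_{i,j}\phi_i$ with convergence in $L^2(\Theta,F_\theta)$. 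Since $(T_j\L)(\theta)=\int g_\theta(\x)\psi_j(\x)\,dG(\x)=\alpha_j^\theta$ by the definition of $\alpha_j^\theta$, this is exactly $\alpha_j^\theta=\sum_{i\ge1}\beta_{i,j}\phi_i(\theta)$, with equality in $L^2(\Theta,F_\theta)$ and hence for $F_\theta$-almost every $\theta$.

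The second identity is then immediate from Parseval's identity in $L^2(\Theta,F_\theta)$: since $\{\phi_i\}_i$ is orthonormal in that space and, by the first identity, $\{\beta_{i,j}\}_i$ are precisely the expansion coefficients of the function $\theta\mapsto\alpha_j^\theta$ in this basis, $\int(\alpha_j^\theta)^2\,dF(\theta)=\big\|\sum_{i\ge1}\beta_{i,j}\phi_i\big\|^2_{L^2(\Theta,F_\theta)}=\sum_{i\ge1}\beta_{i,j}^2$.

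The only point that genuinely needs care is the interchange of the double summation with the integral over $\x$ defining $\alpha_j^\theta$; this is exactly what the boundedness of $T_j$ together with the $L^2$ convergence of the tensor expansion deliver, so beyond this the argument is routine bookkeeping. A direct Fubini argument would also work, but it requires controlling the absolute series $\sum_{i,j}|\beta_{i,j}|\,|\psi_j(\x)|\,|\phi_i(\theta)|$, which is less clean; the operator formulation sidesteps it.
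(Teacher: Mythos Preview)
Your proof is correct and, at its core, carries out the same idea as the paper: identify the $\phi_i$-coefficients of the function $\theta\mapsto\alpha_j^\theta$ as $\beta_{i,j}$, then invoke Parseval. The paper's one-line proof (``project $\alpha_j^\theta$ into the basis $\phi$'') does this by computing $\int\alpha_j^\theta\phi_i(\theta)\,dF(\theta)$ directly via Fubini and recognizing the result as $\beta_{i,j}$; you instead start from the tensor expansion of $\L$ and push it through the bounded operator $T_j$, which is a cleaner way to justify the interchange of sum and integral that the paper leaves implicit. The extra work you put in---verifying $\L\in L^2(\Theta\times\mathcal{X},F_\theta\times G)$ from Assumption~\ref{assump-hilbertX} and establishing boundedness of $T_j$---fills in exactly the analytic details the paper's sketch omits, but the two arguments are essentially the same.
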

\begin{proof}
 It follows from projecting $\alpha_j^\theta$ into the basis $\phi$.
\end{proof}
\vspace{2mm}

Similarly, we have the following. 

\begin{Lemma}
 \label{alphaExpansionX}
For each $\vec{x}\in \mathcal{X},$ expand $h_\vec{x}(\theta)$ into the basis $\phi:$ $h_\vec{x}(\theta)=\sum_{i\geq1}\alpha_i^\vec{x}\phi_i(\theta),$
where $\alpha_i^\vec{x}=\int h_\vec{x}(\theta)\phi_i(\theta)dF(\theta).$ We have
 \begin{align*}
\alpha_i^\vec{x}=\sum_{j\geq 1}\beta_{i,j}\psi_i(\vec{x})\mbox{ and } \int\left(\alpha_i^\vec{x} \right)^2dG(\vec{x})=\sum_{j\geq1}\beta^2_{i,j}.
\end{align*}
\end{Lemma}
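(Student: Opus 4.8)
The plan is to reproduce, with the roles of $\x$ and $\theta$ (and of the bases $\{\psi_j\}_j$ and $\{\phi_i\}_i$) interchanged, the short argument behind Lemma~\ref{alphaExpansionZ}. First I would note that, for fixed $\x$, Assumption~\ref{assump-hilbertz} places $h_\x=\L(\x;\cdot)$ in $\mathcal{H}_{K_\theta}\subseteq L^2(\Theta,F)$, so that the expansion $h_\x(\theta)=\sum_{i\geq1}\alpha_i^\x\phi_i(\theta)$ with $\alpha_i^\x=\int h_\x(\theta)\phi_i(\theta)\,dF(\theta)$ is legitimate; and I would check that the map $\x\mapsto\alpha_i^\x$ itself lies in $L^2(\mathcal{X},G)$ --- via Cauchy--Schwarz together with the integrability built into Assumptions~\ref{assump-hilbertX}--\ref{assump-hilbertz} --- so that it in turn can be expanded in the orthonormal system $\{\psi_j\}_j$.

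The substantive step is the coefficient identity. I would compute the $\psi_j$-coefficient of $\alpha_i^\x$ as $\int\alpha_i^\x\,\psi_j(\x)\,dG(\x)=\iint\L(\x;\theta)\,\phi_i(\theta)\,\psi_j(\x)\,dG(\x)\,dF(\theta)$, interchange the order of integration by Fubini --- legitimate because $\L\cdot\Psi_{i,j}$ is absolutely integrable against $G\times F$ under the stated assumptions, each $\x$-slice being bounded by a Lemma~\ref{boundBeta}-type argument with an integrable bound --- and then recognize the result as $\iint\L(\x;\theta)\,\Psi_{i,j}(\x,\theta)\,dG(\x)\,dF(\theta)=\beta_{i,j}$ by Eq.~\ref{eq::betaExpre}. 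This gives $\alpha_i^\x=\sum_{j\geq1}\beta_{i,j}\psi_j(\x)$ in $L^2(\mathcal{X},G)$, the first assertion.

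The second assertion is Parseval's identity for the orthonormal basis $\{\psi_j\}_j$ of $L^2(\mathcal{X},G)$ applied to the expansion just obtained: $\int(\alpha_i^\x)^2\,dG(\x)=\sum_{j\geq1}\beta_{i,j}^2$. The only delicate point --- and hence what I would regard as the ``hard part,'' though it is routine --- is the measure-theoretic bookkeeping: confirming that $\L\in L^2(\Theta\times\mathcal{X},F\times G)$ so that the tensor-product series $\sum_{i,j}\beta_{i,j}\Psi_{i,j}$ converges in that space and the term-by-term manipulations (the Fubini swap, the reordering of the double sum, and the two one-variable expansions) are all justified. Assumptions~\ref{assump-hilbertX} and~\ref{assump-hilbertz} are tailored to supply exactly this, so no idea beyond that of Lemma~\ref{alphaExpansionZ} is needed.
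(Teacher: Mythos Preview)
Your proposal is correct and follows exactly the route the paper intends: the paper gives no separate proof for this lemma beyond ``Similarly, we have the following,'' deferring to Lemma~\ref{alphaExpansionZ}, whose proof is the one-line ``It follows from projecting $\alpha_j^\theta$ into the basis $\phi$.'' Your argument is precisely that projection with the roles of $(\x,\psi)$ and $(\theta,\phi)$ swapped, together with Parseval; the additional measure-theoretic bookkeeping you supply is more than the paper provides, and you also silently correct the evident typo $\psi_i(\x)$ to $\psi_j(\x)$ in the first identity.
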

\vspace{2mm}

\begin{Lemma}
 \label{alphaBeta} Using the same notation as Lemmas \ref{alphaExpansionZ} and \ref{alphaExpansionX}, we have
 $$ \beta_{i,j}=\int \alpha_i^\vec{x}\psi_j(\vec{x})dG(\vec{x})=\int \alpha_j^\theta\phi_i(\theta)dF(\theta).$$
\end{Lemma}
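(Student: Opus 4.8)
The plan is to unfold the inner integral appearing in each of the two expressions and reduce it to the double integral defining $\beta_{i,j}$ in Eq.~(\ref{eq::betaExpre}); the interchange of the order of integration is justified by Fubini's theorem, and the two identities are then symmetric to one another.

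First I would recall from Lemma~\ref{alphaExpansionX} that $\alpha_i^\x = \int h_\x(\theta)\phi_i(\theta)\,dF(\theta) = \int \L(\x;\theta)\phi_i(\theta)\,dF(\theta)$, so that
\[
\int \alpha_i^\x\,\psi_j(\x)\,dG(\x) = \int \left( \int \L(\x;\theta)\phi_i(\theta)\,dF(\theta)\right)\psi_j(\x)\,dG(\x).
\]
To interchange the two integrations I would check that $(\x,\theta)\mapsto \L(\x;\theta)\psi_j(\x)\phi_i(\theta)$ is integrable against $G\times F$: by Assumption~\ref{assump-hilbertX} together with the argument of Lemma~\ref{boundBeta} (boundedness of the kernel plus Cauchy--Schwarz) one has $|\L(\x;\theta)|\le c_\theta\sqrt{K_\x(\x,\x)}$, whence $\L\in L^2(\mathcal X\times\Theta,G\times F)$, and since $\Psi_{i,j}=\psi_j\phi_i$ is a unit vector in that same space the product lies in $L^1$ by Cauchy--Schwarz. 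Fubini then yields
\[
\int \alpha_i^\x\,\psi_j(\x)\,dG(\x) = \iint \L(\x;\theta)\,\psi_j(\x)\phi_i(\theta)\,dG(\x)\,dF(\theta) = \beta_{i,j},
\]
which is exactly Eq.~(\ref{eq::betaExpre}). The second identity is obtained the same way: substituting $\alpha_j^\theta = \int \L(\x;\theta)\psi_j(\x)\,dG(\x)$ from Lemma~\ref{alphaExpansionZ} into $\int \alpha_j^\theta\phi_i(\theta)\,dF(\theta)$ and applying Fubini once more recovers the same double integral.

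Alternatively, one may bypass the integrability bookkeeping by working directly with the expansions already established: Lemma~\ref{alphaExpansionX} gives $\alpha_i^\x=\sum_{j'\ge1}\beta_{i,j'}\psi_{j'}(\x)$, so by orthonormality of $\{\psi_{j'}\}$ in $L^2(\mathcal X,G)$,
\[
\int \alpha_i^\x\,\psi_j(\x)\,dG(\x)=\sum_{j'\ge1}\beta_{i,j'}\int\psi_{j'}(\x)\psi_j(\x)\,dG(\x)=\beta_{i,j},
\]
and symmetrically, using $\alpha_j^\theta=\sum_{i'\ge1}\beta_{i',j}\phi_{i'}(\theta)$ and orthonormality of $\{\phi_{i'}\}$ in $L^2(\Theta,F)$, one gets $\int\alpha_j^\theta\phi_i(\theta)\,dF(\theta)=\beta_{i,j}$.

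I expect the only genuinely delicate point to be the justification of the interchange of integration (equivalently, the term-by-term integration of the $L^2$-convergent partial sums in the alternative route); both are routine once one records that $\L(\cdot\,;\cdot)\in L^2(G\times F)$, which is where Assumptions~\ref{assump-hilbertX}--\ref{assump-hilbertz} enter. Everything else is bookkeeping with the definitions of $\alpha_i^\x$, $\alpha_j^\theta$, and $\beta_{i,j}$.
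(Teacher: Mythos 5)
Your main argument is exactly the paper's proof: substitute the defining integrals of $\alpha_i^\x$ and $\alpha_j^\theta$, swap the order of integration, and recognize the double integral defining $\beta_{i,j}$ in Eq.~(\ref{eq::betaExpre}); the paper simply leaves the Fubini justification implicit, which you correctly supply via the RKHS bound $|\L(\x;\theta)|\le c_\theta\sqrt{K_\x(\x,\x)}$ and Assumption~\ref{assump-hilbertX}. Your alternative route through the $L^2$ expansions of Lemmas~\ref{alphaExpansionZ}--\ref{alphaExpansionX} and orthonormality is also valid, but the first route is the one the paper intends.
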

\begin{proof}
 Follows from plugging the definitions of $\alpha_i^\vec{x}$ and $\alpha_j^\theta$ into the expressions above and recalling the definition of $\beta_{i,j}$.
\end{proof}
\vspace{2mm}

\begin{Lemma} 
\label{sobolev}
$\sum_{j\geq J}\int\left(\alpha_j^\theta \right)^2dF(\theta) = c_{K_\x} O(\lambda^\x_J)$ and $\sum_{i\geq I}\int\left(\alpha_i^\x \right)^2dG(\x) = c_{K_\theta} O(\lambda^\theta_I)$.

\end{Lemma}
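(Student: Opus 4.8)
The plan is to reduce both claims to the same scalar inequality used in the proof of Lemma~\ref{sobolevX}, applied pointwise in the ``frozen'' variable, and then to integrate out that variable. I focus on the first identity; the second is entirely symmetric under the exchange of $\x$ and $\theta$.

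First I would fix $\theta\in\Theta$ and work with the slice $g_\theta(\x)=\L(\x;\theta)$. By Lemma~\ref{alphaExpansionZ}, $g_\theta=\sum_{j\ge1}\alpha_j^\theta\psi_j$ in $L^2(\mathcal{X},G)$, with $\alpha_j^\theta=\int g_\theta(\x)\psi_j(\x)\,dG(\x)$. Since the $\psi_j$ are the $L^2(\mathcal{X},G)$-orthonormal eigenfunctions of $\textbf{K}_\x$ with eigenvalues $\lambda_j^\x$, the RKHS norm has the spectral representation $\|g_\theta\|^2_{\mathcal{H}_{K_\x}}=\sum_{j\ge1}(\alpha_j^\theta)^2/\lambda_j^\x$ --- the same fact from \citet{Minh2} invoked in Lemma~\ref{sobolevX}, now applied to $g_\theta$ rather than to $\beta$. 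Assumption~\ref{assump-hilbertX} says precisely that this quantity is at most $c_\theta^2$.

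Next, because $\lambda_1^\x\ge\lambda_2^\x\ge\cdots$, for each fixed $\theta$ I would bound the tail by
\[
\sum_{j\ge J}(\alpha_j^\theta)^2=\sum_{j\ge J}(\alpha_j^\theta)^2\,\frac{\lambda_j^\x}{\lambda_j^\x}\le\lambda_J^\x\sum_{j\ge J}\frac{(\alpha_j^\theta)^2}{\lambda_j^\x}\le\lambda_J^\x\,\|g_\theta\|^2_{\mathcal{H}_{K_\x}}\le\lambda_J^\x\,c_\theta^2,
\]
exactly as in Lemma~\ref{sobolevX}. Integrating over $\theta$ against $F$, and interchanging the (nonnegative) sum and integral by Tonelli's theorem, gives
\[
\sum_{j\ge J}\int(\alpha_j^\theta)^2\,dF(\theta)\le\lambda_J^\x\int_\Theta c_\theta^2\,dF(\theta)=\lambda_J^\x\,c_{K_\x}=c_{K_\x}\,O(\lambda_J^\x),
\]
using the definition $c_{K_\x}=\int_\Theta c_\theta^2\,dF(\theta)<\infty$ from Assumption~\ref{assump-hilbertX}. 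The second identity follows by the same three steps with $\x$ and $\theta$ swapped: use Lemma~\ref{alphaExpansionX}, the spectral representation $\|h_\x\|^2_{\mathcal{H}_{K_\theta}}=\sum_{i\ge1}(\alpha_i^\x)^2/\lambda_i^\theta$, Assumption~\ref{assump-hilbertz} to bound it by $c_\x^2$, monotonicity of the $\lambda_i^\theta$, and $\int_{\mathcal{X}}c_\x^2\,dG(\x)=c_{K_\theta}<\infty$.

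There is essentially no obstacle here; the only items needing a word of care are (i) that the weighted-sum identity for the RKHS norm is applied to the slices $g_\theta$ and $h_\x$, which lie in $\mathcal{H}_{K_\x}$ and $\mathcal{H}_{K_\theta}$ respectively by Assumptions~\ref{assump-hilbertX}--\ref{assump-hilbertz}, and (ii) the routine justification, via nonnegativity, of exchanging the infinite sum with the integral over the frozen variable.
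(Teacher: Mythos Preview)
Your proof is correct and follows essentially the same argument as the paper: fix $\theta$, use the spectral representation of the RKHS norm $\|g_\theta\|^2_{\mathcal{H}_{K_\x}}=\sum_{j\ge1}(\alpha_j^\theta)^2/\lambda_j^\x$, bound the tail via monotonicity of the eigenvalues and Assumption~\ref{assump-hilbertX} to get $\sum_{j\ge J}(\alpha_j^\theta)^2\le\lambda_J^\x c_\theta^2$, then integrate in $\theta$ and invoke $c_{K_\x}=\int c_\theta^2\,dF(\theta)$. Your added remarks on Tonelli and on why the slices lie in the respective RKHSs are minor clarifications the paper leaves implicit, but the route is the same.
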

\begin{proof}
Note that $||h_\theta(.)||^2_{\mathcal{H}_{K_\x}} = \sum_{j \geq 1} \frac{\left(\alpha^\theta_j\right)^2}{\lambda^\x_j}$ .
Using Assumption \ref{assump-hilbertX} and that  the eigenvalues are decreasing it follows that
\begin{align*}
\sum_{j \geq J} \left(\alpha^\theta_j\right)^2=  \sum_{j \geq J} \left(\alpha^\theta_j\right)^2 \frac{\lambda^\x_j}{\lambda^\x_j} \leq \lambda^\x_J ||h_\theta(.)||^2_{\mathcal{H}_{K_\x}} \leq \lambda^\x_J c^2_\theta,
\end{align*}
and therefore $\sum_{j\geq J}\int\left(\alpha_j^\theta \right)^2dF(\theta) \leq  \lambda^\x_J  \int_z c_\theta^2 dF(\theta)= c_{K_\x} O(\lambda^\x_J)$. The proof of the second statement
is analogous to this.

\end{proof}
\vspace{2mm}

\begin{thm}
\label{thm-bias2}
Under the stated assumptions, the bias is bounded by
 $$B(\L_{I,J},\L) = c_{K_\x}O\left(\lambda^\x_J\right) +  c_{K_\theta}O (\lambda^\theta_I).$$
 
\end{thm}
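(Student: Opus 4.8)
The plan is to mimic the bias argument for the density ratio estimator (Theorem~\ref{thm-bias}), exploiting the orthonormality of the tensor-product basis $\{\Psi_{i,j}\}_{i,j}$ in $L^2(\Theta\times\mathcal{X},F_\theta\times G)$ to reduce the bias to a tail sum of squared Fourier coefficients. First I would check that $\L\in L^2(\Theta\times\mathcal{X},F_\theta\times G)$: by the reproducing property, Cauchy--Schwarz, and boundedness of $K_\x$ (exactly as in Lemma~\ref{boundBeta}), one gets $|\L(\x;\theta)| = |g_\theta(\x)| \le c_\theta\sqrt{K_\x(\x,\x)} \le C\,c_\theta$ for some $C>0$, hence $\iint \L(\x;\theta)^2\,dG(\x)\,dF(\theta) \le C^2\!\int c_\theta^2\,dF(\theta) = C^2 c_{K_\x} < \infty$ by Assumption~\ref{assump-hilbertX}. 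Then Parseval's identity gives $\L = \sum_{i,j\ge1}\beta_{i,j}\Psi_{i,j}$ in $L^2$, and since $\L_{I,J} \equiv \sum_{i=1}^{I}\sum_{j=1}^{J}\beta_{i,j}\Psi_{i,j}$ is the orthogonal projection of $\L$ onto $\mathrm{span}\{\Psi_{i,j}: i\le I,\ j\le J\}$, we obtain
\[
B(\L_{I,J},\L) = \iint\big(\L_{I,J}(\x;\theta)-\L(\x;\theta)\big)^2 dG(\x)\,dF(\theta) = \sum_{\substack{i,j\ge1\\ i>I\ \text{or}\ j>J}}\beta_{i,j}^2 .
\]

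Next I would split the two-dimensional tail index set into the disjoint union $\{j>J\}\,\cup\,\{j\le J,\ i>I\}$, yielding
\[
B(\L_{I,J},\L) = \sum_{j>J}\sum_{i\ge1}\beta_{i,j}^2 \;+\; \sum_{i>I}\sum_{j\le J}\beta_{i,j}^2 \;\le\; \sum_{j>J}\sum_{i\ge1}\beta_{i,j}^2 \;+\; \sum_{i>I}\sum_{j\ge1}\beta_{i,j}^2 .
\]
By Lemma~\ref{alphaExpansionZ}, $\sum_{i\ge1}\beta_{i,j}^2 = \int(\alpha_j^\theta)^2\,dF(\theta)$, so the first sum is $\sum_{j>J}\int(\alpha_j^\theta)^2\,dF(\theta) \le \sum_{j\ge J}\int(\alpha_j^\theta)^2\,dF(\theta)$; by Lemma~\ref{alphaExpansionX}, $\sum_{j\ge1}\beta_{i,j}^2 = \int(\alpha_i^\x)^2\,dG(\x)$, so the second sum is $\sum_{i>I}\int(\alpha_i^\x)^2\,dG(\x) \le \sum_{i\ge I}\int(\alpha_i^\x)^2\,dG(\x)$.

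Finally I would invoke Lemma~\ref{sobolev}, which (using Assumptions~\ref{assump-hilbertX} and~\ref{assump-hilbertz} together with the monotonicity of the eigenvalues) bounds $\sum_{j\ge J}\int(\alpha_j^\theta)^2\,dF(\theta) = c_{K_\x}O(\lambda^{\x}_J)$ and $\sum_{i\ge I}\int(\alpha_i^\x)^2\,dG(\x) = c_{K_\theta}O(\lambda^{\theta}_I)$, and conclude $B(\L_{I,J},\L) = c_{K_\x}O(\lambda^{\x}_J) + c_{K_\theta}O(\lambda^{\theta}_I)$. The only genuinely delicate steps are the first one --- establishing $L^2$ membership so that Parseval applies to the tensor basis --- and the bookkeeping in the index-set decomposition that trades the two-dimensional tail $\{i>I\ \text{or}\ j>J\}$ for two one-dimensional tails; once those are in place the result is an immediate consequence of the already-established Lemmas~\ref{alphaExpansionZ}, \ref{alphaExpansionX}, and~\ref{sobolev}.
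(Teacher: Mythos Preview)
Your argument is correct and follows essentially the same route as the paper's proof: reduce the bias to the tail sum of squared Fourier coefficients by orthonormality, bound it by the two one-dimensional tails $\sum_{j>J}\sum_{i\ge1}\beta_{i,j}^2 + \sum_{i>I}\sum_{j\ge1}\beta_{i,j}^2$, identify each via Lemmas~\ref{alphaExpansionZ} and~\ref{alphaExpansionX}, and conclude with Lemma~\ref{sobolev}. Your added verification that $\L\in L^2(\Theta\times\mathcal{X},F_\theta\times G)$ and your explicit index-set decomposition are a bit more careful than the paper's presentation, but the substance is identical.
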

\begin{proof}
By using orthogonality, we have that
\begin{align*}
 B(\L_{I,J},\L) &\overset{\mbox{\tiny{def}}}{=} \iint \left(\L(\vec{x};\theta)-\L_{I,J}(\vec{x},\theta) \right)^2 dG(\vec{x})dF(\theta) \leq \sum_{j>J}\sum_{i \geq 1}\beta^2_{i,j}+\sum_{i>I}\sum_{j \geq 1}\beta^2_{i,j} \\
 &= \sum_{j\geq J}\int\left(\alpha_j^\theta \right)^2dF(\theta)+\sum_{i\geq  I}\int\left(\alpha_i^\vec{x} \right)^2dG(\vec{x}),
\end{align*}
where the last equality follows from Lemmas \ref{alphaExpansionZ} and \ref{alphaExpansionX}. The Theorem follows from Lemma \ref{sobolev} .
\end{proof}
\vspace{2mm}

}

\end{document}